\documentclass[a4paper,twoside,notitlepage,makeidx]{article}
\usepackage{a4}
\usepackage[english]{babel}
\usepackage{color,graphicx}
\usepackage{subfigure}
\usepackage{tikz-cd}
\usepackage{amsmath,amssymb,amsthm,amsfonts}
\usepackage{hyperref}
\usepackage{mathrsfs}
\usepackage{verbatim}
\usepackage{tabularx}
\usepackage{wasysym}
\usepackage{soul}
\usepackage{makeidx}

\newcommand{\norm}[1]{\left\lVert #1\right\rVert}

\newcommand{\txt}[1]{\quad\textnormal{#1}\quad}

\newcommand{\h}[0]{\mathcal{H}}

\newcommand{\p}[0]{\mathcal{P}}

\newcommand{\A}[0]{\mathcal{A}}
\newcommand{\V}[0]{\mathcal{V}}
\newcommand{\R}[0]{\mathbb{R}}

\newcommand{\C}[0]{\mathbb{C}\;}
\newcommand{\N}[0]{\mathbb{N}}

\newcommand{\OO}[0]{\mathcal{O}}

\newcommand{\uli}[1]{\underline{#1}}
\newcommand{\oli}[1]{\overline{#1}}

\newcommand{\ra}[1]{\stackrel{#1}{\longrightarrow}}

\newcommand*{\longhookrightarrow}{\ensuremath{\lhook\joinrel\relbar\joinrel\rightarrow}}
\newtheorem{theorem}{Theorem}[section]
\newtheorem*{theorem*}{Theorem}
\newtheorem{lemma}{Lemma}
\newtheorem*{lemma*}{Lemma}

\newtheorem*{claim*}{Claim}

\newtheorem{corollary}[theorem]{Corollary}
\theoremstyle{definition}
\newtheorem{definition}{Definition}
\newtheorem*{definition*}{Definition}

\newtheorem{remark}[theorem]{Remark}

\setcounter{secnumdepth}{5}
\setcounter{tocdepth}{5}

\makeindex

\begin{document}

\begin{titlepage}

\begin{center}\Huge{\textbf{Topos-Theoretic Approaches to Quantum Theory}}\par
\vskip0.3cm
\large Part III Essay \par
\vskip1cm
Matthijs V\'ak\'ar\par
\vskip0.3cm
\large St John's College\par
\vskip 0.3cm
\large University of Cambridge\par
\vskip 0.3cm
2 May, 2012.
\end{center}
\vskip 3.5 cm
\small{ I declare that this essay is work done as part of the Part III Examination. I have read and understood the Statement on Plagiarism for Part III and Graduate Courses issued by the Faculty of Mathematics, and have abided by it. This essay is the result of my own work, and except where explicitly stated otherwise, only includes material undertaken since the publication of the list of essay titles, and includes nothing which was performed in collaboration. No part of this essay has been submitted, or is concurrently being submitted, for any degree, diploma or similar qualification at any university or similar institution.\\
\\
Signed .....................................}
\vskip 1cm
\small{\noindent Dr. A. Kuyperlaan 31, 3818JB Amersfoort, The Netherlands}
\newpage
\thispagestyle{empty}
\begin{center}\Huge{
\textbf{Topos-Theoretic Approaches to Quantum Theory}}\par
\end{center}
\newpage
\thispagestyle{empty}
\noindent \textbf{To the Chairman of Examiners for Part III Mathematics.}\\
\\
Dear Sir,
I enclose the Part III essay of .......................................\\
\\
Signed ....................................... (Director of Studies)
\clearpage
\begin{abstract}
Starting from a naive investigation into the nature of experiments on a physical system, one can argue that states of the system should pair non-degenerately with physical observables. This duality is closely related to that between space and quantity, or between geometry and algebra. In particular, it is grounded in the mathematical framework of both classical and quantum mechanics in the form of a pair of duality theorems by Gelfand and Naimark. These theorems allow us to construct a classical phase space, a compact Hausdorff space, for an algebra of classical observables. In the case of quantum mechanics, this construction breaks down due to the non-commutative nature of the algebra of quantum observables. However, we can construct a Hilbert space as the geometry underlying quantum mechanics.

Although this Hilbert space approach to quantum mechanics has proven to be very effective, it does have its drawbacks. In particular, they arise when one associates propositions with observables and investigates what kind of logical structure they form. One way of doing this is by realising the propositions as certain subsets of the phase space. In classical mechanics this procedure indeed gives one the structure one would expect: a Boolean algebra. However, although the case of quantum mechanics yields a well-behaved mathematical structure, an orthocomplemented lattice, the physical interpretation of this logic is rather subtle, due to its crude notion of truth.

Recent work by Isham, Butterfield, Doering, Landsman, Spitters, Heunen et al., attempting to address these problems, has led to an alternative method for dealing with non-commutative algebras of observables and, with it, to an alternative framework for quantum kinematics. Moreover, this approach stays much closer to our intuition from classical physics; in some sense, the motto is: Quantum kinematics is exactly like classical kinematics, that is, not in Set, but internal to some other topos!

This review paper gives an introduction to the subject.
\end{abstract}
\end{titlepage}

\clearpage
\tableofcontents
\clearpage
\section{Introduction}
A theory of physics should be grounded in experiments: we formulate a \emph{hypothesis}, we perform measurements on a prepared \emph{system} by carrying out a certain \emph{experiment} on it, and we compare the two to see if our experiment falsifies our hypothesis. Somehow, these three concepts should correspond (not necessarily bijectively) to the mathematical concepts of \emph{proposition}, \emph{state} and \emph{observable}, respectively. A theory of physics should specify which systems, experiments and hypotheses we deal with.

Starting from a naive investigation into the nature of experiments on a physical system, one can argue that, regardless of what the theory is, observables should pair non-degenerately with states. One can argue that the observables naturally embed in a C*-algebra on which the states act as linear functionals. The crucial difference between classical and quantum mechanics is found in the Heisenberg uncertainty relation, which entails that the quantum algebra should be a non-commutative one.

Theorems by Gelfand and Naimark allow one to construct a geometry, a so-called phase space, underlying the states: a compact Hausdorff space in the case of classical mechanics and a Hilbert space for quantum mechanics, given by the GNS-construction. For this reason, it might be useful to think of classical mechanics taking place internally in some category of topological spaces, whereas some category of inner product spaces is the domain of quantum mechanics. I mean this in the sense that the categorical structures at hand in both categories (e.g. objects, morphisms, (co)limits, monoidally closed structure) have similar physical interpretations.

This is a very reasonable but traditional approach to explaining the similarities between classical and quantum mechanics. However, it has its drawbacks. In particular, one can associate \emph{propositions} with observables and investigate what kind of logical structure they form. One way of doing this is by realising the propositions as certain subsets of the phase space. In classical mechanics this procedure indeed gives one the structure one would expect: a Boolean algebra. However, although the case of quantum mechanics, which was first dealt with by Birkhoff and von Neumann \cite{birlog}, yields a well-behaved mathematical structure, an orthocomplemented lattice, which has since been intensively studied, the physical interpretation of this logic is rather subtle, due to its crude notion of truth.

This, more than anything else\footnote{Of course, there are also other issues with the treatment of quantum mechanics in $\mathrm{Vect}$. For instance, the GNS-construction, unlike the commutative Gelfand-Naimark correspondence, does not extend to a functor. Even worse, it is non-canonical: the construction depends on the choice of designated states. Of course, one can take all states to obtain the universal GNS-representation. However, due to size issues, this is often not a practical Hilbert space to work with.}, raises the question of whether the category of vector spaces really is the right framework for a theory of quantum mechanics, at least if one wants to do logic. Recent work by Isham, Butterfield, Doering, Landsman, Heunen, Spitters et al. (e.g. \cite{buttop,doetop1,doetop2,doetop3,hlstop}) suggests that a fundamentally different approach might lead to new insights. In their work, the quantum phase space is not a Hilbert space, but a compact Hausdorff space\footnote{Rather, a completely regular locale.} internal to a topos different from $\mathrm{Set}$, namely the topos of (co)presheaves over the poset of commutative subalgebras of the algebra of observables\footnote{As we will explain, this is strongly inspired by Bohr's doctrine of classical contexts.}. One then constructs a quantum phase space as a topological space internal to this topos by a version of the commutative Gelfand-Naimark correspondence. In this way, the GNS-construction and, with it, the framework of vector spaces are avoided. This leads to an approach to quantum logic that is, as I will argue, more reasonable from an operational point of view. Moreover, apart from stressing a very strong parallel with classical mechanics, it also gives insight into the origin of fundamental differences between the classical and quantum worlds.

However, there are currently two competing approaches that realise this idea in slightly different ways. The first approach, which I shall call `the contravariant approach', was developed by Isham and Butterfield and later by Doering and Isham, and uses the topos of presheaves on the poset of subalgebras. The second approach, `the covariant approach', is due to Heunen, Landsman and Spitters and uses the topos of copresheaves on this poset. The two approaches were recently extensively compared by Wolters \cite{wolcom} and appear to be compatible in a number of ways.

This review paper attempts to give a brief outline of the subject, culminating in a comparison of the contravariant and covariant approaches. Rather than exploring new territory, the aim of this text is to give an overview, leaving out technicalities where possible and focusing on ideas rather than calculations. All omitted details can be found in the excellent papers in the bibliography.

\begin{quotation}
\emph{``Well, \emph{I} never heard it before,'' said the Mock Turtle; ``but it sounds uncommon nonsense.''\\ }\color{white}{lalalalalalalalalalalalalallalalalalala}\color{black} \emph{ (Lewis Carroll's `Alice's Adventures\\\color{white}{lalalalalalalalalalalalalallalalalalalalala}\color{black}  in Wonderland', Chapter X.)}
\end{quotation}
\textbf{Acknowledgement}\\
First of all, I would like to point out that very few of the results presented in this paper are
original. This text is mainly meant to serve as an introduction to the subject, reviewing established results.

Second, I would like to thank Benoit Dherin for his passionate lectures on
mathematical quantum mechanics, which first got me interested in the subject. His enthusiasm for the subject has been very contagious. My gratitude also goes to Chris Heunen and Dr. D.J.H. Garling for pointing me towards relevant literature and, of course, to my supervisor Professor P.T. Johnstone for making my year in Cambridge so much more pleasant by enabling me to study this material as part of my course.

\section{Naive Physics (Syntax)}\label{sec:syntax}
Let us forget about the standard formalism for theories of physics for a moment. In this informal section we shall take a naive operational point of view (in the sense of Bridgman) to motivate a minimal core of syntax\footnote{We will not make a strict distinction between syntax and semantics. In particular, we shall not work out the details of the syntactic side of the story. Instead we hope that our informal comments in this section will be enough to transfer some intuitions about the formalism used in theories of physics.} that should be given an interpretation in each theory of physics. We hope that this philosophical background will help the reader to appreciate the topos approaches to quantum physics.\\
\\
Physical theories originate in experimental practice. We perform experiments on prepared systems to obtain outcomes\footnote{The sort $\{\mathrm{outcomes}\}$ will be interpreted as some object of real numbers in a topos.}: we have sorts\footnote{In the sense of many-sorted logics.} $\{\mathrm{systems}\}$, $\{\mathrm{experiments}\}$ and $\p\{\mathrm{outcomes}\}$ and a function symbol
$$\{\mathrm{systems}\},\{\mathrm{experiments}\}\ra{\mathrm{Measurement}}\{\mathrm{outcomes}\}.$$
The same information contained in the outcomes of all measurements should be captured by the truth status of all hypotheses\footnote{In an interpretation in a topos this is obtained from the last line by the counit of the exponential adjunction.}: we can also formulate physics using a function symbol
$$\{\mathrm{systems}\},\{\mathrm{experiments}\},\p\{\mathrm{outcomes}\}\ra{\mathrm{Hypothesis}}\Omega.$$
Here, $\Omega$ is some abstract sort of truth values\footnote{In the case of classical mechanics, it is clear that it should just be the set $\{\bot,\top\}$. In our topos interpretation of quantum kinematics (i.e. everything except dynamics) it will be the subobject classifier of another topos.}, the sort $\p\{\mathrm{outcomes}\}$ represents the idea of an object of subobjects of $\{\mathrm{outcomes}\}$ and the term $\mathrm{Hypothesis}(X,Y,\Delta)$ should be read as the statement ``If we perform experiment $X$ on system $Y$, we will obtain a measured value in $\Delta\in\p\{\mathrm{outcomes}\}$.''

From a physical point of view it is clear that these pairings will generally be degenerate in some sense. Indeed, there might be two different experimental procedures (described in some alphabet) that yield the same measurement results on all physical systems or, equivalently, assign the same truth value to all associated hypotheses. In that case, we would like to think that we are actually measuring the same `physical quantity'. The equivalence classes under this equivalence relation should correspond one-to-one with what physicists call \emph{observables}\footnote{One can compare the distinction between experiments and observables with that between functions in intension and functions in extension.}. Let us write $\A$ for the sort of observables. Similarly, we would like to identify certain (descriptions of) systems when they agree on all hypotheses involving all experiments (or observables). We say that they are in the same \emph{state}. We write $\mathscr{S}$ for the sort of all states. These identifications should induce pairings, i.e., function symbols
$$\A,\mathscr{S}\ra{}\{\mathrm{outcomes}\}$$
and
$$\mathscr{S},\A,\p\{\mathrm{outcomes}\}\ra{\mathrm{Proposition}}\Omega.$$
The interpretation of these pairings should be non-degenerate in some sense. The important point is that, from purely practical, syntactic considerations, it follows that states and observables should be dual to each other.\\
\\
There is also a reasonable way to construct a logical structure out of these data. If we fix terms $a$ of sort $\A$ and $\Delta$ of sort $\p\{\mathrm{outcomes}\}$ in the function symbol $\mathrm{Proposition}$, we obtain a function symbol
$$\mathscr{S}\ra{[a\in\Delta]}\Omega.$$
This should be interpreted as a subobject of the object of states\footnote{In the formalism we will set up, we will also interpret these propositions as certain specific observables. The idea is that in practice (in the interpretation) we have a mono $\Omega\longhookrightarrow\{\mathrm{outcomes}\}$ (for instance, $\{0,1\}\longhookrightarrow \R$) so we can identify subobjects of $\mathscr{S}$ with certain arrows $\mathscr{S}\ra{}\{\mathrm{outcomes}\}$. If we choose our object of observables large enough, we may hope that these arrows are included. (Recall that under currying we have a map $\A\ra{}\{\mathrm{outcomes}\}^\mathscr{S}$.)}. We note that the propositions of this form should give rise to a kind of Lindenbaum-Tarski algebra related to the theory of physics, generated by the operations induced from $\mathrm{Sub}\mathscr{S}$.

\section{Canonical Formalism (Semantics)}
\label{sec:obsstapro}
\subsection{C*-algebras and their States in Physics}
Of course, we will not be able to derive many consequences from the informal foundations of the previous section. We have to narrow down the concepts of observable and state. We therefore turn to the theory of C*-algebras, where these ideas take concrete form. From now on, C*-algebras and $*$-homomorphisms are understood to be unital unless explicitly stated otherwise. The formalisms we will set up in this section for classical and quantum kinematics are among the canonical ones and are particularly popular among mathematical physicists. The reader might want to think of them as an attempt at constructing a semantics that represents the core syntax we presented in the previous section.

There are good physical (operational) arguments to support the idea that the set of observables can be embedded in (the set of self-adjoint elements of) a C*-algebra (over $\C$), which we shall also denote by $\A$. Here a polynomial $p(a)$ in $a\in\A$ is interpreted as first performing a measurement of $a$ and then applying the polynomial to the measurement outcome. The operational interpretations of $a+b$ and $a\cdot b$ for arbitrary observables $a$ and $b$ are more subtle. The measurement of the first observable might alter the state of the system we are trying to measure and therefore disturb the measurement of the second observable. This certainly does not agree with the interpretation of sums of observables in existing theories of physics. Identically prepared states do not really solve this problem either, according to the various no-cloning theorems known in physics (e.g. \cite{diecom}).

In this picture states $\omega$ are interpreted as giving the expectation value $\omega(a)$ for measurements of observables $a$ on the corresponding system. One can argue that these should form linear functionals on $\A$. (Linearity over polynomials in one observable, at least, should be clear.) To agree with the physical intuition of an expectation value, we want states to be positive functionals, i.e. they should yield positive results on positive elements of $\A$ (i.e. elements of the form $a^* a$). Then, the self-adjoint elements of $\A$ will be the candidates for observables. In this interpretation, the constant observable $1$ should have expectation value $1$ in every state. This means that $\omega(1)=1$, for all $\omega$. This leads to the following definition.
\begin{definition}[State on a C*-algebra] By a \emph{state on a C*-algebra} we mean a positive normalised linear functional. It is easily checked that these form a convex set. The extreme points of this set are called \emph{pure states}, while the term \emph{mixed state} is used to refer to a state that need not be pure\footnote{This terminology reflects the physics. Pure states represent the physical idea of a state about which we have complete knowledge, as far as our theory of physics allows. (E.g. in quantum mechanics, even pure states exhibit statistical behaviour.) Mixed states generally represent statistical (ignorance) ensembles of states.}.\end{definition}
For self-adjoint $a$, one can show that $\norm{a}=\sup_{\omega\in\mathscr{S}}|\omega(a)|$; in general, $\norm{a}^2=\sup_{\omega\in\mathscr{S}}\omega(a^*a)$. Moreover, it is easily verified that the states separate the elements of $\A$ and conversely, as desired. I hope that this brief digression gives the reader enough motivation to believe that a C*-algebra framework is not too restrictive for a theory of physics. An excellent motivation for the use of C*-algebras in physics (including all the details I omitted) can be found in chapter 1 of \cite{strmat}.

\subsection{Classical and Quantum C*-algebras}
As it turns out, the C*-algebra of observables of classical mechanics should be commutative, while we use a non-commutative C*-algebra to describe quantum mechanics. Strocchi \cite{strmat} gives a useful argument for this fact, starting from Heisenberg's uncertainty relation. It goes as follows. Recall that we gave a state $\omega$ on a C*-algebra $\A$ the following physical interpretation: if $a$ is a self-adjoint element of $\A$, then we interpret $\omega(a)$ as the expectation value of a measurement of $a$ on a system in state $\omega$. Note that the variance $\Delta_\omega(a)^2$ would therefore be $\omega(a^2)-\omega(a)^2$.

Now, Heisenberg's uncertainty relation states that
$$\Delta_\omega (q_j)\Delta_\omega(p_j)\geq \hbar /2,$$ where $q_j$ and $p_j$ denote, respectively, the position and momentum observables in the $j$-th direction. Since the bound given by the Heisenberg uncertainty relation is independent of the state, it is natural to assume that its origin can be found in the algebra of observables.

Let $a,b\in \A$ be such that $a^*=a$ and $b^*=b$. Put $\tilde a=a-\omega(a)1$ and $\tilde b=b-\omega(b)1$. Since $(\tilde a-i\lambda \tilde b)(\tilde a+i\lambda \tilde b)$ is a positive element for real $\lambda$, positivity of $\omega$ implies that
$$\Delta_\omega(a)^2+\lambda^2\Delta_\omega(b)^2+i\lambda\omega([a,b])\geq 0,$$
where $[a,b]:=ab-ba$. This tells us that the last term is real and therefore, by non-negativity of the quadratic form in $\lambda$, $4\Delta_\omega(a)^2\Delta_\omega(b)^2\geq |\omega(i[a,b])|^2$ or, put differently,
$$\Delta_\omega(a)\Delta_\omega(b)\geq |\omega([a,b])|/2.$$
It is natural to demand that the algebraic bound, which follows from the mathematical formalism, coincides with the experimentally determined bound given by the Heisenberg uncertainty relation: $\hbar=|\omega([q_j,p_j])|$, for all states $\omega$. This implies that
$$[q_j,p_k]=\pm i\hbar \delta_{jk}1,$$
where $\delta_{jk}$ denotes the Kronecker delta. Therefore, we conclude that the algebra of quantum observables should be a non-commutative one\footnote{This heuristic argument also shows one of the weaknesses of the C*-algebra formalism. Indeed, if one has two self-adjoint elements $q,p\in\A$ such that $[q,p]=\lambda\cdot 1$, where $\lambda$ is some constant, one easily derives that $\norm{q}\norm{p}\geq n|\lambda|/2$, for all $n\in \N$, i.e. this cannot happen. In practice this means that either $q$ or $p$ will be an unbounded operator, thereby falling outside our C*-algebra formalism. However, one can argue that because of the practical restrictions on our measurements and relativistic considerations, the observables we measure in practice are in fact bounded approximations of $q$ and $p$. Cf. the next footnote.}.\\
\\
Our algebra of observables is a very abstract entity. As is often the case in algebra, we would like to think of the elements of our algebra as functions of some kind. The typical example of a commutative C*-algebra one might come up with is perhaps $\C$, or, more generally, some algebra of bounded continuous functions from a fixed space to $\C$. In the non-commutative case one would perhaps think of an algebra of matrices with complex coefficients or, more generally, an algebra $\A\subset \mathcal{B}(\h)$ of bounded operators on some Hilbert space $\h$. As it turns out, these are, in fact, the only examples. This was proved by Gelfand, Naimark (and I. Segal).

\begin{theorem}[(Commutative) Gelfand-Naimark theorem,\cite{johsto}]\label{thm:comgelnai} Write $\mathrm{cCStar}$ for the category of unital commutative complex C*-algebras with unital $*$-homomorphisms and $\mathrm{CHaus}$ for that of compact Hausdorff topological spaces. Then the functors
$$\mathrm{cCStar}^{op}\mathop{\leftrightarrows}^{C(-,\C)}_{\mathrm{Max}}\mathrm{CHaus}$$
define an equivalence of categories, where $C(-,\C)\subset \mathrm{Set}(-,\C)$ sends $\Sigma$ to the algebra of continuous functions from $\Sigma$ to $\C$ and $\mathrm{Max}$ sends an algebra $A$ to its maximal spectrum, equipped with the Gelfand topology and a unital $*$-homomorphism $A\ra{f}A'$ to a continuous map $\mathrm{Max}(A')\ra{\mathrm{Max}(f)}\mathrm{Max}(A)$ that sends a maximal ideal to its inverse image under $f$.
\end{theorem}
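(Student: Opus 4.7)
The plan is to construct a unit and counit for the claimed adjoint equivalence and verify that both are natural isomorphisms, checking functoriality along the way. First I would confirm the functors are well-defined: for $\Sigma\in\mathrm{CHaus}$, $C(\Sigma,\C)$ with the supremum norm and pointwise operations is a commutative unital C*-algebra, and pullback along a continuous map is a unital *-homomorphism. For $\mathrm{Max}$, the key input is the Gelfand--Mazur theorem: every maximal ideal $\mathfrak{m}\subset A$ is automatically closed (invertibles are open in a Banach algebra) and the quotient $A/\mathfrak{m}$ is a complex Banach field, hence equal to $\C$, yielding a character $\chi_\mathfrak{m}\colon A\to\C$. This identifies $\mathrm{Max}(A)$ with the character space, which as a weak-$*$ closed subset of the unit ball of $A^*$ is compact by Banach--Alaoglu and Hausdorff for free; one checks by a direct unwinding that the weak-$*$ topology coincides with the Zariski (hull-kernel) topology.

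Next I would define the candidate unit $\eta_\Sigma\colon\Sigma\to\mathrm{Max}(C(\Sigma,\C))$ by $x\mapsto\mathfrak{m}_x:=\{f\mid f(x)=0\}$, and the counit (the Gelfand transform) $\varepsilon_A\colon A\to C(\mathrm{Max}(A),\C)$ by $a\mapsto\hat a$, where $\hat a(\mathfrak{m}):=\chi_\mathfrak{m}(a)$. Naturality in both cases follows by directly unwinding the definitions. To see that $\eta_\Sigma$ is a homeomorphism, injectivity and continuity come from Urysohn's lemma, which separates points by continuous functions. Surjectivity is the delicate part: if a maximal ideal $I\subset C(\Sigma,\C)$ were contained in no $\mathfrak{m}_x$, then for each $x$ one picks $f_x\in I$ with $f_x(x)\neq 0$, extracts by compactness a finite subfamily $f_{x_1},\ldots,f_{x_n}$ with no common zero on $\Sigma$, and notes that $\sum f_{x_i}^*f_{x_i}\in I$ is strictly positive and hence invertible, contradicting $I\neq C(\Sigma,\C)$. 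Closedness of the bijection, and hence homeomorphy, follows from compactness of $\Sigma$.

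The main obstacle is showing that the Gelfand transform $\varepsilon_A$ is a *-isomorphism, which is where the C*-identity is essential. That characters are *-preserving follows from $\chi(a)\in\mathrm{spec}(a)$, which is real for $a=a^*$, together with linear extension. The critical step is the isometry $\norms{\hat a}=\norm{a}$. For self-adjoint $a$ one combines $\norms{\hat a}=r(a):=\sup\{|\lambda|:\lambda\in\mathrm{spec}(a)\}$ (since $\mathrm{spec}(a)$ coincides with the image of $\hat a$ in the commutative case) with the spectral radius formula $r(a)=\lim_n\norm{a^n}^{1/n}$ and the repeated C*-identity $\norm{a^{2^k}}=\norm{a}^{2^k}$ for self-adjoint $a$, obtaining $r(a)=\norm{a}$; for general $a$, one applies this to $a^*a$ and invokes the C*-identity once more. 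Isometry immediately yields injectivity and a closed image.

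Surjectivity of $\varepsilon_A$ then follows from the Stone--Weierstrass theorem: the image is a unital subalgebra of $C(\mathrm{Max}(A),\C)$ that separates points (distinct maximal ideals induce distinct characters) and is closed under complex conjugation (since $\widehat{a^*}=\overline{\hat a}$), hence dense; combined with closedness under the sup norm this gives all of $C(\mathrm{Max}(A),\C)$. With $\eta$ and $\varepsilon$ both natural isomorphisms, the equivalence of categories is established, and the contravariance of the functors is visible in the fact that both $C(-,\C)$ and $\mathrm{Max}$ reverse composition.
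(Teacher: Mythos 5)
The paper does not actually prove this theorem; it is stated as a citation to Johnstone's \emph{Stone Spaces}, so there is no in-text argument to compare against. Your proposal is the standard functional-analytic proof (Gelfand--Mazur and Banach--Alaoglu for the spectrum, the compactness argument for surjectivity of the unit, spectral radius plus the C*-identity for isometry of the Gelfand transform, Stone--Weierstrass for its surjectivity), and it is essentially correct and complete at the level of detail appropriate here. One small ordering caveat: the identification of the weak-$*$ topology with the hull--kernel (Zariski) topology on $\mathrm{Max}(A)$ is not a ``direct unwinding'' --- the hull--kernel topology is coarser in general, and their coincidence for commutative C*-algebras amounts to regularity of the algebra, which you only have in hand once the Gelfand transform is known to be surjective (so that Urysohn functions on $\mathrm{Max}(A)$ lift to elements of $A$). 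Logically this check belongs at the end of the argument rather than in the well-definedness step; everything else goes through as written.
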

Maximal ideals in a commutative unital C*-algebra correspond precisely to non-zero unital $*$-homomorphisms to $\C$, or pure states. (See theorem \ref{thm:maxsp}.) When viewed this way, the maximal spectrum is commonly referred to as the Gelfand spectrum of the C*-algebra.

We would like to think of this space $\Sigma$ as the phase space of classical mechanics\footnote{A physicist might object that the phase space in classical mechanics is usually a cotangent bundle of some manifold $Q$ and therefore not compact. The compactness of our space reflects the fact that we started out with a normed algebra of observables. This means that we restricted our observables to be bounded continuous functions on our phase space. This is a reasonable thing to do since we only expect our classical physics to be applicable in a relatively small region of space. (At large distances general relativistic effects become relevant.) Similarly, we do not expect classical physics to hold for particles with very high momentum. In that domain one would also have to use a theory of relativity. One way to think of the relation between our compact phase space $\Sigma$ and the usual space $T^*Q$ from physics is that $C_b(T^*Q,\C)\cong C(\Sigma,\C)$, or equivalently $\Sigma=\mathrm{Max}(C_b(T^*Q,\C))$, where $C_b(T^*Q,\C)$ is the C*-algebra of bounded continuous functions $T^*Q\ra{}\C$. Note that this is just the definition of the Stone-\v Cech compactification! Since $T^*Q$ is locally compact Hausdorff, it embeds in $\Sigma$ as an open subspace (by the unit of the Stone-\v Cech-adjunction). \cite{johsto}} and of $C(\Sigma,\C)$ as the algebra of observables.
In view of the duality between states and observables, we should also explain what the geometric analogue of a mixed state is. The answer is given by a celebrated theorem by Riesz and Markov. This brings us back to the familiar situation in (statistical) classical physics, where states are probability distributions on phase space.
\begin{theorem}[Riesz-Markov,\cite{strmat}] Let $\Sigma$ be a compact Hausdorff space. Then there is a one-to-one correspondence between states $\omega$ on $C(\Sigma,\C)$ and Radon probability measures\footnote{That is, finite regular Borel measures of total mass $1$.} $\mu_\omega$ on $\Sigma$, the correspondence being
$$\omega(f)=\int_\Sigma f d\mu_\omega. $$
\end{theorem}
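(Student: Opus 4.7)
The plan is to verify the two directions of the correspondence separately: the map $\mu \mapsto \omega_\mu$ defined by $\omega_\mu(f) := \int_\Sigma f\, d\mu$ is well-defined and injective, and every state arises this way. Since states satisfy $\omega(1)=1$, one should read ``radon measures'' here as \emph{radon probability measures}. The easy direction is immediate: for a probability Radon measure $\mu$, linearity of the integral gives a linear functional $\omega_\mu$; positivity $\omega_\mu(g^*g)\geq 0$ follows from $|g|^2\geq 0$ pointwise; and normalisation is $\mu(\Sigma)=1$. Continuity of $f$ together with finiteness of $\mu$ guarantees integrability, so this direction is routine.

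For the hard direction, given a state $\omega$ I would construct $\mu_\omega$. By $\C$-linearity of $\omega$ it suffices to construct the measure from the restriction of $\omega$ to the real Banach lattice $C(\Sigma,\R)$, on which $\omega$ becomes a positive normalised $\R$-linear functional. Following Riesz's classical recipe, I would first set, for each open $U\subset\Sigma$,
\[
\mu_\omega(U) := \sup\bigl\{\omega(f)\,:\, f\in C(\Sigma,\R),\ 0\leq f\leq 1,\ \mathrm{supp}(f)\subset U\bigr\},
\]
and then extend to an outer measure by $\mu_\omega^*(E) := \inf\{\mu_\omega(U)\,:\,E\subset U,\ U\text{ open}\}$. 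The Carath\'eodory construction then cuts this down to a regular Borel measure on the compact Hausdorff space $\Sigma$. To prove $\omega(f) = \int_\Sigma f\, d\mu_\omega$ for $f\in C(\Sigma,\R)$, I would partition the range of $f$ into small intervals, cover the preimages by open sets on which $f$ varies by less than $\eps$, use Urysohn's lemma (valid because compact Hausdorff implies normal) to build a subordinate partition of unity $\{\psi_i\}$, and pinch the Riemann-like sums $\sum_i f(x_i)\omega(\psi_i)$ between $\omega(f)$ and $\int f\, d\mu_\omega$.

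Uniqueness then follows by a Urysohn-sandwich argument: given two radon probability measures $\mu_1,\mu_2$ with $\int f\, d\mu_1 = \int f\, d\mu_2$ for all continuous $f$, for every closed $F$ and open $U\supset F$ choose an Urysohn function $\chi_F\leq f\leq \chi_U$ so that $\mu_1(F)\leq \int f\,d\mu_1 = \int f\, d\mu_2 \leq \mu_2(U)$; taking the infimum over $U$ by outer regularity and symmetrising gives $\mu_1=\mu_2$ on closed sets, hence on all Borel sets by regularity.

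The main obstacle is the countable additivity of $\mu_\omega$ on Borel sets and the identity $\omega(f) = \int f\, d\mu_\omega$; these are the two places where the argument is not purely formal. Finite subadditivity and $\mu_\omega(U\cup V)\leq \mu_\omega(U)+\mu_\omega(V)$ fall out of the definition, but countable additivity crucially uses compactness of $\Sigma$ to reduce countable covers to finite ones, so that tails of disjoint unions contribute vanishingly. Likewise, one direction of the Riemann-sum pinch, typically $\omega(f)\geq \int f\, d\mu_\omega$, demands careful bookkeeping with the partition of unity and with the defining supremum of $\mu_\omega$ on opens; this is where the bulk of the technical work lives.
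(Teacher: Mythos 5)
The paper does not prove this theorem at all: it is imported as a classical result with a citation to Strocchi, so there is no in-paper argument to compare against. Your sketch is the standard Riesz--Markov--Kakutani construction (outer content on opens via a supremum over subordinate continuous functions, Carath\'eodory extension, Urysohn partitions of unity for the integral identity, and a regularity sandwich for uniqueness), and it is correct in outline; you also rightly observe that normalisation forces the Radon measures in the bijection to be probability measures, a point the paper's statement leaves implicit. The one step you pass over silently is the reduction to $C(\Sigma,\R)$: to know that $\omega$ restricts to a \emph{real-valued} functional there, you need that a positive linear functional is hermitian, i.e.\ $\omega(f)\in\R$ for real $f$. This follows by writing $f=f_+-f_-$ with $f_\pm\geq 0$ and noting $f_\pm=(\sqrt{f_\pm})^*\sqrt{f_\pm}$, so $\omega(f_\pm)\geq 0$; it is a one-line lemma, but without it the decomposition into real and imaginary parts is not justified. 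With that inserted, your proposal is a complete and standard route to the stated result.
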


The construction for non-commutative C*-algebras is not nearly as satisfactory, since it is non-functorial.
\begin{theorem}[GNS-construction,\cite{strmat}] Any C*-algebra $\A$ is *-isomorphic to an algebra of bounded operators $\A'\subset \mathcal{B}(\h)$ on some\footnote{It can be chosen to be separable if $\A$ is.} Hilbert space $\h$. There is a canonical\footnote{We take the sum of the GNS-representations corresponding to all states.} construction to realise this, called the (universal) GNS-representation.\end{theorem}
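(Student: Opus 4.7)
The plan is to construct, for each state $\omega$ on $\A$, a *-representation $\pi_\omega: \A \to B(\h_\omega)$ (the GNS representation associated to $\omega$) on a Hilbert space $\h_\omega$, and then to take the direct sum over enough such representations to obtain an injective *-homomorphism $\pi: \A \to B(\h)$. Injectivity upgrades automatically to isometry, since any *-homomorphism between C*-algebras is contractive and any injective one is isometric (via the continuous functional calculus applied to $a^* a$), so the resulting $\pi$ will be a *-isomorphism onto its image.

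First, given a state $\omega$, I would equip $\A$ with the positive semi-definite sesquilinear form $\ip{a}{b}_\omega := \omega(a^* b)$. A standard argument using the Cauchy-Schwarz inequality for this form together with the C*-bound $c^* c \leq \norm{c}^2 \cdot 1$ shows that $N_\omega := \{a \in \A : \omega(a^* a) = 0\}$ is a closed left ideal of $\A$, so the quotient $\A/N_\omega$ carries an honest inner product; its completion is the Hilbert space $\h_\omega$. The formula $\pi_\omega(a)[b] := [ab]$ then descends to a well-defined linear map on the quotient precisely because $N_\omega$ is a \emph{left} ideal. Boundedness with $\norm{\pi_\omega(a)} \leq \norm{a}$ follows from the inequality $\omega(b^* a^* a b) \leq \norm{a}^2\, \omega(b^* b)$, which is itself a consequence of positivity of the element $\norm{a}^2 \cdot 1 - a^* a$. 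Multiplicativity and *-compatibility of $\pi_\omega$ are then direct verifications.

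To assemble the global representation, take $\pi := \bigoplus_\omega \pi_\omega$ acting on $\h := \bigoplus_\omega \h_\omega$, with $\omega$ ranging over the pure states of $\A$. For injectivity: if $a \neq 0$, then $a^* a$ is a positive element with $\norm{a^* a} = \norm{a}^2 > 0$, and a Hahn-Banach argument (combined with the fact that a norm-one linear functional $\omega_0$ on $\A$ with $\omega_0(1) = 1$ is automatically positive) produces a state $\omega_0$ with $\omega_0(a^* a) = \norm{a^* a}$. Applying Krein-Milman to the weak-* compact convex set of states then yields a pure state $\omega$ still satisfying $\omega(a^* a) > 0$, which witnesses $\pi_\omega(a) \neq 0$ via $\norm{\pi_\omega(a)[1]}^2 = \omega(a^* a) > 0$.

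The main obstacle I anticipate is not the algebraic construction of the $\pi_\omega$, which is largely mechanical once the right sesquilinear form is identified, but rather demonstrating that the pure states separate points of $\A$. This step genuinely requires Banach-space input: Hahn-Banach to manufacture an extending functional, the continuous functional calculus to upgrade the extension to an honest positive normalised state, and Krein-Milman to descend from arbitrary states to pure ones. The non-canonical and non-functorial character of the GNS construction noted in the footnotes of the excerpt is traceable directly to this ingredient, since many different separating families of pure states can be chosen and none is preferred.
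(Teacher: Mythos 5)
Your proposal is correct and is precisely the construction the paper has in mind: it gives no proof of its own, citing \cite{strmat}, but its footnote specifies the universal GNS-representation as the sum of the GNS-representations over all pure states, which is exactly your direct sum $\bigoplus_\omega \pi_\omega$, with injectivity secured by the Hahn--Banach/Krein--Milman separation argument and upgraded to a *-isomorphism onto the image via the standard fact that injective *-homomorphisms of C*-algebras are isometric.
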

This construction is an argument commonly given in favour of the Hilbert space framework in which quantum mechanics is usually formulated. States, as physicists know them, fit in this framework as follows.
\begin{corollary}Let $\A$ be a C*-algebra with any representation $\A\ra{\pi}\mathcal{B}(\h)$ on a Hilbert space $\h$. Then every positive trace-class operator $\rho$ on $\h$ with trace $1$ defines a state $\omega_\rho$ of $\A$ by
$$\omega_\rho(a):=\mathrm{tr}(\rho\pi(a)).$$
Conversely, if we take $\pi$ to be the universal GNS-representation, every state in fact arises (non-uniquely) in such a way; in particular, each pure state is represented by the rank-one projection onto its cyclic GNS vector.\end{corollary}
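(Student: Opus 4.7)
My plan is to verify the forward direction by routine computation, identify pure states with rank-one projections via the irreducibility of their GNS representations, and handle arbitrary states through the universal GNS representation.

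For the forward direction, fix a (positive) trace-$1$ operator $\rho$ on $\h$; positivity is implicit in the physical interpretation. Linearity of $\omega_\rho(a) := \mathrm{tr}(\rho\,\pi(a))$ follows from linearity of the trace and of $\pi$. For positivity, note that $\pi(b^*b) = \pi(b)^*\pi(b) \geq 0$ in $\mathcal{B}(\h)$; using the spectral decomposition $\rho = \sum_i \lambda_i P_{e_i}$ with $\lambda_i \geq 0$, one computes $\omega_\rho(b^*b) = \sum_i \lambda_i \|\pi(b) e_i\|^2 \geq 0$. Normalization $\omega_\rho(1) = \mathrm{tr}(\rho) = 1$ holds since $\pi$ is unital. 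The special case of a unit vector $\psi \in \h$ yields the rank-one projection $P_\psi$ and the vector state $a \mapsto \langle \psi, \pi(a) \psi\rangle = \mathrm{tr}(P_\psi\, \pi(a))$.

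For the converse I would exploit the defining property of GNS: for each state $\omega$ on $\A$, there is a cyclic vector $\Omega_\omega$ in a Hilbert space $\h_\omega$ such that $\omega(a) = \langle \Omega_\omega, \pi_\omega(a)\, \Omega_\omega\rangle$. When $\omega$ is pure, $\pi_\omega$ is irreducible and $\Omega_\omega$, regarded as sitting in the $\omega$-summand of the universal representation $\pi_u$, realises $\omega$ as $\mathrm{tr}(P_{\Omega_\omega}\, \pi_u(-))$; conversely, irreducibility of the subrepresentation generated by such a unit vector prevents any non-trivial convex decomposition, giving the claimed bijection between pure states and rays in $\h_u$. For a general, possibly mixed, state, I would invoke Choquet's theorem on the weak-$\ast$ compact convex set of states, writing $\omega = \int \omega_p\, d\mu(\omega_p)$ as an integral over pure states, and then set $\rho := \int P_{\Omega_{\omega_p}}\, d\mu(\omega_p)$; non-uniqueness of $\rho$ reflects the ambiguity in $\mu$.

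The main obstacle is this final assembly step: under the footnote's convention that $\pi_u$ sums only over \emph{pure} state representations, one must verify that the formal integral $\int P_{\Omega_{\omega_p}}\, d\mu(\omega_p)$ really defines a positive trace-class operator on the (very large) universal Hilbert space and that trace against it commutes with integration in $\mu$. This is essentially measure-theoretic bookkeeping and can be side-stepped by taking $\pi_u$ to be the sum over \emph{all} states, in which case each $\omega$ is tautologically the vector state associated to its own GNS cyclic vector in $\h_u$.
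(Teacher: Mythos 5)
The paper states this corollary without proof, so I can only judge your argument on its own terms. Your forward direction is fine (and you are right that positivity of $\rho$ must be added to the hypothesis for the statement to hold as written). The converse, however, contains two steps that genuinely fail rather than being ``bookkeeping''. First, the claimed \emph{bijection} between pure states and rays of $\h_u$: a unit vector whose components lie in two inequivalent irreducible summands of $\h_u$ generates a cyclic subrepresentation that is \emph{not} irreducible, and its vector state is a non-trivial convex combination of the two corresponding pure states, hence mixed. So traces against one-dimensional projections capture strictly more than the pure states; only the inclusion ``every pure state is a vector state'' survives. The paper's own phrasing is guilty of the same looseness, but your justification via ``irreducibility of the subrepresentation generated by such a unit vector'' is simply false for a general unit vector.

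Second, and more seriously, the Choquet-integral construction $\rho:=\int P_{\Omega_{\omega_p}}\,d\mu(\omega_p)$ does not produce a trace-class operator when $\mu$ is non-atomic, because the $P_{\Omega_{\omega_p}}$ live in mutually orthogonal summands of a direct \emph{sum} (not a direct integral). Concretely, for $\A=C[0,1]$ the sum of the pure-state GNS representations is multiplication on $\ell^2([0,1])$, and any trace-class $\rho$ there yields $\mathrm{tr}(\rho\,\pi(f))=\sum_{x\in S}c_x f(x)$ for a countable set $S$ --- a purely atomic measure. Lebesgue measure is a state on $C[0,1]$ that is therefore \emph{not} of the form $\mathrm{tr}(\rho\,\pi(-))$ in that representation. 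So under the paper's footnoted convention (universal representation $=$ sum over the pure states only) the converse half of the corollary is actually false, and your proposed ``side-step'' --- taking $\pi_u$ to be the sum of the GNS representations of \emph{all} states, so that every $\omega$ is tautologically the vector state of its own cyclic vector --- is not an optional simplification but the only way to make the statement correct. It should be promoted from an afterthought to the main line of the argument.
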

Given a general representation $\A\ra{}\mathcal{B}(\h)$ of a C*-algebra $\A$, however, our notion of a state is strictly more general than that of states obtained from tracing against positive trace-class operators. The difference lies mostly in the fact that our states are only finitely additive, while in physics normality, or complete additivity on projections, is conventional: whenever the ultraweak sum of mutually orthogonal projections $P_i\in\A$ exists, one asks that $\omega(\sum_iP_i)=\sum_i\omega(P_i)$.
% \begin{theorem}[\cite{kadfun}]\label{thm:trclop}
%Let $\A\subset \mathcal{B}(\h)$ be a von Neumann algebra\footnote{See definition \ref{def:vonneu}.} and let $\omega:\A\ra{}\C$ be a state, then it can be represented (usually non-uniquely) by a positive trace-class operator $\rho_\omega$ on $\h$ with trace $1$ if and only if it is ultraweakly continuous if and only if it is completely additive. The correspondence then is
%$$\omega(a)=\mathrm{tr}(\rho_\omega a).$$
%\end{theorem}
Although our concept of a state might therefore be somewhat more general than the states that physicists use in practice, it does not differ much at a conceptual level. On the other hand, it is much easier to deal with mathematically. This shows the physical relevance of the abstract algebraic concept of a state on a C*-algebra that we shall use in the rest of this paper.

\subsection{Propositions}
\subsubsection{Classical Propositions}\label{sec:clprop}
To motivate the rest of this paper, which will mostly concern quantum logic, we give a brief account of the logic associated with classical mechanics.

Let us approach the matter naively. When we perform experiments on a classical mechanical system, we formulate hypotheses about it. It is not uncommon to manipulate these hypotheses by logical operations such as negations, disjunctions, conjunctions and implications, the last of which are very important if we want to test physical theories\footnote{As we shall see, the lack of a suitable notion of an implication is one of the major drawbacks of von Neumann's quantum logic.}. The idea is that we should be able to translate these logical operations in our metalanguage into operations in the mathematical framework of classical mechanics. In section \ref{sec:syntax}, we argued that it is reasonable to expect to embed propositions in the powerset of our set of states. In fact, we will embed them in the powerset of the set $\mathscr{S}$ of pure states. Naively, we would expect to find a logical structure in classical mechanics that reflects our classical intuition: a Boolean algebra. One might hope to realise this structure as a Boolean subalgebra of $\p\mathscr{S}$.\\
\\
To proceed, we take an operational point of view again. A hypothesis about a classical mechanical system should be verifiable, or at least falsifiable\footnote{On the level of logic, one can argue that this difference is not that important. Indeed, verification of a proposition would coincide with the falsification of its negation.}. It would therefore typically be a statement of the form ``If we measure the observable $a$ on our system, the outcome will be in $\Delta\subset \R$.'' This can equivalently be expressed in terms of geometry (Gelfand-Naimark, Riesz-Markov) as ``The state of our system has support in $a^{-1}(\Delta)\subset\Sigma$.''

Now, a mathematician would ask: ``How free are we in choosing $\Delta$?'' One reasonable answer is that $\Delta$ should be open in $\R$. This gives us $\OO(\Sigma)$ as the object representing our classical logic. However, in general, this is only a Heyting algebra: the law of the excluded middle fails. Heunen, Landsman and Spitters suggest in \cite{hlstop} that it \emph{almost} holds: we would never be able to establish its failure by performing experiments. Indeed, they reason, if $P$ is a proposition and $U\in\OO(\Sigma)$ is the corresponding open set, then $U\cup (\Sigma\setminus U)^{o}$ would be the open set representing $P\vee \neg P$ and this set is topologically big (dense) in $\Sigma$. However, since physical probabilities correspond to certain integrals on $\Sigma$ with respect to some Borel measure, I would say that being `big' in a measure-theoretic sense is a better criterion. Note that $U\cup (\Sigma\setminus U)^{o}$ might not have full measure, so this is not the most reasonable logic we could associate to classical mechanics. The law of the excluded middle could really fail in a physically detectable way in this logic.\\
\\
Therefore, if one wants to end up with a Boolean algebra, it might be better to allow $\Delta$ to be a general Borel-measurable subset of $\R$ and to replace $\A=C(\Sigma,\C)$ by the C*-algebra of essentially bounded functions $L^\infty((\Sigma,\mu),\C)$, with respect to some Borel measure $\mu$ on $\Sigma$. Then our logic is given by the Borel $\sigma$-algebra (modulo null sets)\footnote{Another obvious way to get a Boolean algebra, mentioned in some references such as \cite{hlstop}, would be to consider all subsets of $\Sigma$. However, this choice does not seem to reflect very well what can be established by experiments, according to the measure-theoretic interpretation of states given by the Riesz-Markov theorem.}.

Note that we have a canonical map $C(\Sigma,\C)\longhookrightarrow L^\infty((\Sigma,\mu),\C)$ that is injective if $\mu$ is non-degenerate \cite{takthe}. Moreover, the logic of our classical mechanics does not come from the geometry alone. It can equivalently be described in algebraic terms as the Boolean algebra $\Pi(\A)$ of self-adjoint idempotents\footnote{One can check that these form a complete Boolean algebra, where the meet is given by the ring multiplication. \cite{redqua} In this case these are of course precisely the characteristic functions of measurable subsets of $\Sigma$.} in $L^\infty((\Sigma,\mu),\C)$. These results can be stated more generally as the following non-trivial theorem. (In our particular case $\h=L^2((\Sigma,\mu),\C)$ and $L^\infty((\Sigma,\mu),\C)\subset \mathcal{B}(\h)$ (by multiplication) is our von Neumann algebra.)

\begin{definition}[von Neumann algebra] \label{def:vonneu} A C*-algebra is called a von Neumann algebra if it admits an embedding into $\mathcal{B}(\h)$ for some Hilbert space $\h$ such that it is equal to its double commutant. We will think of von Neumann algebras as a non-full subcategory\footnote{We will not think of a Hilbert space as being part of the data of a von Neumann algebra. The reader should note that in the literature, these von Neumann algebras without an embedding in some $\mathcal{B}(\h)$ are also known as W*-algebras.} $\mathrm{VNeu}$ of $\mathrm{CStar}$ where the $*$-homomorphisms are additionally required to be continuous in the ultraweak topologies\footnote{This is the initial topology with respect to the family of maps $\mathcal{B}(\h)\ra{\mathrm{tr}(\rho - )}\C$, for trace-class operators $\rho\in \mathcal{B}(\h)$. It turns out that this topology does not depend on the choice of the Hilbert space \cite{takthe}}.\end{definition}
In this context it might be easiest to think of von Neumann algebras as certain C*-algebras that have enough self-adjoint idempotents. In fact, these generate a von Neumann algebra in the sense that the double commutant of $\Pi(\A)$ is $\A$ \cite{takthe}. This abundance of projections\footnote{In the light of the universal GNS-representation, it is reasonable to call self-adjoint idempotents projections.} will make our lives easier in many ways\footnote{For example: \begin{enumerate}
\item By the spectral theory for von Neumann algebras, we can extend many results about projections to all normal elements, so in particular to all observables.
    \item The projections will represent the propositions about our physical system. The abundance of projections tells us that this logic contains a lot of information about the system.
        \item This logic will also be convenient from a calculational point of view: it is a complete lattice.\end{enumerate}}.

The step of passing from $C(\Sigma,\C)$ to $L^\infty((\Sigma,\mu),\C)$ is somewhat unsatisfactory, since it depends on the choice of a measure $\mu$. The motivation for this construction was that we wanted to add self-adjoint idempotents (specifically, characteristic functions of measurable sets) to the algebra of observables, to obtain an interesting logic from the algebra. As it turns out, there is a more canonical way of doing this.
\begin{theorem}[\cite{lurlec},\cite{takthe}]$\mathrm{VNeu}\longhookrightarrow\mathrm{CStar}$ is a (non-full) reflective subcategory. Its reflector is called the universal enveloping von Neumann algebra construction. It is given by applying the universal GNS-representation and taking the bicommutant, or, equivalently, by taking the second continuous dual space. \end{theorem}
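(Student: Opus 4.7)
The plan is to verify the adjunction $L \dashv \iota$, where $\iota : \mathrm{VNeu}\longhookrightarrow\mathrm{CStar}$ is the inclusion and $L$ will be the universal enveloping construction. I would define $L$ on objects first, then derive functoriality from the universal property rather than build it by hand. Concretely: given $\A\in\mathrm{CStar}$, let $\pi_u : \A\to \mathcal{B}(\h_u)$ denote the universal GNS-representation, i.e.\ the direct sum $\bigoplus_{\omega\in\mathscr{S}}\pi_\omega$ indexed over all states. By the faithfulness statement in the GNS-theorem each $\pi_\omega$ is bounded and $\pi_u$ is an isometric $*$-embedding. Set $L(\A) := \pi_u(\A)''\subset\mathcal{B}(\h_u)$; by the bicommutant theorem this is a von Neumann algebra, so $L(\A)\in\mathrm{VNeu}$, and I take the unit to be $\eta_\A := \pi_u : \A\to L(\A)$.

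Next I would prove the universal property: for every $\phi : \A\to M$ in $\mathrm{CStar}$ with $M\in\mathrm{VNeu}$, there is a unique ultraweakly continuous $*$-homomorphism $\tilde\phi : L(\A)\to M$ with $\tilde\phi\circ\eta_\A=\phi$. Uniqueness is easy: by Kaplansky's density theorem $\pi_u(\A)$ is ultraweakly dense in $\pi_u(\A)''=L(\A)$, so an ultraweakly continuous extension of $\phi$ is pinned down on a dense subset. For existence, the key observation is that every normal state $\rho$ on $M$ (viewed as an element of the predual $M_*$) pulls back to a state $\rho\circ\phi$ on $\A$, which therefore occurs as a direct summand of $\pi_u$; this lets one factor $\phi$ through $\pi_u(\A)$ at the level of C*-algebras. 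Then one extends to the bicommutant by ultraweak continuity. I would argue this cleanly by passing to preduals: the bounded linear map $\phi$ has a second adjoint $\phi^{**} : \A^{**}\to M^{**}$, and for von Neumann algebras we have the canonical identification $M^{**}\cong M$ coming from $M=(M_*)^*$.

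To close the loop, I would invoke Sakai's theorem, which identifies the Banach space $\A^{**}$, equipped with the Arens product, with the bicommutant $\pi_u(\A)''$ as von Neumann algebras; this is exactly the "equivalently" clause in the statement and it is also the place where the two descriptions of the reflector meet. Once this identification is in hand, $\tilde\phi := \phi^{**}$ is automatically ultraweakly (i.e.\ weak-$*$) continuous, is a $*$-homomorphism because the Arens product is compatible with the involution on second duals of C*-algebras, and restricts to $\phi$ on $\A$. Functoriality of $L$ and naturality of $\eta$ then come for free from the universal property in the standard categorical fashion.

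The main obstacle is the existence half of the universal property, and within it the two delicate facts that I would need to cite rather than reprove: (i) that the second Banach-space dual of a C*-algebra carries a compatible $*$-algebra structure making it a von Neumann algebra (the Arens product behaves well), and (ii) Sakai's identification $\A^{**}\cong\pi_u(\A)''$. Everything else, I expect, reduces to careful but routine manipulations with ultraweak continuity and Kaplansky density.
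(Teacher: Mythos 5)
The paper itself offers no proof of this theorem --- it is quoted from \cite{lurlec} and \cite{takthe} --- so there is no in-text argument to compare yours against; I can only assess your proposal on its merits. The overall architecture (exhibit the unit $\eta_\A=\pi_u$, prove the universal property, get functoriality and naturality for free) is the right one and is the standard proof, but there is one genuine error at exactly the point you yourself identify as the main obstacle. You write that ``for von Neumann algebras we have the canonical identification $M^{**}\cong M$ coming from $M=(M_*)^*$.'' This is false: $M=(M_*)^*$ gives $M^*=(M_*)^{**}$, which strictly contains $M_*$ whenever $M$ is infinite-dimensional (it also contains all the singular functionals), and hence $M^{**}$ is strictly larger than $M$. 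For example, for $M=\ell^\infty(\N)$ one has $M_*=\ell^1$ but $M^*$ is the space of finitely additive measures on $\N$. Consequently $\phi^{**}$ lands in $M^{**}$ and not in $M$, and as written your construction of $\tilde\phi$ does not terminate.

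The standard repair is to use the canonical \emph{surjection} rather than an isomorphism: the inclusion $M_*\hookrightarrow M^*$ dualises to a weak-$*$-continuous map $p_M:M^{**}\twoheadrightarrow (M_*)^*=M$, which is a normal $*$-homomorphism (under the Sherman--Takeda identification $M^{**}\cong\pi_u(M)''$ it is multiplication by a central projection $z$ followed by the isomorphism $zM^{**}\cong M$). Setting $\tilde\phi:=p_M\circ\phi^{**}$ then gives an ultraweakly continuous $*$-homomorphism restricting to $\phi$ on $\A$, and your uniqueness argument (ultraweak density of $\pi_u(\A)$ in its bicommutant, for which the bicommutant theorem already suffices --- Kaplansky density is not needed) closes the proof. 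Two smaller remarks: the identification $\A^{**}\cong\pi_u(\A)''$ that you invoke is usually attributed to Sherman and Takeda rather than to Sakai (Sakai's theorem is the abstract characterisation of Von Neumann algebras as C*-algebras that are dual Banach spaces); and your sentence about normal states of $M$ pulling back to summands of $\pi_u$ becomes redundant once the predual argument is set up correctly, since $\phi$ trivially factors through the injective map $\pi_u$.
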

The result for our particular case $\A=C(\Sigma,\C)$ indeed reflects our intuition.
\begin{theorem}Recall that the Radon measures on $\Sigma$ form a preorder under the relation of absolute continuity. $L^\infty((\Sigma,-),\C)$ defines a functor from the opposite of this preorder to the category of complex Banach spaces. Then $C(\Sigma,\C)^*{}^*\cong \varprojlim L^\infty((\Sigma,-),\C)$. This is a von Neumann algebra under the induced multiplication.\end{theorem}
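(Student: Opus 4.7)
The plan is to combine the Riesz--Markov theorem with the Radon--Nikodym theorem to exhibit $C(\Sigma,\C)^*$ as a directed colimit of $L^1$--spaces, and then to dualise.

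First, Riesz--Markov identifies $C(\Sigma,\C)^*$ with the space $M(\Sigma)$ of complex Radon measures on $\Sigma$ equipped with the total variation norm. For each positive Radon measure $\mu$, the map $L^1((\Sigma,\mu),\C)\ra{}M(\Sigma)$, $f\mapsto f\,d\mu$, is an isometric embedding whose image is, by Radon--Nikodym, exactly the subspace of measures absolutely continuous with respect to $\mu$. Since every $\nu\in M(\Sigma)$ satisfies $\nu\ll|\nu|$, the union of these subspaces is all of $M(\Sigma)$; since $\mu_1,\mu_2\ll\mu_1+\mu_2$, the family is directed by absolute continuity. One thus presents $M(\Sigma)$ as a directed colimit of isometric inclusions $L^1((\Sigma,\nu),\C)\longhookrightarrow L^1((\Sigma,\mu),\C)$ (for $\nu\ll\mu$, the inclusion is multiplication by $d\nu/d\mu$).

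Second, dualise. Under $L^1(\mu)^*\cong L^\infty(\mu)$, the isometric inclusion $L^1(\nu)\longhookrightarrow L^1(\mu)$ transposes to the canonical surjection $L^\infty(\mu)\ra{}L^\infty(\nu)$ (well defined because any $\mu$-null set is $\nu$-null when $\nu\ll\mu$, so $\mu$-a.e.\ equivalence classes descend to $\nu$-a.e.\ classes). This is precisely the functor $L^\infty((\Sigma,-),\C)$ on the opposite preorder described in the statement. Since the contravariant dual functor on $\mathrm{Ban}$ converts colimits into limits, one obtains
$$C(\Sigma,\C)^{**}\cong M(\Sigma)^*\cong \varprojlim L^\infty((\Sigma,-),\C).$$
For the Von Neumann algebra statement, each $L^\infty(\mu)$ is a commutative Von Neumann algebra, acting on $L^2(\mu)$ by multiplication, and the restriction maps are unital $*$-homomorphisms (and ultraweakly continuous), so pointwise multiplication of compatible families makes the inverse limit a Von Neumann algebra; by the preceding theorem this agrees with the induced algebra structure on $C(\Sigma,\C)^{**}$.

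The main obstacle I expect is making the directed-colimit step rigorous inside $\mathrm{Ban}$: one must check that the total variation norm on $M(\Sigma)$ really does coincide with the colimit norm on $\bigcup_\mu L^1((\Sigma,\mu),\C)$ (which amounts to the fact that the total variation of $\nu$ equals its $L^1(|\nu|)$-norm, via the Jordan--Hahn decomposition), and that any compatible family of bounded linear functionals on the $L^1(\mu)$ assembles into a single bounded functional on $M(\Sigma)$ with controlled norm. Once this is verified, the rest of the argument is a formal consequence of Riesz--Markov, Radon--Nikodym, and the $L^1$--$L^\infty$ duality.
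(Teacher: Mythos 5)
Your proposal is correct and follows essentially the same route as the paper's own sketch: Riesz--Markov presents $C(\Sigma,\C)^*$ as the directed colimit of the $L^1((\Sigma,\mu),\C)$, the contravariant dual functor turns this into a limit, and $L^1(\mu)^*\cong L^\infty(\mu)$ finishes the identification; you merely supply more detail on the colimit presentation (Radon--Nikodym, directedness via $\mu_1,\mu_2\ll\mu_1+\mu_2$) than the paper does. The only place the paper is more precise is the final compatibility of multiplications, which it settles by noting that $C(\Sigma,\C)$ is w*-dense in its bidual (Hahn--Banach) so the two products must agree by continuity, whereas your appeal to ``the preceding theorem'' leaves that comparison implicit.
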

\begin{proof}[Proof (sketch)] It is well known from the construction of Riesz-Markov that $C(\Sigma,\C)^*\cong \varinjlim L^1((\Sigma,-),\C)$, where $L^1((\Sigma,-),\C)$ is again a functor from the opposite of the preorder of Radon measures to the category of complex Banach spaces. Now, the continuous dual space functor sends colimits to limits (this is easy to see by considering it as a contravariant endofunctor of the category of topological vector spaces, of which that of Banach spaces is a full subcategory) and therefore $C(\Sigma,\C)^*{}^*\cong \varprojlim (L^1((\Sigma,-),\C))^*$. Now, according to \cite{takthe}, $(L^1((\Sigma,-),\C))^*\cong L^\infty((\Sigma,-),\C)$. Finally, as a consequence of Hahn-Banach, $C(\Sigma,\C)$ is w*-dense in its bidual space. Therefore, by continuity, the multiplication on $C(\Sigma,\C)^*{}^*$, which we know to exist by the previous theorem, has to coincide with the one induced from the multiplications on each of the $L^\infty((\Sigma,\mu),\C)$.\end{proof}
Note that this space contains the coherent families determined by bounded Borel measurable functions from $\Sigma$ to $\C$.\footnote{Of course, it contains a lot more. For instance, objects that do not have an interpretation as functions.} Therefore, the characteristic functions define an order embedding of the Borel $\sigma$-algebra into the self-adjoint idempotents of this von Neumann algebra\footnote{The self-adjoint idempotents of a von Neumann algebra are a complete lattice, while the Borel $\sigma$-algebra may only be countably complete. So this embedding is a kind of completion (with objects that do not have an interpretation as subsets of $\Sigma$).}. Effectively, we have added those characteristic functions to $C(\Sigma,\C)$ and embedded the result in a C*-algebra.
One might wonder how the abundance of projections in a von Neumann algebra translates into properties of the spectrum.
\begin{theorem}[Gelfand duality for commutative von Neumann algebras]\label{thm:hypsto} The equivalence of Theorem \ref{thm:comgelnai} restricts to an equivalence between the opposite of the category of commutative von Neumann algebras and the category of hyperstonean\footnote{i.e. extremally disconnected and admitting a perfect measure. What is important here is that it is a stronger condition than being Stone (see below).} spaces, where the maps are open continuous functions \cite{takthe}. \\
\\
\emph{To accentuate the relation with measure theory, we also mention the following classification.}\\
\\
If $\Sigma$ is a locally compact Hausdorff space and $\mu$ is a Radon measure on $\Sigma$, then $L^\infty((\Sigma,\mu),\C)$ is a commutative von Neumann algebra. Conversely, every commutative von Neumann algebra is of this form for a suitable measure space \cite{sakcst}.
\end{theorem}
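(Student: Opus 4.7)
The plan is to upgrade the commutative Gelfand-Naimark correspondence (Theorem \ref{thm:comgelnai}) by identifying, on each side of the equivalence, exactly which data cut out the Von Neumann subcategory. On the algebraic side, Sakai's theorem characterizes Von Neumann algebras as the C*-algebras that, as Banach spaces, arise as a continuous dual; equivalently in the commutative case, the self-adjoint idempotents must be weak*-dense. On the topological side, the projections of $C(\Sigma,\C)$ are precisely the characteristic functions of clopen subsets of $\Sigma$, so any abundance condition on projections translates into a topological abundance of clopens, and the dual-Banach-space condition translates into the existence of sufficiently many normal measures.

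First I would show that weak*-density of the self-adjoint idempotents forces $\Sigma$ to be extremally disconnected: if some open $U\subset\Sigma$ had non-open closure, one could construct a continuous function that is not a norm-limit of simple functions built from clopens, contradicting the projection-abundance requirement that rigidifies in the commutative setting. Next, I would promote Stonean to hyperstonean by imposing Sakai's dual-space requirement: via Riesz-Markov, the Banach-space predual of $C(\Sigma,\C)$ would have to live inside the space of Radon measures, and the existence of such a predual is exactly the statement that $\Sigma$ admits a perfect (normal) measure, which by definition selects hyperstonean spaces among Stonean ones. This gives the object-level bijection. For the morphism side, I would verify that a *-homomorphism $\varphi\colon\A\to\A'$ between commutative Von Neumann algebras is normal (ultraweakly continuous) if and only if $\mathrm{Max}(\varphi)\colon \mathrm{Max}(\A')\to \mathrm{Max}(\A)$ is open. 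The key point is that normality is equivalent to preservation of suprema of increasing nets of projections; on hyperstonean spaces such suprema correspond to the regularization (closure of the union) of families of clopens, and a continuous map pulls these back correctly precisely when it is open.

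For the classification via $L^{\infty}$, I would argue both directions. That $L^{\infty}((\Sigma,\mu),\C)$ is a Von Neumann algebra follows from the duality $L^{\infty}(\mu)\cong L^{1}(\mu)^{*}$ (valid because Radon measures on locally compact Hausdorff spaces are localizable), so $L^{\infty}(\mu)$ is a dual Banach space and hence a W*-algebra by Sakai. Conversely, given a commutative Von Neumann algebra $\A$, the first part of the theorem yields a hyperstonean Gelfand spectrum $\Sigma$; the perfect measure $\mu$ afforded by hyperstoneanness then realises $\A\cong C(\Sigma,\C)\cong L^{\infty}((\Sigma,\mu),\C)$, where the last isomorphism uses that on a hyperstonean space continuous and essentially-bounded-measurable functions coincide modulo $\mu$-null sets.

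The main obstacle I expect is proving that the hyperstonean condition is both necessary and sufficient for $C(\Sigma,\C)$ to admit a Banach-space predual; this requires the delicate measure-theoretic step of distinguishing order-continuous (normal) linear functionals from merely norm-continuous ones and relies on the existence of a perfect measure to ensure that the normal ones are \emph{weak*-total}. A secondary subtlety is the morphism clause: ensuring that \emph{open} continuous maps, rather than arbitrary continuous maps, form the correct morphism class, which hinges on matching openness with ultraweak continuity rather than with mere norm continuity of the induced algebra map.
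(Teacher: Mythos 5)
The paper offers no proof of this theorem at all: both clauses are simply cited to Takesaki and Sakai, so there is nothing internal to compare your argument against. Judged on its own terms, your outline follows what is essentially the standard textbook route (Sakai's predual characterisation of W*-algebras, Dixmier's characterisation of hyperstonean spaces via normal measures, the identification of normal morphisms with open continuous maps, and $L^\infty(\mu)\cong L^1(\mu)^*$ for localizable $\mu$), and the overall architecture is sound.

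One step as you state it would not go through: you derive extremal disconnectedness from ``weak*-density of the self-adjoint idempotents'', arguing that a non-open closure would produce a continuous function that is not a norm-limit of simple functions built from clopens. Density of the linear span of projections (which indeed holds in any Von Neumann algebra by spectral theory) only forces the spectrum to be totally disconnected, i.e.\ Stone; it does not rule out an open set with non-open closure. The property that actually does the work is monotone completeness of $\A_{sa}$ (bounded increasing nets of positive elements have suprema, because they converge ultraweakly in a Von Neumann algebra): for an open $U\subset\Sigma$ one takes the supremum in $C(\Sigma,\C)_{sa}$ of the continuous functions $0\leq f\leq 1$ supported in $U$, identifies it as the characteristic function of $\overline{U}$, and concludes $\overline{U}$ is clopen. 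You should replace the density argument by this one. Two smaller points to tighten: the converse of the $L^\infty$ clause needs care when $\A$ is not countably decomposable --- a single normal measure will not have dense support in general, and one must aggregate a maximal family of normal measures with disjoint supports into one Radon measure on a (locally compact) disjoint union; and the isomorphism $C(\Sigma,\C)\cong L^\infty((\Sigma,\mu),\C)$ requires $\mu$ to be normal with support all of $\Sigma$, which is exactly what ``perfect'' must be made to deliver.
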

From the fact that the Gelfand spectrum of a commutative von Neumann algebra is a Stone space, one finds a second way of constructing a Boolean algebra out of it. Recall Stone's representation theorem for Boolean algebras.
\begin{theorem}[Stone-Duality,\cite{johsto}] Write $\mathrm{Bool}$ for the category of Boolean algebras and homomorphisms and write $\mathrm{Stone}$ for the category of zero-dimensional\footnote{That is, compact Hausdorff spaces with a basis of clopen subsets.} compact Hausdorff spaces and continuous functions. Then the functors
$$\mathrm{Bool}^{op}\mathop{\leftrightarrows}^{\p_{cl}}_{\mathrm{Spec}}\mathrm{Stone}$$
define an equivalence of categories, where $\p_{cl}\subset\p$ sends $\Sigma$ to the Boolean algebra of clopen subsets and $\mathrm{Spec}$ sends a Boolean algebra $B$ to its prime spectrum, equipped with the Zariski topology and an algebra morphism $B\ra{f}B'$ to a continuous map $\mathrm{Spec}(B')\ra{\mathrm{Spec}(f)}\mathrm{Spec}(B)$ that sends an ideal to its inverse image under $f$.\end{theorem}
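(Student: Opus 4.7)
The plan is to construct the unit and counit of the claimed equivalence explicitly and verify that both are isomorphisms, with the Boolean Prime Ideal Theorem (a weak form of choice) doing the essential work.

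First I would verify that the two functors are well-defined on objects. For a Boolean algebra $B$, I identify the points of $\mathrm{Spec}(B)$ with ultrafilters (equivalently, with Boolean homomorphisms $B\to\{0,1\}$, equivalently with maximal ideals, since in a Boolean algebra prime coincides with maximal). The sets $U_b := \{P\in\mathrm{Spec}(B) : b\notin P\}$ form a basis for the Zariski topology, and the identities $U_{b\wedge c}=U_b\cap U_c$, $U_{b\vee c}=U_b\cup U_c$, $U_{\neg b}=\mathrm{Spec}(B)\setminus U_b$ show that each $U_b$ is clopen, so $\mathrm{Spec}(B)$ is totally disconnected and Hausdorff (distinct ultrafilters disagree on some element). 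Compactness I would prove by embedding $\mathrm{Spec}(B)\hookrightarrow\{0,1\}^B$ as a closed subspace and invoking Tychonoff, or equivalently by noting that every family of basic closed sets with the finite intersection property extends to an ultrafilter. Conversely, clopens of a Stone space are obviously closed under finite unions, intersections, and complements, so $\p_{cl}(X)$ is a Boolean algebra.

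Functoriality is then routine: a Boolean homomorphism $f:B\to B'$ pulls ultrafilters back to ultrafilters, with continuity of $\mathrm{Spec}(f)$ following from $\mathrm{Spec}(f)^{-1}(U_b)=U_{f(b)}$; and a continuous map of Stone spaces pulls clopens back to clopens, giving a Boolean homomorphism. Next I define the unit $\eta_B:B\to \p_{cl}(\mathrm{Spec}(B))$ by $b\mapsto U_b$ and the counit $\epsilon_X:X\to\mathrm{Spec}(\p_{cl}(X))$ by sending $x$ to the ultrafilter $\{V\in\p_{cl}(X) : x\in V\}$. Naturality is formal from the definitions. Injectivity of $\eta_B$ is the statement that distinct elements of $B$ are separated by some ultrafilter, which is precisely the Boolean Prime Ideal Theorem applied to the principal ideal generated by $b\wedge\neg c$ (or $c\wedge \neg b$). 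For the counit, injectivity follows from total disconnectedness together with Hausdorffness, and surjectivity from compactness: given an ultrafilter $F$ of clopens of $X$, the family $F$ itself has the finite intersection property in the closed sets of $X$, so $\bigcap F\neq\emptyset$, and total disconnectedness forces this intersection to be a single point, which maps to $F$ under $\epsilon_X$.

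The main obstacle is the surjectivity of $\eta_B$, i.e., showing every clopen of $\mathrm{Spec}(B)$ is of the form $U_b$. My approach is the following lemma: any clopen $C\subseteq\mathrm{Spec}(B)$ is, being open, a union of basic clopens $U_{b_i}$, and being closed in a compact space, is itself compact, so we may extract a finite subcover $C=U_{b_1}\cup\cdots\cup U_{b_n}=U_{b_1\vee\cdots\vee b_n}$. Hence the image of $\eta_B$ exhausts $\p_{cl}(\mathrm{Spec}(B))$. Combined with injectivity, $\eta_B$ is a Boolean isomorphism. Finally I verify that $\epsilon_X$ is a homeomorphism: it is a continuous bijection between compact Hausdorff spaces (continuity because preimages of basic opens $U_V=\{F:V\in F\}$ are $V$ itself), hence a homeomorphism. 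The triangle identities relating $\eta$ and $\epsilon$ then follow by a direct calculation on generators, establishing the equivalence.
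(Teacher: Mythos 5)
The paper gives no proof of this theorem at all --- it is quoted directly from Johnstone's \emph{Stone Spaces} \cite{johsto} --- so the only thing to measure your argument against is the standard textbook proof, which is essentially what you have written, and it is correct in outline. Two points deserve tightening. First, a small slip: to separate $b\neq c$ you want an ultrafilter \emph{containing} $b\wedge\neg c$, so the Boolean Prime Ideal Theorem should be applied to the principal \emph{filter} generated by $b\wedge\neg c$ (equivalently, to the ideal generated by $\neg(b\wedge\neg c)$); applying it to the ideal generated by $b\wedge\neg c$ itself would produce a prime ideal containing that element, which is the opposite of what you need. Second, and more substantively, both the injectivity and the surjectivity of the counit $\epsilon_X$ silently use the fact that in a compact Hausdorff space which is totally disconnected in the sense of the theorem's footnote (no non-trivial connected subspaces), the clopen sets separate points, i.e.\ the space is zero-dimensional. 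This does not follow formally from the definition of total disconnectedness; it is the standard (but genuinely non-trivial) lemma that in a compact Hausdorff space the connected component of a point coincides with its quasi-component, the intersection of all clopens containing it. Without that lemma you cannot produce the clopen $V$ with $x\in V$, $y\notin V$ that your argument for ``$\bigcap F$ is a singleton'' and for injectivity of $\epsilon_X$ both rely on. With it supplied, your unit--counit argument is complete and matches the classical proof one finds in \cite{johsto}.
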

It turns out that this gives us $\Pi(\A)$ again. We formulate this in terms of spectra.
\begin{theorem}[\cite{bezsto}] The Gelfand spectrum of a commutative von Neumann algebra coincides with the Stone spectrum of the Boolean algebra of its self-adjoint idempotents: $\mathrm{Spec}\circ \Pi=\mathrm{Max}$ on commutative von Neumann algebras\end{theorem}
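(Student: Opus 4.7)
The plan is to factor the claim as the composition of two dualities already on the table: the Gelfand duality of Theorem \ref{thm:hypsto} identifies $\mathrm{Max}(\A)$ with a Stone space (hyperstonean, so in particular totally disconnected), and Stone duality identifies any Stone space with the spectrum of its Boolean algebra of clopens. All that needs to be done in between is to identify, for commutative $\A$, the self-adjoint idempotents $\Pi(\A)$ with the clopen subsets of $\mathrm{Max}(\A)$. (Note that the statement only makes sense on the commutative subcategory, since otherwise $\Pi(\A)$ is not a Boolean algebra and $\mathrm{Max}$ is not meaningful in the Gelfand sense; I would restrict to $\mathrm{cVNeu}$ at the outset.)

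First I would fix a commutative Von Neumann algebra $\A$ and, via Theorem \ref{thm:comgelnai}, replace it by $C(\mathrm{Max}(\A),\C)$. Then for any compact Hausdorff $\Sigma$, I claim
\[
\Pi(C(\Sigma,\C)) \;\cong\; \p_{cl}(\Sigma)
\]
as Boolean algebras, with the isomorphism given by $U \mapsto \chi_U$. To see this, any $f \in C(\Sigma,\C)$ satisfying $f = f^*$ and $f^2 = f$ must take values in $\{0,1\}$, so $f = \chi_{f^{-1}(1)}$ and $f^{-1}(1)$ is clopen by continuity. Conversely, $\chi_U$ is continuous exactly when $U$ is clopen. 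The Boolean operations match up pointwise: meet is $fg$, complement is $1-f$, and join is $f+g-fg$, corresponding respectively to intersection, complement, and union of clopen sets.

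Plugging in $\Sigma = \mathrm{Max}(\A)$ yields $\Pi(\A) \cong \p_{cl}(\mathrm{Max}(\A))$; applying $\mathrm{Spec}$ and using Stone duality gives
\[
\mathrm{Spec}(\Pi(\A)) \;\cong\; \mathrm{Spec}(\p_{cl}(\mathrm{Max}(\A))) \;\cong\; \mathrm{Max}(\A),
\]
the last isomorphism being the counit of Stone duality. For naturality, a morphism $\varphi : \A \to \A'$ in $\mathrm{cVNeu}$ restricts to a Boolean homomorphism $\Pi(\varphi)$, and both $\mathrm{Max}(\varphi)$ and $\mathrm{Spec}(\Pi(\varphi))$ act on spectra by taking preimages; the commuting square expressing compatibility with the bijection $\Pi \cong \p_{cl} \circ \mathrm{Max}$ is then a straightforward consequence of the fact that the bijection is itself the component of a natural isomorphism $\Pi(C(-,\C)) \Rightarrow \p_{cl}(-)$.

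The main subtlety — not an obstacle so much as a bookkeeping issue — is establishing the Boolean homomorphism correspondence naturally on morphisms (one must check that $*$-homomorphisms of Von Neumann algebras do restrict to Boolean maps on idempotents, which uses that continuity in the ultraweak topology preserves the meet, join and complement constructed algebraically above). Everything else is a routine juxtaposition of Theorems \ref{thm:comgelnai}, \ref{thm:hypsto}, and Stone duality, with the elementary identification $\Pi(C(\Sigma,\C)) \cong \p_{cl}(\Sigma)$ supplying the glue.
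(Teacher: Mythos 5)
Your proposal is correct, but note that the paper itself offers no proof of this statement at all: it is quoted verbatim from \cite{bezsto}, where Bezhanishvili obtains it by a rather different route (via de~Vries duality and Gleason covers). Your argument is the standard direct one, and it works: hyperstoneanity of $\mathrm{Max}(\A)$ from Theorem \ref{thm:hypsto} gives that the spectrum is a Stone space, the elementary identification $\Pi(C(\Sigma,\C))\cong\p_{cl}(\Sigma)$ is exactly right (self-adjointness forces real values, idempotency forces values in $\{0,1\}$, continuity forces $f^{-1}(1)$ clopen), and the counit of Stone duality then closes the triangle. Your decision to restrict to the commutative subcategory is also correct and is in fact a needed repair of the paper's statement, which as written quantifies over all of $\mathrm{VNeu}$ even though $\Pi(\A)$ is only a Boolean algebra and $\mathrm{Max}$ only meaningful in the Gelfand sense when $\A$ is commutative; the surrounding discussion makes clear that the commutative case is intended. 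One small inaccuracy in your closing remark: ultraweak continuity plays no role in showing that a morphism restricts to a Boolean homomorphism on idempotents --- any unital $*$-homomorphism preserves $fg$, $1-f$ and $f+g-fg$, hence the finite Boolean operations; the ultraweak continuity would only be relevant if one wanted preservation of infinite joins, which the statement does not require. With that caveat removed, your proof is a clean and essentially self-contained substitute for the citation.
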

In this sense, (commutative) von Neumann algebras are suitable algebras of observables for classical mechanics, from a logical point of view. We can reconstruct the observables from their associated logic: $C(\mathrm{Spec}(\Pi(\A)),\C)\cong \A$.

\subsubsection{von Neumann's Quantum Propositions}\label{sec:bn}
The construction in the previous paragraph was designed so that it would yield a logic for classical mechanics that agrees with our classical (Boolean) intuition. Things get more interesting if we try to do something similar for quantum mechanics, as it is not a priori clear that our classical logic is suitable at all for dealing with statements about quantum mechanical systems. The first thorough account of such a quantum logic was given by Birkhoff and von Neumann in \cite{birlog}. The approach that is now standard is only a minor alteration of their original one\footnote{It is now generally accepted that quantum logic should be orthomodular only, not modular, i.e. distributivity fails in an even stronger sense.} \cite{enghan}.

Inspired by the discussion of classical logic, we might postulate that the poset $\Pi(\A)$ of self-adjoint idempotents of our C*-algebra of quantum observables $\A$ should model quantum logic, playing the role of a Lindenbaum-Tarski algebra. Mimicking the situation in classical mechanics, we define a partial order on $\Pi(\A)$ by $a\leq b$ iff $ab=a$. (It then follows that $ba=b^*a^*=(ab)^*=a^*=a$.)\footnote{Geometrically speaking, i.e. if $\A\subset \mathcal{B}(\h)$, this order is the inclusion of subspaces.} However, in general, $\A$ might not contain enough projections to give rise to an interesting logic, just like $C(\Sigma,\C)$ did not in the classical case. Therefore, we should embed $\A$ in some von Neumann algebra, the quantum analogue of $L^\infty(\Sigma,\C)$, if you will.

One way of doing this, if $\A$ arises as a subalgebra of $\mathcal{B}(\h)$ for some Hilbert space $\h$, is just to take its double commutant. A canonical method would be to use the universal GNS-representation for $\A$ to embed it in $\mathcal{B}(\h)$ for some Hilbert space $\h$, thus computing the universal enveloping von Neumann algebra. Then we have the following.
\begin{theorem}[\cite{redqua}] The lattice $\Pi(\A)$ of projections in a von Neumann algebra $\A$ is a complete orthomodular\footnote{The orthomodular law says that, if $A\leq B$, then $B=A\vee(A^\bot\wedge B)$, where we write $A^\bot$ for $1-A$ (in the algebraic sense). Note that this is not a Heyting implication.} lattice, with respect to its natural order. Moreover, it is distributive if and only if $\A$ is commutative. In particular, in the non-commutative case, the adjoint functor theorem tells us that it does not have an implication.\end{theorem}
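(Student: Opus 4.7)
The plan is to prove completeness together with the ortholattice structure first, then characterise distributivity via commutativity, and finally deduce the absence of an implication from the adjoint functor theorem.

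Realising $\A \subset \mathcal{B}(\h)$ via the universal GNS representation, the order $P \leq Q \Leftrightarrow PQ = P$ on $\Pi(\A)$ coincides with range inclusion $P(\h) \subseteq Q(\h)$. For any family $\{P_i\} \subseteq \Pi(\A)$, let $\bigwedge_i P_i$ and $\bigvee_i P_i$ denote the orthogonal projections of $\h$ onto $\bigcap_i P_i(\h)$ and onto the closed linear span $\overline{\sum_i P_i(\h)}$ respectively. These are projections in $\mathcal{B}(\h)$, and since they commute with every element of $\A'$ (as each $P_i$ does), the double commutant theorem places them in $\A'' = \A$; this gives completeness. Orthocomplementation is $P^\perp := 1-P$, and orthomodularity in the form $P \leq Q \Rightarrow Q = P \vee (Q \wedge P^\perp)$ follows directly from the Hilbert-space decomposition $Q(\h) = P(\h) \oplus (Q(\h) \cap P(\h)^\perp)$. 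The weakly-distributive identity stated in the footnote is then a short lattice manipulation, using also that $A^\perp \leq C$ forces $A \vee C = 1$.

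For the distributivity characterisation, the forward direction is immediate from Theorem~\ref{thm:hypsto}: if $\A$ is commutative, then $\A \cong C(\Sigma,\C)$ for a hyperstonean $\Sigma$, and $\Pi(\A)$ is isomorphic to the Boolean algebra of clopen subsets of $\Sigma$. For the converse, suppose $\A$ is non-commutative. By the spectral theorem every self-adjoint element of $\A$ is a norm-limit of linear combinations of its spectral projections, so $\Pi(\A)$ generates $\A$ as a Von Neumann algebra; hence if every pair of projections commuted then $\A$ itself would be commutative. Consequently there exist $P, Q \in \Pi(\A)$ with $PQ \neq QP$, which upon restriction to a suitable two-dimensional invariant subspace become two distinct non-orthogonal one-dimensional projections. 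One then checks $P \wedge Q = P \wedge Q^\perp = 0$ while $P \wedge (Q \vee Q^\perp) = P \wedge 1 = P$, exhibiting the standard failure of distributivity in $\Pi(\A)$.

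For the final clause, a Heyting implication $Q \Rightarrow (-)$ would by definition be a right adjoint to $Q \wedge (-)$ on $\Pi(\A)$; by the adjoint functor theorem in a complete lattice, such a right adjoint exists iff $Q \wedge (-)$ preserves arbitrary joins, i.e.~iff the infinitary distributive law holds. Since we have just shown distributivity fails whenever $\A$ is non-commutative, no implication can exist. The main technical obstacle I expect is the converse direction of the distributivity claim: one must justify carefully that every non-commutative Von Neumann algebra contains a pair of non-commuting projections, which is really the content of the spectral theorem combined with the double commutant theorem. Once that is in hand, the non-distributive configuration on $\C^2$ is elementary.
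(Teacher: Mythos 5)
Most of your proposal is sound and, as far as the paper commits itself to an argument at all, follows the same route: the paper's own proof consists of the single remark that everything is a straightforward verification except completeness, which it derives from the bicommutant theorem --- exactly your argument that the projections onto $\bigcap_i P_i(\h)$ and onto $\overline{\sum_i P_i(\h)}$ commute with $\A'$ and hence lie in $\A''=\A$. Your treatment of orthomodularity, of the footnote's weak distributivity identity, and of the adjoint-functor-theorem argument ruling out a Heyting implication are all correct.

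The one genuine gap is in the converse half of the distributivity claim, and it is not where you locate it. Producing a non-commuting pair $P,Q\in\Pi(\A)$ from non-commutativity of $\A$ is unproblematic (projections norm-generate the self-adjoint part by the spectral theorem). What fails as written is the passage from such a pair to a violation of distributivity: two non-commuting projections can perfectly well have $P\wedge Q\neq 0$ or $P\wedge Q^\perp\neq 0$, and ``restricting to a two-dimensional invariant subspace'' does not help, because the meets and joins of the compressed operators are computed in that subspace rather than in $\h$, so a non-distributive configuration among the compressions is not one in $\Pi(\A)$ (moreover, in an infinite-dimensional algebra the two-projection structure theory yields a tensor factor $\C^2$, not literally a jointly invariant plane). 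The standard repair stays inside the lattice: if $\Pi(\A)$ were distributive, then $P=P\wedge(Q\vee Q^\perp)=(P\wedge Q)\vee(P\wedge Q^\perp)$; the two joinands are orthogonal (one lies under $Q$, the other under $Q^\perp$), so the join is their sum, whence $QP=Q(P\wedge Q)+Q(P\wedge Q^\perp)=P\wedge Q$ and, taking adjoints, $PQ=P\wedge Q=QP$ --- forcing all projections, and hence $\A$, to commute. Equivalently, replace your $P$ by $P':=P-(P\wedge Q)-(P\wedge Q^\perp)$, a nonzero projection of $\A$ satisfying $P'\wedge Q=P'\wedge Q^\perp=0$ while $P'\wedge(Q\vee Q^\perp)=P'$.
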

\begin{proof}All properties are straightforward verifications, except completeness. This is an elementary consequence of the (non-trivial) bicommutant theorem, which I have therefore used to define what a von Neumann algebra is.\end{proof}

Note that a faithful representation of $\A$ on a Hilbert space $\h$ defines an order embedding $\Pi(\A)$ into the Grassmannian $\Pi(\h)$ of closed linear subspaces of $\h$. We see that this quantum logic resembles the situation in classical mechanics quite closely in terms of geometry: the propositions are given by the closed linear subspaces of $\h$ (rather than by measurable subsets). The meet is given by intersection and the join by closed linear span. Although an implication is lacking, Birkhoff and von Neumann did define a negation on $\Pi(\A)$, given by the orthocomplement (or, $p\mapsto 1-p$ in terms of algebra).\\
\\
As we anticipated in our informal syntactic discussion in section \ref{sec:syntax}, this logic should play the role of a Lindenbaum-Tarski algebra and its propositions should arise as (in some sense provable) equivalence classes of sequents $\top \vdash a\in\Delta$, for variables $a$ of sort $\A$ and $\Delta$ of sort $\p\R$. In practice this goes as follows.

Let $a\in \A\subset\mathcal{B}(\h)$, $a^*=a$ be a quantum observable and let $\Delta$ be a Borel-measurable subset of $\R$. Then we can form a proposition $[a\in \Delta]\in \Pi(\A)$ as the projection in the spectral decomposition of $a$ corresponding to $\Delta\cap\sigma(a)\subset \sigma(a)$, where $\sigma(a)$ denotes the spectrum of $a$. Conversely, every $p\in\Pi(\A)$ is of this form. Indeed, take $a=p$ and $\Delta=\{1\}$.

A first indication of the physical interpretation of this logic is found in the following useful but quite non-trivial analogy\footnote{One should immediately note, however, that a big difference is that quantum pure states do not correspond to the two-valued measures: even pure states exhibit non-deterministic behaviour.} with the Riesz-Markov representation of states from classical mechanics.
\begin{theorem}[Mackey-Gleason,\cite{bunmac,bunmac2,bunmac3} (original references),\cite{hamqua} (complete proof)]\label{thm:macgle} Let $\A$ be a von Neumann algebra. Every state $\omega$ on $\A$ restricts to a finitely additive measure on $\Pi(\A)$. That is, a map $\Pi(\A)\ra{\mu_\omega}[0,1]$ such that
\begin{enumerate}
\item $\mu_\omega(\top)=1$;
\item $\mu_\omega(x)+\mu_\omega(y)=\mu_\omega(x\vee y)$,
\txt{whenever $x\leq y^\bot$ (equivalently, $y\leq x^\bot$).}
\end{enumerate}
If $\A$ has no type $I_2$ summands, this assignment defines a bijection between states\footnote{In fact, it always defines a bijection between quasi-states and these finitely additive measures. (See definition \ref{def:quasta})} on $\A$ and finitely additive measures on $\Pi(\A)$.\end{theorem}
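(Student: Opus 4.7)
The forward direction is a routine verification: given a state $\omega$, set $\mu_\omega := \omega|_{\Pi(\A)}$. Since $0 \leq p \leq 1$ for any projection and $\omega$ is positive and normalized, the values lie in $[0,1]$ and $\mu_\omega(1)=1$. If $p\wedge q = \bot$ for projections $p,q$, then one checks algebraically that $pq=qp=0$, whence $p\vee q = p+q$ inside $\A$ and linearity of $\omega$ yields finite additivity. So restriction always defines a map from states to finitely additive measures.

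For the converse I would reconstruct $\omega$ from $\mu$ in three stages. First, for a self-adjoint $a\in\A$ with spectral resolution $a=\int_{\sigma(a)}\lambda\, dE_\lambda$, push $\mu$ forward along $E_{(-)}$ to get a scalar Borel measure on $\sigma(a)$, and define $\omega(a):=\int_{\sigma(a)}\lambda\, d\mu(E_\lambda)$. For elementary step functions $\sum_i \lambda_i p_i$ with pairwise orthogonal projections this agrees with $\sum_i \lambda_i \mu(p_i)$ by finite additivity; since the self-adjoint part of $\A$ is the norm closure of such step functions, and since finite additivity plus $\mu(1)=1$ forces boundedness, one gets an extension to all self-adjoint elements by continuity. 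Second, extend to all of $\A$ by complex linearity via the Cartesian decomposition $a = \tfrac{1}{2}(a+a^*) + \tfrac{1}{2i}\cdot i(a-a^*)$. Third, verify positivity and normalization: $\omega(a^*a)\geq 0$ because $a^*a$ has nonnegative spectrum, so its integral against the nonnegative $\mu$ is nonnegative, while $\omega(1)=\mu(1)=1$.

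The heart of the proof, and the one serious obstacle, is showing that the resulting functional is actually linear in $a$. This is automatic on any commutative Von Neumann subalgebra of $\A$ (where spectral theory identifies everything with an $L^\infty$ and finite additivity of $\mu$ gives genuine linearity), so the difficulty is additivity $\omega(a+b)=\omega(a)+\omega(b)$ when $a$ and $b$ fail to commute. For $\A=\mathcal{B}(\h)$ with $\dim\h\geq 3$ this is precisely the content of Gleason's original theorem, proved by a delicate geometric continuity argument on the unit sphere; the extension to arbitrary Von Neumann algebras without type $I_2$ summands is due to Christensen, Yeadon and finally Bunce--Wright, and occupies most of the cited reference \cite{hamqua}. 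I would quote this result rather than reproduce it.

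The necessity of excluding type $I_2$ summands is visible already for $\A = M_2(\C)$: its nontrivial projections are parametrized by $S^2$ via the Bloch sphere, so a finitely additive $\mu$ amounts to a function $f:S^2\to[0,1]$ with $f(-x)=1-f(x)$, and there are plenty of such $f$ that do not come from linear functionals on $M_2(\C)$. Once linearity is secured in the remaining cases, injectivity of the restriction map is immediate (self-adjoints are norm-limits of spectral step functions, so a state is determined by its values on projections), giving the claimed bijection.
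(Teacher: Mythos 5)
First, a point of comparison: the paper offers no proof of this theorem at all --- it is quoted from the literature, with \cite{hamqua} cited for the complete proof --- so your outline is not competing with an argument in the text. Your overall architecture (restriction is the easy direction; reconstruction of $\omega$ from $\mu$ by spectral integration inside each commutative subalgebra; the genuine content being additivity across non-commuting elements, which is Gleason's theorem and its Bunce--Wright extension; the Bloch-sphere picture of why $M_2(\C)$ fails) is exactly the architecture of the cited references, and the steps you choose to quote rather than reproduce are the ones any reasonable exposition would quote.

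There is, however, one genuinely false step in your forward direction. You assert that $p\wedge q=\bot$ implies $pq=qp=0$. This fails for projections in a non-commutative algebra: in $M_2(\C)$ let $p$ project onto $\C e_1$ and $q$ onto $\C (e_1+e_2)$; then $\mathrm{ran}(p)\cap\mathrm{ran}(q)=0$, so $p\wedge q=\bot$, yet $pq\neq 0$ and $p\vee q=1\neq p+q$, and for the vector state $\omega=\langle e_1|\,\cdot\,|e_1\rangle$ one gets $\omega(p)+\omega(q)=3/2\neq 1=\omega(p\vee q)$. So additivity over all pairs with trivial meet is simply false for states, and condition 2 of the theorem must be read in the standard way, as additivity over \emph{orthogonal} pairs ($x\leq y^\bot$, equivalently $xy=0$); with that reading your computation $p\vee q=p+q$ and the appeal to linearity of $\omega$ are correct, and the false lemma is not needed. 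The slip does not propagate to the converse, since the spectral reconstruction and the appeal to Gleason/Bunce--Wright only ever invoke orthogonal additivity. The rest of your sketch --- boundedness on step functions, the Cartesian decomposition, positivity via the spectrum of $a^*a$, injectivity of restriction, and the parametrization of finitely additive measures on $\Pi(M_2(\C))$ by functions $f:S^2\to[0,1]$ with $f(-x)=1-f(x)$ --- is correct.
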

\noindent We see that the choice of a state associates a number to every quantum proposition. The Born interpretation (one of the cornerstones of quantum mechanics that gives the formalism its empirical content) precisely states that these numbers should be interpreted as (frequentist) probabilities\footnote{This is a rather reasonable assumption, since this is precisely our interpretation in the case of classical mechanics.} and that, for a state $\A\ra{\omega}\C$, the probability of measuring a value in $\Delta\subset\R$ for the observable $a$ is
$$\mathrm{Prob}_\omega(a\in\Delta)=\mu_\omega([a\in\Delta])=\omega([a\in\Delta]).$$

As it turns out, we identify a quantum proposition with the pure states for which it is true with certainty. The Born rule tells us that, for a pure state represented by a unit vector $|\psi\rangle\in\h$, this probability is
$$\mathrm{Prob}_{|\psi\rangle}(a\in\Delta)=\langle\psi|[a\in\Delta]|\psi\rangle.$$
For unit vectors $|\psi\rangle$, we have
$$|\psi\rangle\in [a\in\Delta]\h\quad\Longleftrightarrow\quad \mathrm{Prob}_{|\psi\rangle}(a\in\Delta)=1.$$
Similarly,
\begin{align*}|\psi\rangle\in [a\in\Delta]^\bot \h&\quad\Longleftrightarrow\quad \mathrm{Prob}_{|\psi\rangle}(a\in\Delta)=0\\
&\quad\Longleftrightarrow\quad |\psi\rangle\in[a\notin\Delta]\h\quad\Longleftrightarrow\quad \mathrm{Prob}_{|\psi\rangle}(a\notin\Delta)=1.
\end{align*}
Note that the negation $p^\bot$ does not mean ``$p$ is not true'', but rather ``$p$ is false!'' (In jargon: we have a choice negation, rather than an exclusion negation.)\footnote{All propositions are thus of the form $\mathrm{Prob}_{|\psi\rangle}(a\in\Delta)=\lambda$, where $\lambda$ is $0$ or $1$. Other probabilities are not included in the logic. One might argue that this makes this quantum logic somewhat crude, as propositions with other values for $\lambda$ would be just as valid from an operational point of view. However, there is still no consensus on how to include propositions of this form in the logic. \cite{redqua}} We see that pure states do not define morphisms $\Pi(\A)\ra{}\{0,1\}$, as one might expect from experience with classical mechanics\footnote{Indeed, there, pure states define a homomorphism of complete Boolean algebras from the logic to the two-valued Boolean algebra.}. Some propositions are neither true nor false for a state $|\psi\rangle$. We will later see that even more is true: even if we replace $\{0,1\}$ by some other Boolean algebra (or a general distributive lattice, for that matter) and lower our expectations, we still cannot assign truth values to the propositions. (Jargon: quantum logic fails to admit Boolean-valued models.) We see that an ignorance interpretation of quantum probabilities is an unacceptable point of view. (See theorem \ref{thm:koclog}.) Surprisingly, quantum logic does satisfy the law of the excluded middle: $p\vee p^\bot=\top$.

One sees the origin of this curiosity not by investigating the meet, which is perfectly well-behaved. For unit vectors $|\psi\rangle$,
\begin{align*}|\psi\rangle\in([a\in\Delta]\wedge[a'\in\Delta '])\h&\quad\Longleftrightarrow\quad \mathrm{Prob}_{|\psi\rangle}(a\in\Delta)=1\;\mathrm{and}\;\mathrm{Prob}_{|\psi\rangle}(a'\in\Delta')=1 \\
&\quad\Longleftrightarrow\quad \mathrm{Prob}_{|\psi\rangle}(a\in\Delta\;\mathrm{and}\; a'\in\Delta')=1,
\end{align*}
but rather by investigating the join:
\begin{align*}|\psi\rangle\in([a\in\Delta]\vee[a'\in\Delta '])\h&\quad\Longleftrightarrow\quad \mathrm{Prob}_{|\psi\rangle}(a\in\Delta\;\mathrm{or}\; a'\in\Delta')=1\\
&\quad\not\Longleftrightarrow\quad \mathrm{Prob}_{|\psi\rangle}(a\in\Delta)=1\;\mathrm{or}\;\mathrm{Prob}_{|\psi\rangle}(a'\in\Delta')=1.
\end{align*}
Even if $\mathrm{Prob}_{|\psi\rangle}(a\in\Delta)$ is neither $0$ nor $1$, i.e. $|\psi\rangle\notin [a\in\Delta]\h$ and $|\psi\rangle\notin [a\in\Delta]^\bot\h$, we still have $|\psi\rangle\in ([a\in\Delta]\vee [a\in\Delta]^\bot)\h=\h$. We see that there is some friction between this quantum concept of truth and our intuition.\\
\\
Now, we should stop to think about the issue of epistemology versus ontology: can we give an operational meaning to these propositions? Since measurements in general destroy a quantum state, this is a subtle issue, most notably for the interpretation of the conjunction and disjunction.

The principal problem is that we cannot perform many measurements successively to verify or falsify our proposition, as measurements change our quantum state. Ideally, therefore, we would like to prepare many copies of the same state and perform parallel experiments. Unfortunately, due to various no-cloning theorems, this is still not possible in principle \cite{niequa}. We can, however, prepare almost identical states\footnote{To an arbitrary precision, in theory.} and perform experiments on them. (See, for instance, \cite{duapro} for the theoretical principle and \cite{cheexp} for a recent experimental realisation of the procedure.) In this way, we can make sense of a probabilistic interpretation of quantum mechanics.

However, the issue of conjunctions and disjunctions remains: which proposition do we verify first? (In quantum mechanics, this will make a big difference\footnote{I am talking about the case of two propositions that involve non-commuting observables. Unless we are in a common eigenstate by coincidence, these observables are not comeasurable.}!) Of course, $\wedge$ and $\vee$ should be symmetric, so there seems to be a problem. The way out of this, proposed by Piron and Jauch, is to interpret $p\wedge p'$ operationally as ``randomly choose to perform an experiment for either $p$ or $p'$''. By carrying out this procedure many times on almost identical states we can hope to verify or falsify this conjunction. We interpret the disjunction in a similar way \cite{balqua}.\\
\\
Finally, one might wonder whether we can again reconstruct $\A$ from $\Pi(\A)$. Put bluntly, does this logic contain all the interesting information about our physics? The answer is yes, if we start out with $\A$ as a von Neumann subalgebra of $\mathcal{B}(\h)$ for some Hilbert space $\h$ and we remember the embedding $\Pi(\A)\subset\mathcal{B}(\h)$\footnote{This is also true in another sense. If we know the truth value of all propositions, we know the support of our state. This means that it uniquely determines our state if and only if it is pure, just as in classical kinematics.}. Then a non-trivial result from functional analysis tells us that $\A$ is isomorphic to the double commutant of $\Pi(\A)$ \cite{takthe}. However, without this extra information, it is not immediately clear how one would reconstruct $\A$. In that sense, the result is somewhat unsatisfactory when one compares it to its classical analogue.\\
\\
Summarising, we have found that quantum logic differs from the classical logic (which we can derive from classical mechanics) in the following respects.
\begin{enumerate}
\item $\vee$ and $\wedge$ do not distribute over each other.
\item Therefore, there is no Heyting implication.
\item The interpretation of $(-)^\bot$ is not what we are used to: for a state $|\psi\rangle$, the truth of $p^\bot$ should be interpreted as ``$p$ is false'', rather than as ``$p$ is not true''.
\item Pure states do not determine the truth of all quantum propositions. Because the notion of truth is too crude, they do not determine an interesting\footnote{Of course, one can choose to call the propositions `true' that are true with probability 1, while one says the rest are `false'.} map from our quantum lattice to some set of truth values (rather than probabilities). Jargon: there is no satisfactory state-proposition pairing.
\item More strongly, there cannot be such a map to a Boolean algebra of truth values if we require it to preserve the logical structure on the Boolean sublogics.
\item The interpretation of $\vee$ is not what we are used to: for a state $|\psi\rangle$, $p\vee p'$ can be true, while neither $p$ nor $p'$ is true. In particular, the law of the excluded middle holds.
\item The operational interpretation of various propositions is somewhat subtle.
\item It is not obvious how one can reconstruct the algebra of observables from this quantum logic.
\end{enumerate}
Of course, this does not mean that quantum logic is not correct! One has to be careful, though, when interpreting quantum propositions.

\section{Topos Quantum Kinematics (Semantics)}
\subsection{A Topos for Quantum Kinematics}
\subsubsection{Bohr's Doctrine of Classical Contexts}
We hope that this discussion of conventional quantum logic has convinced the reader that a new notion of quantum kinematics that is closer to our classical intuition might lead to new insights. Why would one expect that topos theory could be of any use here? Judging by the introductions given in \cite{buttop1}, \cite{doetop1} and \cite{hlstop}, the primary motivation can be found in the need for a suitable object of truth values for the quantum logic. The sentiment seems to be that the all-or-nothing distinction that is made in Birkhoff-von Neumann quantum logic does not do justice to the subtleties of quantum probabilities. The hope, of course, is that this can be realised as the subobject classifier of a suitable topos.

As I will argue, in pursuit of this goal, the topos approaches to quantum kinematics build a framework in a topos that presents the interplay between observables, states and propositions in a way that stays very close to what we know from classical mechanics. Put bluntly, the distinction between classical and quantum kinematics is reduced to the choice of a topos. Therefore, the next question to address is: ``If quantum mechanics does not take place in $\mathrm{Set}$, which topos should we choose?''\\
\\
The papers by Butterfield and Isham and by Landsman et al. motivate this choice from a vague philosophy known as \emph{Bohr's doctrine of classical contexts}. Bohr once phrased this as follows \cite{bohdis}.
\begin{quote}
However far the phenomena transcend the scope of classical
physical explanation, the account of all evidence must be
expressed in classical terms. (...) The argument is simply that
by the word experiment we refer to a situation where we can tell
others what we have done and what we have learned and that,
therefore, the account of the experimental arrangements and of
the results of the observations must be expressed in
unambiguous language with suitable application of the
terminology of classical physics.
\end{quote}
We obtain information about a quantum system by investigating it in various classical contexts. In the mathematical formalism, these classical contexts are represented by commutative subalgebras of our algebra of observables. The idea that these classical contexts should contain a lot of information about a quantum system (at least in the case $\dim \h>2$!) is made rigorous by the following much more recent results\footnote{One should note that the Kochen-Specker theorem actually emphasises a fundamental difference between the classical and the quantum world. Paradoxically enough, it will serve as an important motivation for the topos approach to quantum kinematics that tries to stress analogies with classical kinematics.}.

Apparently, assigning a value to all observables in such a way that it is consistent in each classical context is already too much to ask for:
\begin{theorem}[Kochen-Specker (Observable version),\cite{doekoc}, Theorem 2.4]\label{thm:kocobs} Suppose $\A$ is a von Neumann algebra with no type $I_1$ or $I_2$ summands. Then there does not exist a map $\A\ra{}\C$ that restricts to a unital *-homomorphism on each commutative von Neumann subalgebra.
\end{theorem}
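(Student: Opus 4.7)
The plan is to reduce the statement to the classical projection-version of the Kochen-Specker theorem, which is a finite combinatorial/geometric statement about rays in $\C^3$. I would argue in three stages: restrict the hypothetical valuation to projections, cut down to a copy of $M_3(\C)$ sitting inside $\A$, and invoke the original Kochen-Specker coloring contradiction.

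First I would suppose for contradiction that such a map $V\colon \A\to\C$ exists. For any projection $p\in\Pi(\A)$, the commutative Von Neumann subalgebra generated by $p$ is $\C\cdot p\oplus\C\cdot(1-p)$, and $V$ restricts to a $*$-homomorphism there; in particular $V(p)^2=V(p^2)=V(p)$, so $V(p)\in\{0,1\}$, and $V(1)=1$. Moreover, for any commuting pair of projections $p,q$ one has $V(pq)=V(p)V(q)$ and $V(p+q-pq)=V(p)+V(q)-V(p)V(q)$ by applying $V$ inside the commutative subalgebra they generate. In other words, the restriction of $V$ to $\Pi(\A)$ is a homomorphism of partial Boolean algebras in the sense of Kochen-Specker, and on any finite-dimensional commutative $*$-subalgebra $\C^n\subset\A$ it picks out a single point of its spectrum.

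Second, I would use the structure theory of Von Neumann algebras to produce a unital $*$-subalgebra of $\A$ isomorphic to $M_3(\C)$. The hypothesis that $\A$ has no type $I_1$ or $I_2$ summands guarantees that its centre has no minimal projection of central dimension one or two, so one may split off a direct summand containing three mutually orthogonal equivalent projections $e_1,e_2,e_3$ with $e_1+e_2+e_3=1$; the associated $3\times 3$ matrix units generate a copy of $M_3(\C)\subset\A$. The restriction $V|_{M_3(\C)}$ inherits the defining property, because any commutative Von Neumann subalgebra of $M_3(\C)$ is automatically a commutative Von Neumann subalgebra of $\A$.

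Finally, I would derive a contradiction by the original combinatorial argument of Kochen and Specker. Rank-one projections in $M_3(\C)$ correspond to rays in $\C^3$, and any three mutually orthogonal rays $\ell_1,\ell_2,\ell_3$ span a maximal commutative subalgebra $\cong\C^3$, on which $V$ is a $*$-homomorphism to $\C$; hence $V$ assigns value $1$ to exactly one of the three corresponding projections and $0$ to the other two. Thus $V$ would induce a two-colouring of the rays of $\C^3$ in which every orthogonal triple receives exactly one ``true'' ray. The core of the theorem is the finite-geometric fact, established by Kochen and Specker (and later streamlined by Peres with $33$ rays and by Conway-Kochen with $31$), that no such colouring exists. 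This is the step I expect to be the main obstacle: the earlier reductions are essentially formal consequences of the $*$-homomorphism hypothesis and the structure theory of von Neumann algebras, whereas the non-colourability of a specific finite configuration of rays in $\C^3$ is genuinely combinatorial and would in a full proof either be exhibited explicitly or imported as the classical Kochen-Specker lemma.
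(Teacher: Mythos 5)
Your first step is fine, modulo the standing convention that ``$*$-homomorphism'' here means \emph{unital} $*$-homomorphism (otherwise the zero map trivially satisfies the hypothesis, and your assertion $V(1)=1$ needs this convention). The genuine gap is in your second step: it is not true that a von Neumann algebra without type $I_1$ and $I_2$ summands contains three mutually orthogonal, mutually equivalent projections summing to the identity (equivalently, a \emph{unital} copy of $M_3(\C)$), nor that one can split off a direct summand with this property. The algebra $M_4(\C)$ is a factor of type $I_4$, so it has no type $I_1$ or $I_2$ summands and no nontrivial direct summands at all, yet three equivalent orthogonal projections summing to $1$ would each need rank $4/3$; the same obstruction occurs in every homogeneous type $I_n$ summand with $n\geq 4$ and $3\nmid n$. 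Moreover, even where a (necessarily non-unital) copy of $M_3(\C)$ with unit $e<1$ does exist, your colouring argument only shows that \emph{at most} one member of each orthogonal triple receives the value $1$: to get ``exactly one'' you need $V(e)=1$, and if $V(e)=0$ the restriction of $V$ to that corner is identically zero and yields no contradiction. So an essential step is missing: one must locate a projection $e$ with $V(e)=1$ such that $e\A e$ admits a unital $M_3(\C)$. For $B(\C^n)$ this can be done by fixing a rank-one projection of value $1$ and running the colouring argument on the three-dimensional subspaces containing it, but for general $\A$ (types $II$ and $III$, direct integrals of type $I_n$ pieces, projections with no minimal subprojections) this reduction is exactly where the real work lies, and it is the part you dismiss as ``essentially formal''.

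For comparison: the paper does not prove this theorem at all but imports it from \cite{doekoc}, where the argument runs through measure theory rather than combinatorics. One observes that such a map would restrict to a two-valued finitely additive probability measure on the projection lattice $\Pi(\A)$, and then invokes the non-existence of such measures on a von Neumann algebra without type $I_1$ and $I_2$ summands, which is obtained from the Bunce--Wright solution of the Mackey--Gleason problem (theorem \ref{thm:macgle}) --- this is precisely where the exclusion of type $I_2$ enters --- by extending the measure to a state and showing that no state can be two-valued on projections in the absence of abelian summands. Your reduction to the finite ray-colouring configuration in $\C^3$ is the historically original route and can be completed for $B(\h)$ with $3\leq\dim\h<\infty$, but for the general von Neumann-algebraic statement the Gleason-theoretic route is the one that actually closes the argument.
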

Similarly, as a corollary of the solution to the Mackey-Gleason problem (theorem \ref{thm:macgle}), we see that a state is immediately determined by its values in all classical contexts.
\begin{definition}[Quasi-State]\label{def:quasta} Let $\A$ be a von Neumann algebra. By a quasi-state on $\A$ we mean a map $\A\ra{\omega}\C$ such that its restriction to each commutative von Neumann subalgebra is a state and $\omega(a+ia')=\omega(a)+i\omega(a')$ for all self-adjoint $a,a'\in \A$.\end{definition}
\begin{theorem}[Gleason,\cite{bunmac}]\label{thm:glea} Let $\A$ be a von Neumann algebra without a type $I_2$ summand. Then any quasi-state on $\A$ is actually a state.\end{theorem}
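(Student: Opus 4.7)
The plan is to reduce the statement to the Mackey--Gleason theorem (Theorem \ref{thm:macgle}), which already produces a genuine state from any finitely additive measure on $\Pi(\A)$ when $\A$ has no type $I_2$ summand. So the strategy is: restrict the quasi-state $\omega$ to projections to obtain a finitely additive measure, invoke Mackey--Gleason to get a state $\tilde\omega$, and then argue that $\omega$ and $\tilde\omega$ must coincide on all of $\A$ using the defining properties of a quasi-state together with the spectral theorem.

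First, I would verify that $\omega\restriction_{\Pi(\A)}$ is a finitely additive measure in the sense of Theorem \ref{thm:macgle}. Normalisation $\mu_\omega(\top)=1$ follows because $\{0,1\}\subset \A$ sits inside some commutative Von Neumann subalgebra $C\subset\A$ on which $\omega$ is a state by hypothesis. Finite additivity reduces to the following: given orthogonal projections $p,q\in\Pi(\A)$ (i.e.\ $pq=0$), the three elements $p,q,p+q$ all lie in the commutative Von Neumann subalgebra $C(p,q)\subset\A$ that they generate, and on $C(p,q)$ the quasi-state $\omega$ is by assumption linear, so $\omega(p+q)=\omega(p)+\omega(q)$.

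By Theorem \ref{thm:macgle}, applied using the hypothesis that $\A$ has no type $I_2$ summand, this measure $\mu_\omega$ is the restriction to $\Pi(\A)$ of a unique state $\tilde\omega$ on $\A$. It remains to show $\omega=\tilde\omega$. For any self-adjoint $a\in\A$, let $C(a)$ be the commutative Von Neumann subalgebra it generates. On $C(a)$ both $\omega$ and $\tilde\omega$ are states (hence continuous linear functionals) that agree on all projections; since by the spectral theorem for Von Neumann algebras the projections in $C(a)$ are ultraweakly dense in $C(a)$ and their linear span is norm-dense in the self-adjoint part, we conclude $\omega(a)=\tilde\omega(a)$. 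Finally, an arbitrary $a\in\A$ decomposes as $a=a_1+ia_2$ with $a_j$ self-adjoint, and the defining complex-linearity clause $\omega(a_1+ia_2)=\omega(a_1)+i\omega(a_2)$ combined with the linearity of $\tilde\omega$ gives $\omega(a)=\tilde\omega(a)$ for all $a\in\A$. Hence $\omega=\tilde\omega$ is a state.

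The main obstacle in this scheme is of course concealed inside the invocation of Mackey--Gleason, which is the genuinely deep input and whose proof accounts for the type $I_2$ hypothesis. Once that is granted, the remaining work is essentially bookkeeping: the only subtle point is ensuring that $\omega$ and $\tilde\omega$ agree on all self-adjoint elements, but this is forced because a quasi-state is required to be a \emph{state} on each commutative Von Neumann subalgebra, so in particular on the abelian algebra generated by a single self-adjoint element, where spectral theory reduces everything to the values on projections.
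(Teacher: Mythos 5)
Your proof is correct and follows exactly the route the paper intends: the paper states Gleason's theorem without proof as a corollary of the Mackey--Gleason theorem (Theorem \ref{thm:macgle}), whose accompanying footnote already asserts the bijection between quasi-states and finitely additive measures on $\Pi(\A)$, and you have simply supplied the details of that reduction (restriction to projections, existence of the state $\tilde\omega$, and the spectral-theorem argument forcing $\omega=\tilde\omega$ on each singly generated commutative subalgebra). The one caveat is that you rightly read the additivity hypothesis of Theorem \ref{thm:macgle} as orthoadditivity ($pq=0$) rather than the literal ``$p\wedge q=\bot$'' printed there --- the literal reading would make the theorem false for genuine states, so your interpretation is the correct one and your verification via a common commutative subalgebra goes through.
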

Finally, we have a similar phenomenon at the logical level:
\begin{theorem}[Kochen-Specker (Logical version\footnote{To make this sound more like a statement about logic, the usual, weaker statement of this theorem is that there does not exist a map $\Pi(\A)\ra{}B$ into a non-trivial Boolean algebra $B$ that is a bounded Boolean-algebra homomorphism when restricted to each Boolean subalgebra of $\Pi(\A)$.}), \, easy corollary of results in \cite{doekoc}] \label{thm:koclog} Suppose $\A$ is a von Neumann algebra with no type $I_1$ or $I_2$ summands. Then there does not exist a map $\Pi(\A)\ra{}L$ into a non-trivial bounded distributive lattice $L$ that is a bounded lattice homomorphism when restricted to each Boolean subalgebra of $\Pi(\A)$.
\end{theorem}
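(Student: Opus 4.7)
The plan is to derive the logical Kochen--Specker theorem from the observable version (Theorem \ref{thm:kocobs}) in two reduction steps: first, use the prime ideal theorem for bounded distributive lattices to reduce from an arbitrary non-trivial distributive $L$ to $L=\{0,1\}$; second, use spectral theory together with Gelfand duality for commutative Von Neumann algebras (Theorem \ref{thm:hypsto}) to promote a hypothetical Boolean-valuation of all projections into a hypothetical $\C$-valuation of all observables that is a $*$-homomorphism on every commutative Von Neumann subalgebra.

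For the first step, suppose for contradiction that a map $\phi:\Pi(\A)\ra{}L$ exists with $L$ a non-trivial distributive lattice and $\phi$ a Boolean homomorphism on every Boolean subalgebra $B\subset \Pi(\A)$. Since $L$ is a bounded distributive lattice, it carries a prime filter $F$ (Birkhoff/Stone), yielding a bounded lattice homomorphism $\chi_F:L\ra{}\{0,1\}$. I claim $\psi:=\chi_F\circ\phi$ is a Boolean homomorphism on every $B\subset\Pi(\A)$. Indeed, on such a $B$ the image $\phi(B)\subset L$ is a Boolean subalgebra of $L$ (since $\phi|_B$ preserves meets, joins, $\bot$, $\top$ and complements), so for $p\in B$ the elements $\phi(p)$ and $\phi(p^\bot)$ satisfy $\phi(p)\vee\phi(p^\bot)=\top$ and $\phi(p)\wedge\phi(p^\bot)=\bot$ in $L$; applying $\chi_F$ and using that $\chi_F$ preserves finite meets, joins and the bounds, it follows that $\{\psi(p),\psi(p^\bot)\}=\{0,1\}$, i.e.\ complements are preserved. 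Thus it suffices to rule out any such $\psi:\Pi(\A)\ra{}\{0,1\}$.

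For the second step, given such a $\psi$, I build a $\C$-valuation $v:\A\ra{}\C$ as follows. For every commutative Von Neumann subalgebra $C\subset \A$, the restriction $\psi|_{\Pi(C)}$ is a Boolean homomorphism to $\{0,1\}$; by Theorem \ref{thm:hypsto} this corresponds to a point of the Gelfand spectrum of $C$, hence to a character $\chi_C:C\ra{}\C$, i.e.\ a $*$-homomorphism. For self-adjoint $a\in\A$ the subalgebra $\{a\}''$ is commutative, so define $v(a):=\chi_{\{a\}''}(a)$, and extend to general $a=a_1+ia_2$ by $\C$-linearity. Consistency on overlapping $C$'s is automatic: if $a\in C$ is self-adjoint then $\{a\}''\subset C$ and the characters $\chi_C, \chi_{\{a\}''}$ both encode $\psi$ on the same projections of $\{a\}''$, so they agree on $a$ by Gelfand duality. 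Hence $v|_C=\chi_C$, a $*$-homomorphism, contradicting Theorem \ref{thm:kocobs}.

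The main obstacle is the consistency step at the end of the second reduction: one has to verify that the character produced from $\psi$ on the little algebra $\{a\}''$ genuinely coincides with the restriction to $\{a\}''$ of the character produced on any larger commutative $C\supset\{a\}''$. This comes down to the naturality of the Gelfand correspondence for commutative Von Neumann algebras together with the observation that both characters are determined by the same data, namely the common restriction of $\psi$ to $\Pi(\{a\}'')$; this is routine once one has Theorem \ref{thm:hypsto} in hand, but it is the only non-formal point in the argument.
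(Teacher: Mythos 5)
Your argument is correct, and it reaches the same contradiction with Theorem \ref{thm:kocobs} as the paper does, but by a genuinely different pair of reductions. For the first step the paper does not pass directly to a prime filter on $L$: it embeds $L$ into its frame of ideals, applies double negation to land in the Boolean algebra of complemented elements, and then picks a point of that Boolean algebra. Your direct appeal to the prime ideal theorem for bounded distributive lattices is shorter and uses no more choice than the paper's route (both ultimately rest on BPI); both versions share the same mild imprecision about whether ``lattice homomorphism'' is required to preserve $\top$ and $\bot$, which is needed to guarantee that the two-valued map is normalised. For the second step the paper simply composes to obtain a two-valued finitely additive probability measure on $\Pi(\A)$ and cites Lemma 2.7 of \cite{doekoc} to extend it to a map $\A\ra{}\C$ that is a $*$-homomorphism on each commutative subalgebra; you instead reconstruct that extension by hand, turning the ultrafilter $\psi|_{\Pi(C)}$ into a point of the Stone spectrum of $\Pi(C)$, hence (by the identification of Stone and Gelfand spectra for commutative Von Neumann algebras) into a character of $C$, and gluing. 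This buys self-containedness, and your hypothesis (a Boolean homomorphism on each Boolean subalgebra) is genuinely stronger than the mere finite additivity that the cited lemma has to work with, which is why your elementary construction suffices. The one point you should make explicit in the consistency step is that two characters of a commutative Von Neumann algebra which agree on all projections agree everywhere because the linear span of the projections is norm-dense and characters are norm-continuous; ``by Gelfand duality'' is hiding exactly this fact.
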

\begin{proof}Suppose we do have such a map $\Pi(\A)\ra{q}L$. Then we have a bounded lattice homomorphism $L\ra{ay}\mathrm{Spec}(L)$ that embeds the lattice (as the principal ideals) in its frame of ideals. Since this is a frame, it has a natural structure of a (complete) Heyting algebra. The double-negation nucleus gives a frame homomorphism $\mathrm{Spec}(L)\ra{\neg\neg}\mathrm{Reg}(\mathrm{Spec}(L))=:B$ to the Boolean algebra $B$ of regular elements. Then pick a point $p$ of $B$ (a homomorphism to $\{\bot,\top\}$). We can do this since non-triviality of $L$ implies that $B$ is non-trivial and every non-trivial Boolean algebra has points (the points of its Stone space). Then $p\circ\neg\neg\circ ay\circ q$ defines a two-valued finitely additive probability measure on $\Pi(\A)$. According to Lemma 2.7 in \cite{doekoc} this extends to a unital *-homomorphism $\A\ra{}\C$, which we know cannot exist by theorem \ref{thm:kocobs}.\end{proof}

The motivating idea for topos quantum kinematics will therefore be that we investigate quantum systems only by probing them in these classical contexts. Recall that $\mathrm{Set}$ is the topos we use to describe classical mechanics. This suggests that the `quantum topos' might be related to $\mathrm{Set}$. Its objects should represent `things that we can probe by classical contexts to obtain something in $\mathrm{Set}$'. Phrased this way, the doctrine of classical contexts suggests that our topos should be a topos of certain presheaves\footnote{Indeed, one common way of interpreting the Yoneda lemma is as the statement that presheaves over a category $\mathcal{C}$ are entities modelled on $\mathcal{C}$, entities that we can get to know by probing them with objects of $\mathcal{C}$.} over some category of classical contexts.\\
\\
We have reached the point where the two approaches to topos quantum theory part ways. Butterfield, Isham and Doering model a quantum system by a (non-commutative) von Neumann algebra $\A$ and consider the category of presheaves over the poset $\V(\A)$ of commutative von Neumann subalgebras (ordered by inclusion) as their quantum topos\footnote{It should be noted that these authors have introduced many different approaches in their fairly broad programme of topos-theoretic descriptions of physics. We shall, however, restrict to their account of quantum mechanics that connects best with that of Heunen, Landsman and Spitters.}. Heunen, Landsman, Spitters et al., in contrast, start out with a general C*-algebra $\A$ and construct their quantum topos as the category of presheaves over $\mathcal{C}(\A)^{op}$, where $\mathcal{C}(\A)$ denotes the poset of commutative C*-subalgebras of $\A$\footnote{The reader might object that in this way the topos we use for quantum theory depends on the algebra $\A$ and therefore on the particular system we are considering. On the other hand, $\mathrm{Set}$ is used for \emph{all} classical mechanical systems. This is indeed a strange distinction. However, one might say that `\emph{the} quantum topos' should be the one where we take $\A$ to be the algebra of observables corresponding to all measurable quantities in the universe.}.

\subsubsection{Which subalgebras?}\label{sec:subalg}
We should immediately ask how much information we lose by passing from the non-commutative algebra of observables to the poset of its commutative subalgebras. In the case of von Neumann algebras, a partial answer has been given by Doering:
\begin{theorem}[\cite{doeabe}] Suppose $\A,\A'$ are von Neumann algebras without type $I_2$ summands\footnote{i.e. summands of the form $\mathcal{B}(\C^2)$.}. Then for each order-isomorphism $\V(\A)\ra{f}\V(\A')$ there exists a unique Jordan *-isomorphism\footnote{A $*$-algebra homomorphism that need only preserve the symmetric product, rather than the whole product.} $\A\ra{g}\A'$ such that for all $A\in\V(\A)$, $f(A)=g(A)$.
\end{theorem}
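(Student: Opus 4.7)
The plan is to reduce the problem to classical reconstruction theorems for projection lattices. Specifically, I would first show that an order-isomorphism $\V(\A) \ra{f} \V(\A')$ induces an orthoisomorphism of the projection lattices $\Pi(\A) \ra{\tilde f} \Pi(\A')$, and then invoke a Dye-type theorem to lift $\tilde f$ to a Jordan *-isomorphism $g : \A \ra{} \A'$ which by construction will satisfy $g(A) = f(A)$ for all $A \in \V(\A)$.

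First I would extract the projection lattice from the order-theoretic data of $\V(\A)$. Every non-trivial projection $p \in \A$ generates a $4$-element Boolean subalgebra $A_p := \{0, p, 1-p, 1\}$, and conversely every element of $\V(\A)$ isomorphic as a Von Neumann algebra to $\mathbb{C}^2$ is of this form. These subalgebras are precisely the atoms of $\V(\A)$ sitting immediately above the trivial subalgebra $\{0,1\}$, and since $A_p = A_{1-p}$ they are in bijection with \emph{unordered} pairs $\{p, 1-p\}$ of complementary projections. Because $f$ preserves the order, it restricts to a bijection on such atoms. To promote this pairing to an honest orthoisomorphism $\tilde f : \Pi(\A) \ra{} \Pi(\A')$, I would use that commutation of two projections $p,q$ is detected by the existence of a common upper bound of $A_p, A_q$ in $\V(\A)$, and that the order $p \leq q$ on commuting projections can then be read off from the internal Boolean structure of $A_p \vee A_q$, itself reconstructible from the downward-closed set of atomic subalgebras below it in $\V(\A)$.

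The second step is to apply Dye's theorem, in the form valid for Von Neumann algebras without type $I_2$ summands (and its extensions by Bunce--Wright and Hamhalter): any orthoisomorphism of projection lattices extends uniquely to a Jordan *-isomorphism $g : \A \ra{} \A'$. The compatibility $g(A) = f(A)$ then follows from spectral theory: each commutative Von Neumann subalgebra $A$ is generated as a Von Neumann algebra by its projections $\Pi(A)$ (bicommutant theorem combined with the abundance of projections), and by construction $\tilde f (\Pi(A)) = \Pi(f(A))$, whence $g(A) = \tilde f(\Pi(A))'' = \Pi(f(A))'' = f(A)$. Uniqueness of $g$ reduces to showing that any Jordan *-isomorphism satisfying $g(A) = f(A)$ must agree with $\tilde f$ on projections: such a $g$ maps each $A_p$ bijectively onto $A_{p'}$ (where $\{p', 1-p'\} = f(\{p, 1-p\})$), and order-preservation between non-complementary projections then pins down $g(p) \in \{p', 1-p'\}$ unambiguously; spectral generation extends this to all of $\A$.

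The main obstacle is the first step of reconstructing an \emph{oriented} orthoisomorphism of projection lattices from the bare order on $\V(\A)$: while the pairing $\{p, 1-p\}$ is immediate from $f$, disentangling $p$ from $1-p$ consistently and verifying that the resulting $\tilde f$ preserves order, meets and joins requires a careful case analysis of how the maximal commutative subalgebras are assembled out of their atomic subalgebras in $\V(\A)$. The exclusion of type $I_2$ summands enters precisely in the second step and is known to be necessary: on factors like $\mathcal{B}(\mathbb{C}^2)$ the projection lattice has too many symmetries (the rotations of the Bloch sphere) for a Dye-type theorem to produce a canonical Jordan lift.
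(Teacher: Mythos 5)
The paper itself does not prove this theorem --- it is quoted from D\"oring and Harding \cite{doeabe} --- but your two-step plan is exactly the strategy of that reference: first show that the poset $\V(\A)$ encodes the orthomodular lattice $\Pi(\A)$ (a von Neumann--algebraic instance of the Harding--Navara reconstruction of an orthomodular lattice from its poset of Boolean subalgebras), then lift the resulting orthoisomorphism $\Pi(\A)\ra{}\Pi(\A')$ to a Jordan $*$-isomorphism by Dye's theorem. Your identification of the atoms of $\V(\A)$ with unordered pairs $\{p,1-p\}$, the detection of commutativity by common upper bounds, and the derivation of $g(A)=f(A)$ from $A=\Pi(A)''$ together with normality of Jordan $*$-isomorphisms are all as in \cite{doeabe}.

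The genuine gap sits exactly where you flag the ``main obstacle'', and the mechanism you propose for closing it does not work as stated. If $p<q$ are commuting projections, the Boolean algebra $B$ they generate has eight elements with atoms $p$, $q-p$, $1-q$; the only data $\V(\A)$ gives you about $B$ in isolation is its poset of subalgebras, which is a three-element antichain between top and bottom, invariant under every permutation of the three atoms and carrying no trace of which element of each pair $\{r,1-r\}$ is the atom. Counting the atomic subalgebras below $A_p\vee A_q$ (three versus seven) only detects \emph{that} some comparability holds among $\{p,1-p\}$ and $\{q,1-q\}$, not \emph{which}. The actual disentanglement has to be performed inside sixteen-element Boolean subalgebras, where the two kinds of four-element subalgebras (generated by an atom versus by a sum of two atoms) are distinguished by the number of eight-element subalgebras containing them, followed by a global consistency argument; this is the real content of the reconstruction step and not a routine verification. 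Two further inaccuracies: the type $I_2$ hypothesis is not confined to the Dye step --- for $\A=M_2(\C)$ the poset $\V(\A)$ is a bare antichain of atoms over $\C 1$, so step one already fails --- and uniqueness genuinely breaks down for $\A\cong\C\oplus\C$, where the flip is a nontrivial Jordan $*$-automorphism inducing the identity on the two-element chain $\V(\A)$; \cite{doeabe} excludes this degenerate case explicitly.
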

One might expect that we do not preserve the commutator and hence, by Heisenberg's equation, the dynamics. Fortunately, we do not lose more than that\footnote{Of course, this theorem does not imply that we lose the commutator. In fact, we do remember whether two elements commute. It is not obvious, however, how to reconstruct the precise value of the commutator if elements do not commute.}! We can therefore hope to formulate a good theory of quantum kinematics in this topos framework.

If we start out with a general C*-algebra $\A$, however, one might suspect that the von Neumann subalgebras do not contain enough information. This can be seen from the following result of Heunen. Indeed, an arbitrary C*-algebra need not even have any projections.
\begin{theorem}[Essentially theorem 4 in \cite{heucha}\footnote{Heunen proves this for a von Neumann algebra $\A$. However, the same proof holds for a general C*-algebra.}]\label{thm:kalm} Let $\A$ be a C*-algebra. Then $\mathcal{C}(\A)$, together with the inclusion functions rather than just the order relation\footnote{This can also be described as the pair $(\mathcal{C}(\A),\uli{\A})$, where $\uli{\A}$ is the tautological commutative C*-algebra object in $\mathrm{Set}^{\mathcal{C}(\A)}$ (see description of covariant approach).}, contains the projection data $\Pi(\A)$. That is, on the one hand, the Boolean algebras of projections corresponding to elements of $\mathcal{C}(\A)$ assemble into the ordered set of projections (a lattice in the von Neumann case, generally non-distributive), using Kalmbach's
Bundle lemma:
$$\Pi(\A)\cong\varinjlim_{A\in\mathcal{C}(\A)}\Pi(A),\txt{(in $\mathrm{Pos}$)}$$
and, in the von Neumann case, we can retrieve each commutative von Neumann subalgebra of $\A$ (up to isomorphism) as the continuous functions on the Stone spectrum of a Boolean subalgebra of $\Pi(\A)$.
\end{theorem}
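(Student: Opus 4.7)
The claim splits naturally into two largely independent halves, and I would tackle them in the order presented.

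For the colimit statement, the plan is to exhibit the canonical map $\varphi\colon \varinjlim_{A\in\V(\A)}\Pi(A)\ra{}\Pi(\A)$ induced by the cocone of inclusions $\Pi(A)\hookrightarrow\Pi(\A)$ and to verify that it is an isomorphism in $\mathrm{Pos}$. Surjectivity is immediate: any projection $p\in\Pi(\A)$ lies in the commutative C*-subalgebra $C^*(1,p)\in\V(\A)$. The crucial observation, which is what makes Kalmbach's Bundle Lemma applicable here, is that the order relation $p\leq q$ in $\Pi(\A)$ (i.e.\ $pq=p$) automatically forces $qp=(pq)^*=p$, so that $p$ and $q$ commute and both lie in the commutative subalgebra $C^*(1,p,q)\in\V(\A)$. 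Consequently every order relation in $\Pi(\A)$ is already witnessed inside some single $\Pi(A)$. Combined with the fact that the inclusions $\Pi(A)\hookrightarrow\Pi(A')$ for $A\subset A'$ both preserve and reflect the order on projections, this shows that $\varphi$ is injective and that its set-theoretic inverse is monotone, so $\varphi$ is an iso of posets.

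For the reconstruction of each $A\in\V(\A)$, the strategy is to chain the three dualities already recorded in the excerpt. First, $\Pi(A)$ is a Boolean subalgebra of $\Pi(\A)$: it is a Boolean algebra because $A$ is a commutative Von Neumann algebra (by section \ref{sec:clprop}), and the inclusion $\Pi(A)\subset\Pi(\A)$ is order-theoretic since the projection order is intrinsic. Next, Stone duality associates to $\Pi(A)$ its Stone spectrum $\mathrm{Spec}(\Pi(A))$, and the theorem identifying $\mathrm{Spec}\circ\Pi$ with $\mathrm{Max}$ on Von Neumann algebras gives $\mathrm{Spec}(\Pi(A))\cong\mathrm{Max}(A)$. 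Finally, the commutative Gelfand-Naimark theorem~\ref{thm:comgelnai} yields $A\cong C(\mathrm{Max}(A),\C)\cong C(\mathrm{Spec}(\Pi(A)),\C)$, recovering $A$ up to $*$-isomorphism from the Boolean subalgebra $\Pi(A)\subset\Pi(\A)$.

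The hard part, I expect, lies entirely in the first half: one must be careful that the colimit is taken in $\mathrm{Pos}$ and not, say, in the category of orthomodular lattices or (hopelessly) in $\mathrm{BoolAlg}$, since $\Pi(\A)$ is in general non-distributive. In $\mathrm{Pos}$ the colimit is the quotient of the disjoint union $\coprod_{A}\Pi(A)$ by the equivalence relation generated by the comparison maps, equipped with the order generated by the pieces. The genuine content is then (i) that two projections that become equal in $\Pi(\A)$ are already identified in the colimit (they both live in a common $\Pi(C^*(1,p))$) and (ii) that the induced order on the colimit agrees with the order inherited from $\Pi(\A)$. Both reduce to the single observation that comparable projections commute; once this is noted the rest of the argument is essentially formal.
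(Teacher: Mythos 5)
Your proposal is correct, but note that the paper itself offers no proof of this statement: it is imported wholesale from \cite{heucha} (Theorem 4 there), with Kalmbach's Bundle Lemma named as the engine of the colimit claim. What you have written is therefore best read as a self-contained reconstruction, and it succeeds using exactly the ingredients the paper assembles elsewhere. Your key observation --- that $pq=p$ forces $qp=(pq)^*=p$, so comparable projections commute and lie in the finite-dimensional (hence Von Neumann) commutative subalgebra $C^*(1,p,q)\in\V(\A)$ --- is recorded verbatim in the paper when the order on $\Pi(\A)$ is first defined in section \ref{sec:bn}; in effect you re-prove the special case of Kalmbach's lemma needed here rather than citing it, which also sidesteps the cofinality comparison one would otherwise need between Boolean subalgebras of $\Pi(\A)$ and subalgebras of the form $\Pi(A)$. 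You are also right to isolate the only delicate point: a colimit in $\mathrm{Pos}$ is a quotiented disjoint union whose order is merely \emph{generated} by the pieces, and both the absence of extra identifications and the agreement of the generated order with that of $\Pi(\A)$ reduce to the fact that every instance of $p\leq q$ is already witnessed inside a single $\Pi(A)$. For the second half, your chain $A\cong C(\mathrm{Max}(A),\C)\cong C(\mathrm{Spec}(\Pi(A)),\C)$ is precisely the composite of Gelfand--Naimark (theorem \ref{thm:comgelnai}), Stone duality, and the identity $\mathrm{Spec}\circ\Pi=\mathrm{Max}$ that the paper displays at the end of section \ref{sec:clprop}, so nothing further is needed there.
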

Apart from showing that we can recover the Birkhoff-von Neumann quantum logic from our new topos framework of quantum logic, we also see that it is natural to consider the poset of all commutative C*-subalgebras when we are dealing with an arbitrary C*-algebra, since the von Neumann subalgebras simply do not contain enough information. A recent result by Nuiten leads us to believe that the C*-subalgebras actually contain a substantial amount of information about the C*-algebra we started out with. To state this result, we first note that our definition of $\mathcal{C}$ on objects extends to a functor

\[
\begin{tikzcd}[column sep=large, row sep=large]
\mathrm{CStar} \arrow[r, "\mathcal{C}"] & \mathrm{Pos}\\
\A \arrow[d, "h"'] \arrow[r, mapsto] & \mathcal{C}(\A) \arrow[d, "\mathcal{C}(h)"]\\
\A' \arrow[r, mapsto] & \mathcal{C}(\A'),
\end{tikzcd}
\]
where $\mathcal{C}(h)$ takes the direct image of subalgebras under $h$. This is again a C*-algebra by Theorem 4.1.9 in \cite{kadfun0}.
Let us adopt the convention that $0$ is not called a C*-algebra. Then we have the following. (This convention will also help avoid difficulties caused by the fact that $0$ has an empty Gelfand spectrum.)
\begin{theorem}[\cite{nuiboh}] The functor $\mathrm{CStar}\ra{\mathcal{C}}\mathrm{Pos}$ is faithful and reflects all isomorphisms.
\end{theorem}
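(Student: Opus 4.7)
The plan is to treat the two assertions separately, handling the reflection of isomorphisms as a warm-up and then tackling faithfulness, which is the harder half.

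For the reflection of isomorphisms, suppose $\mathcal{C}(h)$ is an order-isomorphism. To show $h$ is injective, suppose $h(c)=0$ for some nonzero self-adjoint $c\in\A$. Then $\mathcal{C}(h)(C^*(c,1))=C^*(0,1)=\C\cdot 1=\mathcal{C}(h)(\C\cdot 1)$, so injectivity of $\mathcal{C}(h)$ forces $C^*(c,1)=\C\cdot 1$; this makes $c$ a scalar, and then $h(c)=0$ yields $c=0$, a contradiction. Surjectivity of $h$ is easier: for any self-adjoint $b\in\A'$, $C^*(b,1)\in\mathcal{C}(\A')$ lies in the image of $\mathcal{C}(h)$, so $b\in h(A)$ for some $A\in\mathcal{C}(\A)$. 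Extending by $\C$-linearity, $h$ is a bijective unital $*$-homomorphism, hence a $*$-isomorphism.

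For faithfulness, let $h,h'\colon \A\to\A'$ satisfy $h(A)=h'(A)$ for every $A\in\mathcal{C}(\A)$; by $\C$-linearity it suffices to show $h(a)=h'(a)$ for each self-adjoint $a\in\A$. Fix such an $a$, put $A:=C^*(a,1)$ and $B:=h(A)=h'(A)$. The restrictions $h|_A,h'|_A\colon A\twoheadrightarrow B$ are surjective unital $*$-homomorphisms between unital commutative C*-algebras and hence, under Gelfand duality (Theorem \ref{thm:comgelnai}), correspond to closed embeddings $i,i'\colon \mathrm{Max}(B)\hookrightarrow\mathrm{Max}(A)=\sigma(a)$; it is enough to show $i=i'$. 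The key observation is that every closed equivalence relation on $\sigma(a)$ yields, via Gelfand, a subalgebra of $A$, and the hypothesis $h(A_0)=h'(A_0)$ on such a subalgebra $A_0$ says that the equivalence relation on $\mathrm{Max}(B)$ pulled back via $i$ coincides with the one pulled back via $i'$. Taking for $A_0$ the pair-identification subalgebra $\{f\in C(\sigma(a),\C):f(x_0)=f(x_1)\}$ attached to distinct points $x_0,x_1\in\sigma(a)$, and varying the pair, one obtains that for every distinct $y_0,y_1\in\mathrm{Max}(B)$ the unordered pairs $\{i(y_0),i(y_1)\}$ and $\{i'(y_0),i'(y_1)\}$ in $\sigma(a)$ coincide. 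A short combinatorial argument intersecting three such pair-equalities then forces $i(y)=i'(y)$ pointwise, provided $|\mathrm{Max}(B)|\geq 3$.

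The main obstacle is the degenerate cases $|\mathrm{Max}(B)|\in\{1,2\}$, where pair-identification subalgebras inside $A$ determine $i$ only up to a swap, or not at all: the issue is that once $h|_A$ and $h'|_A$ factor through a $1$- or $2$-dimensional commutative quotient, unital subalgebras of $A$ can no longer distinguish the two underlying Gelfand characters. The remedy is to abandon $A$ and pass to an enlarged commutative subalgebra $C^*(a,b,1)\supseteq A$ for some self-adjoint $b\in\A$ commuting with $a$ and chosen so that the common image $h(C^*(a,b,1))=h'(C^*(a,b,1))$ has spectrum of size at least $3$; the non-degenerate argument applied to $C^*(a,b,1)$ then recovers $h(a)=h'(a)$ as a component of the resulting equality of characters. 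Producing such a $b$, or else reducing to a trivially low-dimensional $\A$ where the statement is direct, is the genuine technical content, and it is here that one expects to exploit, following Nuiten, the richness of the commutant of a normal element in a non-trivial C*-algebra.
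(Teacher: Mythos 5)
The paper does not actually prove this statement --- it only cites Nuiten --- so there is no internal proof to measure you against; I can only assess the proposal on its own terms. Your argument for reflection of isomorphisms is essentially correct and complete: injectivity of $h$ follows from $C^*(c,1)$ being collapsed onto $\C\cdot 1$ (after the routine reduction to self-adjoint elements of the kernel), and surjectivity from every $C^*(b,1)$ being hit.

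The faithfulness half has a genuine gap, and it is not of the kind that can be closed by ``producing a suitable $b$''. Every auxiliary subalgebra your argument ever consults --- $C^*(a,1)$, the pair-identification algebras $\{f : f(x_0)=f(x_1)\}$, the proposed enlargement $C^*(a,b,1)$ --- contains the unit, so you are only testing $\mathcal{C}(h)$ against \emph{unital} commutative subalgebras. But unital subalgebras alone do not determine $h$: take $\A=C([0,1])$, $\A'=\C$, $h=\mathrm{ev}_0$, $h'=\mathrm{ev}_1$; then $h(C)=h'(C)=\C$ for \emph{every} unital commutative $C\leq\A$, yet $h\neq h'$. (Similarly, the identity and the coordinate flip on $\C^2$ agree on the poset of unital subalgebras.) In exactly these situations no commuting $b$ can force the image algebra to have a three-point spectrum --- the image is $\C$ no matter what you adjoin --- so the ``degenerate cases'' you defer are not technical residue but the crux, and your fallback claim that for low-dimensional $\A$ ``the statement is direct'' is the opposite of the truth for the unital reading. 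Any correct proof must bring in \emph{non-unital} commutative subalgebras: for instance the vanishing ideals $\{f\in C(\sigma(a),\C) : f|_K=0\}$, whose images under $h$ and $h'$ recover the fibres $i^{-1}(K)$ and $(i')^{-1}(K)$ directly and in particular dispose of the one- and two-point spectra that defeat your pair-identification algebras --- at the price of confronting the case where such an ideal is mapped to $0$, which is precisely the difficulty the paper's convention about $0$ is trying to sidestep. As written, your argument proves agreement of $h$ and $h'$ only on those $a$ for which some commutative subalgebra containing $a$ has image with at least three characters, and the remaining cases cannot be recovered inside your framework.
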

This shows that the poset of commutative C*-subalgebras should contain enough information to set up a formalism for physics, even if we are dealing with an arbitrary C*-algebra. These three theorems validate the choices of sites\footnote{Of course, one can still argue about Grothendieck topologies.} for the quantum toposes made by the two different approaches to quantum kinematics.\\
\\
In this paper, we shall assume that our quantum observables form a von Neumann algebra, perhaps by constructing a universal enveloping von Neumann algebra, if the reader would like to think of it that way. Therefore, it suffices to work with the poset of commutative von Neumann subalgebras in our description of the contravariant approach as well as in that of the covariant approach. This will enable us to associate an interesting logic to the physics.

\subsection{The Contravariant Approach}
As we will see, the more recent covariant approach to topos quantum theory begins by defining an internal commutative C*-algebra in its `quantum topos', starting from the data of the original non-commutative C*-algebra in $\mathrm{Set}$, and then proceeds along the lines of section \ref{sec:obsstapro}, setting up the whole framework (of observables, states and propositions) for kinematics both in terms of algebra and geometry, using internal versions of the duality theorems of this section. By contrast, the contravariant approach was less theory-driven and more ad hoc, skipping the stepping-stone of algebra and immediately proceeding to the construction of a geometry in the topos. In this section, we will discuss the basic definitions of this formalism.

Recall that, in this approach, we assume that we start out with a von Neumann algebra $\A$ of quantum observables. All of our quantum mechanics will take place in the topos $\mathrm{Set}^{\V(\A)^{op}}$.\\
\\
In the contravariant approach, the proposal is to replace the Hilbert space by the so-called \emph{spectral presheaf} as the fundamental object representing the geometry of the quantum system. To define it, we should note the following:
\begin{theorem}[Maximal ideals in a commutative C*-algebra, \cite{strmat}]\label{thm:maxsp} There is a one-to-one correspondence between maximal ideals in a commutative unital C*-algebra $\A$ and non-zero $*$-homomorphisms $\A\ra{}\C$.\end{theorem}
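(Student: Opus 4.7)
The plan is to exhibit mutually inverse maps between the set of (proper) maximal ideals of the commutative unital C*-algebra $\A$ and the set of nonzero ring $*$-homomorphisms $\A\ra{}\C$. One direction is essentially formal: any nonzero $*$-homomorphism $\phi:\A\ra{}\C$ must send $1$ to a nonzero idempotent in $\C$, so $\phi(1)=1$, $\phi$ is surjective, and $\ker\phi$ is an ideal of codimension one, hence maximal.

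For the reverse direction, I would first show that every maximal ideal $M$ of $\A$ is norm-closed. The closure $\overline{M}$ is again an ideal by continuity of the algebra operations, so by maximality $\overline{M}=M$ or $\overline{M}=\A$. The latter is impossible: it would force $M$ to contain some $m$ with $\norm{1-m}<1$, and then the Neumann series $\sum_{n\geq 0}(1-m)^n$ would produce an inverse of $m$ inside $\A$, contradicting properness of $M$. Thus $M$ is closed, and the Banach quotient $\A/M$ is a unital commutative Banach algebra which, by maximality of $M$, is a field.

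The crucial analytic input now is the Gelfand--Mazur theorem: any unital complex Banach algebra that is also a division ring is isometrically isomorphic to $\C$ (the standard proof applies Liouville's theorem to the resolvent $z\mapsto(z-a)^{-1}$). The composite $\phi_M:\A\ra{}\A/M\ra{\sim}\C$ is then a unital algebra homomorphism with kernel exactly $M$. To verify that $\phi_M$ preserves $*$, note that for self-adjoint $a\in\A$ the value $\phi_M(a)$ lies in the spectrum $\sigma(a)$, which is real since $a$ is self-adjoint in a C*-algebra; decomposing a general element as $a=h+ik$ with $h,k$ self-adjoint then gives $\phi_M(a^*)=\overline{\phi_M(a)}$ by $\C$-linearity.

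Finally, one verifies that the two constructions are mutually inverse: $\ker\phi_M=M$ holds by construction, while for any nonzero $*$-homomorphism $\phi$ the induced isomorphism $\A/\ker\phi\ra{\sim}\C$ coincides with the Gelfand--Mazur identification used to build $\phi_{\ker\phi}$, whence $\phi_{\ker\phi}=\phi$. The main obstacle is the pair of facts that $M$ is closed and that $\A/M\cong\C$; both rest essentially on the Banach-algebraic machinery of openness of the invertibles and spectral theory, and cannot be obtained from purely ring-theoretic considerations.
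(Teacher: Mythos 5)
Your proposal is correct and follows essentially the same route as the paper's own (much terser) proof sketch: the easy direction is that kernels of nonzero $*$-homomorphisms are maximal ideals, and the substantive direction rests on closedness of maximal ideals plus the Gelfand--Mazur theorem, with $*$-preservation recovered from the reality of the spectrum of self-adjoint elements. You have simply supplied the details the paper omits.
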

\begin{proof}[Proof (sketch)] Obviously, every such $*$-morphism has a maximal ideal as its kernel. The converse fact that every maximal ideal comes from a morphism to $\C$ is a direct consequence of the Gelfand-Mazur theorem.\end{proof}

\begin{definition}[Spectral presheaf] The spectral presheaf\footnote{I will adhere to the convention of underlining objects of presheaf categories, which is common in the literature on topos quantum logic.} $\uli{\Sigma}\in\mathrm{Set}^{\V(\A)^{op}}$ is defined as
\begin{enumerate}
\item On objects $A\in\V(\A)$, $\uli{\Sigma}(A):=\mathrm{Max}(A)$ is the Gelfand spectrum of $A$.
\item On morphisms $A\subset A'$, $\uli{\Sigma}(A\subset A'):=(\lambda\mapsto\lambda|_{A})$.
\end{enumerate}
\end{definition}
The idea is that this object represents the geometries associated with all the classical contexts bundled together. The construction in \cite{doetop2} then associates a kind of logical structure with this `geometry'. The intuition behind the construction seems to be that the spectral presheaf is a presheaf of Stone spaces\footnote{The Gelfand spectrum of a commutative von Neumann algebra is even hyperstonean. \cite{takthe}} and therefore its subpresheaves of clopen subsets should represent some sort of logic. This leads to the definition of an internal frame\footnote{It might be better to speak about a complete Heyting algebra here rather than a frame, because the interpretation will be mainly a logical one.} $\uli{\mathcal{P}_{cl}\Sigma}\subset \uli{\mathcal{P}\Sigma}$, such that $$\mathrm{Sub}_{cl}\uli{\Sigma}:=\Gamma(\uli{\mathcal{P}_{cl}\Sigma})\footnote{Here $\Gamma$ denotes the global sections functor.}\subset\mathrm{Sub}
(\uli{\Sigma})$$ consists precisely of the subpresheaves $\uli P$ of $\uli{\Sigma}$ such that $\uli P(A)$ is clopen in $\uli{\Sigma}(A)$ for all $A\in\V(\A)$.

In a recent paper, Wolters \cite{wolcom} introduced another internal frame\footnote{The * is used in the notation since we want to distinguish the corresponding internal locale $\uli{\Sigma^*}$ from the spectral presheaf $\uli{\Sigma}$.}$\uli{\OO\Sigma^*}\subset\uli{\p\Sigma}$ to make the role of the spectral presheaf as representing the quantum geometry in the topos more explicit. One should think of it as a realisation of the spectral presheaf as an internal locale. Although it might not have been entirely clear from \cite{wolcom}, these two subframes of $\uli{\p\Sigma}$ are closely related. The constructions amount to the following.\\
\\
Note that $\uli{y_A}\times\uli{\Sigma}\subset\uli{\Sigma}$ and therefore
\begin{align*}\uli{\p \Sigma}(A)&=\mathrm{Hom}(\uli{y_A}\times\uli\Sigma,\uli\Omega)\\
&\cong\mathrm{Sub}(\uli{y_A}\times\uli{\Sigma})\\
&\subset \mathrm{Sub}(\uli{\Sigma}).
\end{align*}
We define a pair of subfunctors
\begin{definition}[$\uli{\p_{cl}\Sigma}\subset \uli{\OO\Sigma^*} \subset \uli{\p \Sigma}$] We define  $$\uli{\p_{cl}\Sigma}(A):=\left\{ \uli{P}\subset \uli{y_A}\times \uli{\Sigma}\; | \; \textnormal{for all $A'\in\V(\A)$:}\quad\uli{P}(A')\textnormal{ is clopen (as a subset of $\uli{\Sigma}(A')$)}\right\}$$
and
$$\uli{\OO\Sigma^*}(A):=\left\{ \uli{P}\subset \uli{y_A}\times \uli{\Sigma}\; | \; \textnormal{for all $A'\in\V(\A)$:}\quad\uli{P}(A')\textnormal{ is open (as a subset of $\uli{\Sigma}(A')$)}\right\}.$$
\end{definition}
Since $\uli{\OO\Sigma^*}$ will play a very important role in later constructions, it is worth spelling out its definition. To make this more explicit, we define the following bundle of topological spaces.
\begin{definition}[(Contravariant) spectral bundle] Let $\Sigma^*$ be the topological space with underlying set $\{ (A,\lambda) | A\in \V(\A),\; \lambda\in \uli{\Sigma}(A)\, \}$ and opens $U\subset\Sigma^*$ such that, when we write $U_A:=U\cap\uli\Sigma(A)$,
\begin{enumerate}
\item $\forall A\in\V(\A): U_A\in\OO\uli{\Sigma}(A)$;
\item If $\lambda\in U_A$ and $A'\subset A$, then $\lambda|_{A'}\in U_{A'}$.
\end{enumerate}
Let us endow $\V(\A)$ with the anti-Alexandroff topology (consisting of the lower sets). Then it is straightforward to verify that the projection map
$$\Sigma^*\ra{\pi}\V(\A)$$
is continuous. We shall call this map the spectral bundle.
\end{definition}

\begin{theorem}[\cite{wolcom}]\label{thm:intfrm}$\uli{\OO\Sigma^*}$ is an internal frame in $\mathrm{Set}^{\V(\A)^{op}}$. We shall write $\uli{\Sigma^*}$ for it when we want to think of it as an internal locale.
\end{theorem}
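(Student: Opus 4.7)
The plan is to identify $\uli{\OO\Sigma^*}$ with the pushforward of the frame of opens of $\Sigma^*$ along the spectral bundle $\pi : \Sigma^* \to \V(\A)$, and then to reduce the claim to the standard fact that the direct image of a frame along a continuous map of topological spaces (equivalently, a morphism of locales) is an internal frame in the topos of sheaves on the base.

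First, I would invoke the equivalence $\mathrm{Set}^{\V(\A)^{op}} \simeq \mathrm{Sh}(\V(\A),\tau)$, where $\tau$ is the anti-Alexandroff topology of lower sets on $\V(\A)$ (the very topology placed on $\V(\A)$ in the definition of $\pi$). Under this equivalence, a presheaf $\uli{F}$ is recovered from its associated sheaf by evaluating on principal basic opens: $\uli{F}(A)=\tilde F(\downarrow\! A)$.

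Second, I would unpack the definition of $\uli{\OO\Sigma^*}(A)$ and check that it agrees with $\OO(\pi^{-1}(\downarrow\! A))$. Since $\uli{y_A}\times\uli\Sigma$ is, stagewise, $\uli{\Sigma}(A')$ for $A'\leq A$ and empty otherwise, an open subfunctor $\uli P\subset\uli{y_A}\times\uli\Sigma$ is exactly a collection of opens $P_{A'}\subset\uli\Sigma(A')$ indexed by $A'\leq A$, together with the naturality requirement that $\lambda\in P_{A'}$ and $A''\leq A'$ imply $\lambda|_{A''}\in P_{A''}$. Comparing with the definition of the topology on $\Sigma^*$, these are precisely the open subsets of $\pi^{-1}(\downarrow\! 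A)$. The presheaf restriction maps $\uli{\OO\Sigma^*}(A)\to\uli{\OO\Sigma^*}(A')$ for $A'\leq A$ correspond on the other side to the frame morphisms $\OO(\pi^{-1}(\downarrow\! A))\to\OO(\pi^{-1}(\downarrow\! A'))$ given by intersecting with $\pi^{-1}(\downarrow\! A')$; i.e.\ $\uli{\OO\Sigma^*}\cong\pi_*\OO_{\Sigma^*}$ as objects of $\mathrm{Sh}(\V(\A),\tau)$.

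Third, I would invoke the standard fact (cf.\ Joyal--Tierney, or Mac Lane--Moerdijk IX) that for any continuous map $\pi:X\to Y$, the direct image $\pi_*\OO_X$ is canonically an internal frame in $\mathrm{Sh}(Y)$: the externally defined pointwise binary meets, top element and arbitrary unions on each $\OO(\pi^{-1}(U))$ are preserved by the frame restrictions and, via the Kripke--Joyal semantics, give rise to the internal meet, top, and join operations. Distributivity is inherited stagewise from the external frames $\OO(\pi^{-1}(U))$. Applying this to our spectral bundle immediately yields that $\uli{\OO\Sigma^*}$ is an internal frame.

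The main obstacle is step three: while binary meets and the top are straightforward (they are preserved pointwise and their internal and external interpretations agree), the internal arbitrary join is subtle because in a sheaf topos the join of a subsheaf of the frame is not simply the pointwise union, but its sheafification. The verification that reduces this back to the ordinary union of opens in $\Sigma^*$ requires unwinding the Kripke--Joyal clause for the existential quantifier and using that a union of opens in a topological space is itself open. A careful reader would want this computation done in the internal language to confirm that the resulting frame operations satisfy the internal infinite distributive law, rather than only the external (stagewise) one.
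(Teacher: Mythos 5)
Your proposal is correct and takes essentially the same route as the paper: both identify $\uli{\OO\Sigma^*}(A)$ with $\OO(\pi^{-1}(\downarrow A))$, pass through the comparison-lemma equivalence $\mathrm{Set}^{\V(\A)^{op}}\simeq\mathrm{Sh}(\V(\A))$ for the anti-Alexandroff topology, and then reduce to the standard correspondence between internal locales in $\mathrm{Sh}(\V(\A))$ and (bundles of) locales over $\V(\A)$ --- your citation of the direct image $\pi_*\OO_{\Sigma^*}$ being an internal frame is just the relevant direction of the equivalence $\mathrm{Loc}(\mathrm{Sh}(X))\cong\mathrm{Loc}/X$ that the paper invokes from Johnstone C1.6.3. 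The only difference is presentational: you propose to verify the frame axioms via Kripke--Joyal where the paper cites the equivalence as a black box.
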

\begin{proof}It is almost tautological that
$$\uli{\OO\Sigma^*}(A)=\OO(\pi^{-1}(\downarrow A))=\OO (\Sigma^*|_{\downarrow A}). \qquad (*)$$
Now, we recall that by the comparison lemma (see e.g. \cite{johske2} C2.2.3) we have an equivalence of categories between the presheaves over a poset and the sheaves over that poset endowed with the anti-Alexandroff topology. In particular,

\[
\begin{tikzcd}[column sep=large, row sep=large]
\mathrm{Set}^{\V(\A)^{op}} \arrow[r, "\cong"] & \mathrm{Sh}(\V(\A))\\
\uli{P} \arrow[r, mapsto] & \oli{P}\footnote{Again, this overline is a notational convention in the topos quantum logic literature.},\; \oli{P}(\downarrow A)=\uli P (A)
\end{tikzcd}
\]
which restricts to an equivalence of categories between the categories of internal locales. Moreover, by theorem C1.6.3 in \cite{johske2}, we have an equivalence of categories

\[
\begin{tikzcd}[column sep=large, row sep=large]
\mathrm{Loc}(\mathrm{Sh}(X)) \arrow[r, "\cong"] & \mathrm{Loc}/X \\
\oli L \arrow[r, mapsto] & (\oli L(X)\ra{}X),
\end{tikzcd}
\]
where the total space of the bundle of locales corresponding to an internal locale $\oli L$ is $\oli L(X)$. We apply this with $X=\V(\A)$. In $(*)$, we have already found that $$\oli{\Sigma^*}(\V(\A))=\varinjlim_{A\in\V(\A)} \oli{\Sigma^*}(\downarrow A)=\varinjlim_{A\in\V(\A)} \uli{\Sigma^*}(A)=\Sigma^*\ra{\pi}\V(\A)$$ is a bundle of locales (in $\mathrm{Set}$). This makes $\oli{\Sigma^*}$ and therefore $\uli{\Sigma^*}$ into an internal locale.\end{proof}
Similarly, $\uli{\p_{cl}\Sigma}$ is a complete Heyting algebra. (We think of it this way rather than as a locale.)
\begin{theorem}$\uli{\p_{cl}\Sigma}$ is an internal complete Heyting algebra.\end{theorem}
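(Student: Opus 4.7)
My plan is to follow the pattern of theorem \ref{thm:intfrm} as closely as possible, with the key new ingredient coming from the hyperstonean property of Gelfand spectra of commutative Von Neumann algebras (theorem \ref{thm:hypsto}). First, by the same calculation as $(*)$ in the previous proof, an element of $\underline{\p_{cl}\Sigma}(A)$ corresponds to a subset $P \subseteq \Sigma^*|_{\downarrow A}$ that is closed under descent (if $\lambda \in P$ and $A'' \subseteq A'$, then $\lambda|_{A''} \in P$) and whose intersection with each fiber $\underline{\Sigma}(A')$ is clopen in $\underline{\Sigma}(A')$. This exhibits $\underline{\p_{cl}\Sigma}$ as a subobject of $\underline{\OO\Sigma^*}$.

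Next I would equip each section with a componentwise complete Boolean algebra structure. Since each $\underline{\Sigma}(A')$ is hyperstonean (so in particular extremally disconnected and compact Hausdorff), its clopen subsets form a complete Boolean algebra, with meets given by intersection, joins by closure-of-union (extremal disconnectedness keeping us inside the clopens), and negation by set-complement. These operations are patently preserved under restriction $A' \subseteq A$, since restriction in the anti-Alexandroff topology amounts to intersecting with a smaller down-set $\downarrow A'$, and the spectral bundle over $\downarrow A'$ is a clopen subspace of the one over $\downarrow A$.

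To conclude internal completeness I would reuse the mechanism of theorem \ref{thm:intfrm}: transport to $\mathrm{Sh}(\V(\A))$ via the comparison lemma and then exploit theorem C1.6.3 of \cite{johske2}. Internal meets and the internal Heyting implication can be inherited from $\underline{\OO\Sigma^*}$ (together with the fiberwise Boolean complement), while internal arbitrary joins must be computed differently: as fiberwise closures of unions. In particular, the inclusion $\underline{\p_{cl}\Sigma} \hookrightarrow \underline{\OO\Sigma^*}$ will preserve meets and Heyting implication but not joins.

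The main obstacle is verifying the internal infinite distributive law $U \wedge \bigvee V_i = \bigvee(U \wedge V_i)$, since internal joins are closures of unions rather than plain unions. This is precisely where hyperstoneanness is doing the work: in an extremally disconnected space, a clopen $U$ satisfies $U \cap \overline{X} = \overline{U \cap X}$ for every $X$, so for clopens $V_i$ one obtains $U \cap \overline{\bigcup V_i} = \overline{\bigcup(U \cap V_i)}$ fiberwise. After transporting across the comparison-lemma equivalence and checking naturality of these identities in $A$, this yields the internal frame distributivity, completing the verification that $\underline{\p_{cl}\Sigma}$ is an internal complete Heyting algebra.
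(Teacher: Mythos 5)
Your proof is correct and its skeleton coincides with the paper's: identify each component $\uli{\p_{cl}\Sigma}(A)$ with a frame, then globalise by the comparison-lemma/C1.6.3 mechanism already set up in theorem \ref{thm:intfrm}. The one genuine difference is where the componentwise completeness and distributivity come from. The paper gets them algebraically: by spectral theory $\p_{cl}(\uli\Sigma(A'))\cong\Pi(A')$, which is a complete Boolean algebra because $A'$ is a Von Neumann algebra, and the Heyting structure is then read off via lemma \ref{lem:models}. You get them topologically, from theorem \ref{thm:hypsto}: the spectrum is extremally disconnected, so closures of opens are clopen, joins are closures of unions, and the infinite distributive law can be checked by hand. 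The two inputs are of course the same fact seen through Gelfand duality, but your route has the merit of making the internal joins explicit (fiberwise closure of union, which one should note is still a subpresheaf because the restriction maps $\lambda\mapsto\lambda|_{A'}$ are continuous), whereas the paper's is shorter and defers everything to cited structure theory. Two small corrections to your write-up: the identity $U\cap\overline{X}=\overline{U\cap X}$ holds for \emph{any} clopen $U$ in \emph{any} topological space, so extremal disconnectedness is not doing the work there; it is needed one step earlier, to guarantee that $\overline{\bigcup V_i}$ is again clopen, i.e.\ that the joins exist in the clopen algebra at all. And your parenthetical claim that the inclusion $\uli{\p_{cl}\Sigma}\hookrightarrow\uli{\OO\Sigma^*}$ preserves the Heyting implication is unjustified (the implication is defined via a join over a sublattice with a different join, and the descent condition makes the two candidates genuinely different in general); fortunately nothing in your argument depends on it, since the existence of the implication already follows from completeness plus the distributive law.
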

\begin{proof}Note that, by spectral theory, the lattice of clopen subsets of $\uli\Sigma(A)$ is isomorphic to the frame of projectors. (See also theorem \ref{thm:propemb}.) Since $A$ is a von Neumann algebra, the frame of projectors is complete. Consequently, $\p_{cl}(\uli\Sigma(A))$ is a frame. The argument of theorem \ref{thm:intfrm} shows that $\uli{\p_{cl}\Sigma}$ is an internal frame. Finally, we invoke lemma \ref{lem:models} to conclude that it is an internal Heyting algebra.\end{proof}

One might wonder whether the internal locale $\uli\Sigma^*$ can be constructed from an internal commutative C*-algebra, as in the covariant approach. The answer turns out to be negative in all interesting cases, as was proved by Wolters.
\begin{theorem}[\cite{wolcom}] The internal locale $\uli{\Sigma^*}$ is compact. However, if $\A$ is such that $\V(\A)\neq\{\C\cdot 1\}$, then it fails to be regular. In particular, it is not the internal Gelfand spectrum of some internal commutative C*-algebra.\end{theorem}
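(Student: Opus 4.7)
The plan is to establish the three assertions using the external bundle picture from the proof of Theorem~\ref{thm:intfrm}, which identifies $\uli{\Sigma^*}$ with the locale map $\pi: \Sigma^* \to \V(\A)$.

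For internal compactness I would verify that every directed open cover of the top contains the top. At a stage $A_0$, a directed family $\{\uli{U_i}\}$ of subfunctors of $\uli{y_{A_0}}\times\uli{\Sigma}$ with full union restricts at $A_0$ to a directed open cover of the compact Hausdorff space $\mathrm{Max}(A_0)$, so by compactness plus directedness some $\uli{U_{i_0}}$ satisfies $\uli{U_{i_0}}(A_0) = \mathrm{Max}(A_0)$. The key input is that for every inclusion $A'\subset A_0$ of unital commutative C*-algebras, the Gelfand-dual restriction $\mathrm{Max}(A_0) \to \mathrm{Max}(A')$ is surjective (dense image by injectivity of the inclusion, closed image by compactness versus Hausdorffness). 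Together with functoriality of $\uli{U_{i_0}}$, this forces $\uli{U_{i_0}}(A') = \mathrm{Max}(A')$ for every $A' \le A_0$, whence $\uli{U_{i_0}}$ is the top at stage $A_0$.

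For failure of regularity when $\V(\A) \ne \{\C\cdot 1\}$, the obstruction is that the anti-Alexandroff topology on $\V(\A)$ is not $T_1$; concretely, opens in $\Sigma^*$ are forced to be closed under restriction to subalgebras, so every nonempty open contains $(\C\cdot 1,*)$, the unique point of the fiber over $\C\cdot 1$. As external witness I propose $U := \pi^{-1}(\downarrow A_0)$ for a choice of $A_0 \ne \C\cdot 1$ that does not exhaust $\V(\A)$ from below (easily arranged whenever $\V(\A) \ne \{\C\cdot 1\}$, since a non-commutative $\A$ has incomparable maximal commutative subalgebras). Suppose $V \prec U$ via some open $W$ with $V\cap W=\emptyset$ and $W\cup U=\Sigma^*$. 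Then $W \supseteq \Sigma^*\setminus U$, so downward closure of $W$ yields $(\C\cdot 1,*) \in W$; since $V$ is disjoint from $W$ it omits $(\C\cdot 1,*)$, and downward closure of $V$ then forces $V = \emptyset$. Hence $U$ is strictly larger than the join of the opens well-inside it, violating regularity.

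The final assertion follows from the internal Gelfand-Naimark theorem, which characterises internal Gelfand spectra of unital commutative C*-algebras as precisely the internally compact completely regular locales; since complete regularity implies regularity, and $\uli{\Sigma^*}$ has been shown to fail regularity, it cannot be such a spectrum. The main technical obstacle is transferring the regularity-failure argument from the external bundle to a genuine internal statement at a stage: the well-inside relation and the join in regularity must be interpreted via the Kripke-Joyal semantics of $\mathrm{Set}^{\V(\A)^{op}}$, and one needs to verify that the external witness really defeats the internal statement of regularity at an appropriate stage rather than only at the level of global sections.
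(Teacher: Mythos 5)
Your compactness argument is sound and is the standard one: fibrewise compactness of $\mathrm{Max}(A_0)$ together with directedness isolates a single $\uli{U_{i_0}}$ covering the top fibre, and surjectivity of the restriction maps $\mathrm{Max}(A_0)\to\mathrm{Max}(A')$ (which you justify correctly) combined with the downward-closure condition on opens of $\Sigma^*$ propagates this to all lower fibres; the concluding appeal to constructive Gelfand duality is also fine. (Note that the paper itself only cites \cite{wolcom} for this theorem, so there is no in-text proof to compare against.)

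The regularity half, however, has a genuine gap, and it is exactly the issue you flag at the end but do not resolve: your witness $U=\pi^{-1}(\downarrow A_0)$ refutes regularity of the \emph{external} locale $\Sigma^*$ but not of the \emph{internal} locale $\uli{\Sigma^*}$. Internally, the existential quantifier in ``$V\prec U$'' is interpreted locally over the base, so the subobject $\{V\,|\,V\prec U\}$ has sections at every stage $N\subseteq\V(\A)$, and its internal join at the global stage is the union of all these local sections. At the stage $N=\downarrow A_0$ your $U$ restricts to the top element of $\OO(\pi^{-1}(\downarrow A_0))$, which is well inside itself (take $W=\bot$); this single local section already contributes all of $U$ to the internal join, so $U=\bigvee\{V\,|\,V\prec U\}$ holds internally and nothing is refuted. (A secondary problem: an $A_0\neq\C\cdot 1$ with $\downarrow A_0\neq\V(\A)$ need not exist under the stated hypothesis, e.g.\ for $\A=\C^2$.) The repair is to choose a $U$ that is not pulled back from the base: fix $A_0\neq\C\cdot 1$, a proper nonempty open $D\subsetneq\mathrm{Max}(A_0)$, and set $U_{A_0}=D$ and $U_{A'}=\mathrm{Max}(A')$ for $A'\subsetneq A_0$. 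Since $\downarrow A_0$ is the smallest open neighbourhood of $A_0$, any cover arising in the local interpretation of $\exists W$ must contain the stage $\downarrow A_0$ itself; there, $W\cup U=\top$ forces $W\supseteq\mathrm{Max}(A_0)\setminus D\neq\emptyset$, hence by downward closure $W$ contains the unique point over $\C\cdot 1$, and any $V$ disjoint from $W$ misses that point and is therefore empty by downward closure. The surviving contributions to $\bigvee\{V\,|\,V\prec U\}$ come only from stages omitting $A_0$ and cannot recover $U_{A_0}=D$, so internal regularity genuinely fails at stage $\downarrow A_0$.
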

\quad\\
\\
We will see that the spectral presheaf and its associated internal locale are the fundamental objects representing the geometry in the contravariant approach. That is, we will define the propositions, observables and states in terms of them.

In particular, we will later see that, using a process called daseinisation, we can construct an embedding of sup-lattices

\[
\begin{tikzcd}[column sep=large]
\Pi(\A) \arrow[r, hook] & \Gamma(\uli{\p_{cl}\Sigma})=\mathrm{Sub}_{cl}(\uli{\Sigma}) \subset \Gamma(\uli{\OO\Sigma^*})=\mathrm{Sub}_{open}(\uli{\Sigma}).
\end{tikzcd}
\]
Note that (in general) the meet cannot be preserved as well, since $\Pi(\A)$ is not distributive as a lattice, while $\mathrm{Sub}_{cl}(\uli{\Sigma})$ is. We gain distributivity at the cost of the law of the excluded middle.

Moreover, it turns out that the construction of daseinisation naturally extends to self-adjoint elements of $\A$. Using this construction, observables take the form of arrows $\uli{\Sigma}\ra{}\uli\R^{\leftrightarrow}$, where $\uli\R^{\leftrightarrow}$ is a kind of real-number object in $\mathrm{Set}^{\V(\A)^{op}}$. We will have an embedding $\A_{sa}\longhookrightarrow \mathrm{Hom}(\uli{\Sigma},\uli\R^{\leftrightarrow})$.

Finally, one can also express quantum states in terms of the spectral presheaf. Given this geometry, one might try to mimic classical mechanics or Hilbert space quantum mechanics, where the pure states are represented by points of the geometry, and hope that the points of this internal locale give a good notion of state of our quantum system. It turns out, as was first noted by Butterfield and Isham, that this analogy cannot possibly hold. Indeed, they proved the following.
\begin{theorem}[Contravariant internal Kochen-Specker,\cite{buttop,doekoc}] Suppose $\A$ is a von Neumann algebra with no type $I_1$ or $I_2$ summands. Then the spectral presheaf $\uli\Sigma$ does not have any global sections.\end{theorem}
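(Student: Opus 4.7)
The plan is to reduce this internal statement to the ``ordinary'' observable form of the Kochen--Specker theorem, Theorem \ref{thm:kocobs}. Unwinding the definition of the spectral presheaf, a global section of $\uli{\Sigma}$ is exactly a family $\{\lambda_A\}_{A\in\V(\A)}$ with $\lambda_A\in\uli{\Sigma}(A)=\mathrm{Max}(A)$ subject to the naturality condition $\lambda_{A'}|_A=\lambda_A$ whenever $A\subset A'$. By Theorem \ref{thm:maxsp}, each $\lambda_A$ is a $*$-homomorphism $A\ra{}\C$.

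From such a family I would assemble a single map $\phi:\A\ra{}\C$ as follows. For a self-adjoint $a\in\A$, the commutative Von Neumann algebra $\langle a\rangle$ generated by $a$ lies in $\V(\A)$, so set $\phi(a):=\lambda_{\langle a\rangle}(a)$. Extend $\phi$ to arbitrary $a\in\A$ by $\C$-linearity via the unique decomposition $a=a_1+ia_2$ into self-adjoint real/imaginary parts $a_1=(a+a^*)/2$, $a_2=(a-a^*)/(2i)$. The point is that this $\phi$ restricts to $\lambda_A$, and therefore to a $*$-homomorphism, on every $A\in\V(\A)$: given $A$ and any self-adjoint $b\in A$ one has $\langle b\rangle\subset A$, and the compatibility of the global section yields
$$\phi(b)\;=\;\lambda_{\langle b\rangle}(b)\;=\;(\lambda_A|_{\langle b\rangle})(b)\;=\;\lambda_A(b).$$
Extending by $\C$-linearity shows $\phi|_A=\lambda_A$ on all of $A$, since $\lambda_A$ is itself $\C$-linear and respects the $a=a_1+ia_2$ decomposition.

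With this in hand, the existence of a global section produces a map $\A\ra{\phi}\C$ which restricts to a $*$-homomorphism on every commutative Von Neumann subalgebra, contradicting Theorem \ref{thm:kocobs} under the given hypothesis that $\A$ has no type $I_1$ or $I_2$ summands.

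The only subtlety is the well-definedness and consistency of $\phi$: one needs to check that the value $\phi(a)$ on a self-adjoint $a$ is independent of which commutative subalgebra containing $a$ we use to read it off, and that the $\C$-linear extension agrees with the restriction $\lambda_A$ on each $A$. Both follow immediately from naturality of the global section, so the hard part of the argument is entirely delegated to Theorem \ref{thm:kocobs}; the proposition is really just a translation of the classical Kochen--Specker obstruction into the language of presheaves.
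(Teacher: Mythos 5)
Your proposal is correct and follows exactly the route the paper takes: its proof simply declares the statement to be a reformulation of Theorem \ref{thm:kocobs} via the correspondence of Theorem \ref{thm:maxsp} between maximal ideals and $*$-homomorphisms to $\C$, and your argument is precisely that reformulation carried out in detail. The explicit assembly of $\phi$ from the compatible family $\{\lambda_A\}$ and the verification that $\phi|_A=\lambda_A$ supply the details the paper leaves implicit, with no gaps.
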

\begin{proof}This is just a reformulation of theorem \ref{thm:kocobs}. One only has to note the correspondence between maximal ideals in a commutative unital C*-algebra and non-zero $*$-homomorphisms to $\C$.\end{proof}
\begin{remark}In \cite{wolcom}, it is claimed that, as a consequence of this, the internal locale $\uli\Sigma^*$ has no points\footnote{That is, internal locale morphisms from the terminal internal locale $\Omega$. Using the equivalence $\mathrm{Loc}(\mathrm{Sh}(X))\cong \mathrm{Loc}/X$, one could equivalently say that $\Sigma^*\ra{}\V(\A)$ has no global sections \emph{in the sense of maps of locales}.}. However, I have trouble following the arguments that are given. For instance, it seems to be implicitly assumed that $\Sigma^*$ and $\V(\A)$ are sober. Indeed, Wolters assumes that the internal locale would have a point and tries to derive a contradiction by stating (after definition 2.1) that this would imply that \emph{the bundle of topological spaces} $\Sigma^*\ra{}\V(\A)$ has a section. At the same time, he goes to considerable trouble to obtain a partial result on the sobriety of $\Sigma^*$ later on in the paper (Lemma 2.26). Moreover, it is not hard to see that $\V(\A)$ is sober if and only if it is well-founded as a poset (every non-empty subset has a minimal element), which is certainly not true for all infinite-dimensional $\A$. (As an easy consequence, $\Sigma^*$ is also sober if and only if $\V(\A)$ is well-founded.) If this were the only problem, the proof would at least still hold for finite-dimensional $\A$. However, there seems to be a crucial mistake in the point-set topology that follows (at the top of page 14). (This seems to be a consequence of an attempt to mimic the proof of the corresponding theorem in the covariant approach, where the topology of the spectral bundle is defined `in reverse', because one replaces $\V(\A)$ by $\V(\A)^{op}$.) I have not yet found an alternative proof or counterexample (as these are necessarily rather complicated).\end{remark}

The construction of states in the contravariant approach is unfortunately somewhat more intricate, but still resembles the situation of classical kinematics quite closely if one takes the right point of view: states are represented by certain maps $\mathrm{Sub}_{cl}(\uli\Sigma)\ra{}\mathrm{Hom}_{\mathrm{Pos}}(\V(\A)^{op},[0,1])$\footnote{Originally \cite{doetop2} proposed two different notions of a state. As it turns out, however, both are special cases of the definition of a state we shall be using, which is due to \cite{wolcom}.}. Note that in this quantum logic, therefore, states indeed pair with propositions (elements of $\mathrm{Sub}_{cl}(\uli\Sigma)$) to yield truth values in $\mathrm{Hom}_{\mathrm{Pos}}(\V(\A)^{op},[0,1])$, which in turn give rise to maps $\uli 1\ra{}\uli\Omega$, truth values in a topos-theoretic sense.

\subsubsection{Propositions}
Our goal in this section is to construct an injection $\Pi(\A)\stackrel{\uli\delta}{\longhookrightarrow} \mathrm{Sub}_{cl}(\uli \Sigma)$. As we investigate $\A$ by performing experiments in all the classical contexts $A\in\V(\A)$, the intuition is that $\uli\Sigma$ should represent the idea of the `Gelfand spectrum of $\A$'. (We take the spectrum in each context.) Recall that for a commutative von Neumann algebra the Gelfand spectrum coincides with the Stone spectrum of the Boolean algebra of its self-adjoint idempotents. Therefore, one hopes that it can act as a sort of Stone space for quantum logic.\\
\\
Let us proceed with the construction. First, note that for $p\in A\in\V(\A)$, treating $A$ as an algebra of classical observables, it is reasonable from the point of view of our philosophy to set
$$\uli\delta(p)(A):=\{\lambda\in\uli\Sigma(A)\, | \, \lambda(p)=1\,\}.$$
What should $\uli\delta(p)(A)$ be for an $A$ that does not contain $p$, however? According to our philosophy, we should construct $\uli\delta(p)(A)$ out of the data that someone in the classical context $A$ has available about $p$. We should somehow try to approximate the information captured by $p$ as well as we can by information that is accessible to us from our classical point of view $A$. The key idea will be that we approximate $p$ by some projection in $A$. There are two natural ways of doing this: either we take
$$\delta^o(p)_A:=\bigwedge\{q\in\Pi(A)\, |\, q\geq p\, \},$$
the approximation from above, or we approximate from below:
$$\delta^i(p)_A:=\bigvee\{q\in\Pi(A)\, |\, q\leq p\, \}.$$
We would then like to set
$$\uli\delta(p)(A):=\{\lambda\in\uli\Sigma(A)\, | \, \lambda(\delta(p)_A)=1\,\}.$$
(Note that this agrees with our previous definition.) This should define a subfunctor of $\uli\Sigma$. If we take $\delta=\delta^o$, it does indeed, while this fails for $\delta=\delta^i$. This motivates our choice of $\uli\delta:=\uli\delta^o$\footnote{Since the whole situation is mirrored, we will be using the inner daseinisation in the covariant approach.}. These maps $\delta^o$ and $\uli\delta^o$ are both referred to as \emph{outer daseinisation}.

A physical interpretation of these approximation procedures is the following. The outer daseinisation of a Birkhoff-von Neumann quantum proposition represents, in some sense, its strongest consequence in our classical context, while the inner daseinisation would stand for the weakest antecedent in our classical context that would imply it.

We now have the following. (The proof is not too illuminating.)
\begin{theorem}[\cite{doephy}]\label{thm:propemb}
\[
\begin{tikzcd}[column sep=large, row sep=large]
\Pi(\A) \arrow[r, hook, "\uli\delta^o"] & \mathrm{Sub}_{cl}(\uli\Sigma)\\
p \arrow[r, mapsto] & \{\lambda\in\uli\Sigma(A)\, | \, \lambda(\delta^o(p)_A)=1\,\}
\end{tikzcd}
\]
defines an embedding\footnote{Note that an injective (finite) sup-preserving morphism is automatically an order embedding.} of complete sup-lattices into a complete distributive lattice (preserving $\leq,0,1,\vee$). Note that it does not preserve $\wedge$ in general either, since it is a map from a non-distributive lattice to a distributive one. However, $\mathrm{Sub}_{cl}(\uli\Sigma)$ does have small meets and
$$\uli \delta^o(p\wedge q)\leq \uli\delta^o(p)\wedge \uli\delta^o(q).$$
Moreover, $\uli\delta^o$ does not preserve the negation and $\mathrm{Sub}_{cl}(\uli\Sigma)$ does not in general satisfy the law of the excluded middle.\\
Finally, for each $A\in \V(\A)$, it restricts to an order isomorphism

\[
\begin{tikzcd}[column sep=large]
\Pi(A) \arrow[r, "\cong", "\uli\delta^o(A)"'] & \mathrm{Sub}_{cl}(\uli\Sigma(A)).
\end{tikzcd}
\]
\end{theorem}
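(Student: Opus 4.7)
The plan is to reduce the entire statement to Stone duality applied context by context. By Theorem~\ref{thm:hypsto}, for each commutative Von Neumann subalgebra $A\in\V(\A)$ the Gelfand spectrum $\uli\Sigma(A)=\mathrm{Max}(A)$ is hyperstonean and is precisely the Stone space of the Boolean algebra $\Pi(A)$. Stone duality therefore yields an order isomorphism $\Pi(A)\cong\mathrm{Clopen}(\uli\Sigma(A))$, $q\mapsto\{\lambda\in\uli\Sigma(A):\lambda(q)=1\}$. Since $\delta^o(p)_A=p$ whenever $p\in\Pi(A)\subseteq A$, the final claim of the theorem (that $\uli\delta^o$ restricts to a pointwise order isomorphism at each $A$) is literally this Stone correspondence. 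It also shows that $\uli\delta^o(p)(A)$ is clopen in $\uli\Sigma(A)$ for arbitrary $p\in\Pi(\A)$, being the image of the projection $\delta^o(p)_A\in\Pi(A)$ (which is itself well defined thanks to completeness of $\Pi(A)$ for a Von Neumann algebra).

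Next I would verify that $\uli\delta^o(p)$ is actually a subfunctor of $\uli\Sigma$. For $A\subseteq A'$ the set $\{q\in\Pi(A):q\geq p\}$ embeds into $\{q\in\Pi(A'):q\geq p\}$, hence $\delta^o(p)_{A'}\leq\delta^o(p)_A$ in $A'$; so any character $\lambda\in\uli\Sigma(A')$ with $\lambda(\delta^o(p)_{A'})=1$ also satisfies $\lambda(\delta^o(p)_A)=1$ and restricts into $\uli\delta^o(p)(A)$. This is precisely the step where the outer (rather than the inner) daseinisation is forced by the contravariance of $\uli\Sigma$, and failure of the analogous inequality for $\delta^i$ is why the inner prescription does not give a subfunctor.

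With these preliminaries the algebraic properties reduce to pointwise computations. Monotonicity and preservation of $0,1$ are immediate. For $\vee$ the pointwise identity $\delta^o(p\vee p')_A=\delta^o(p)_A\vee\delta^o(p')_A$ holds because both sides equal the minimum projection in $A$ dominating $p\vee p'$: the right-hand side dominates $p$ and $p'$ separately, while any $q\in\Pi(A)$ with $q\geq p\vee p'$ dominates each $\delta^o$-approximation. Since finite unions of clopens in a Stone space are clopen, binary joins in $\mathrm{Sub}_{cl}(\uli\Sigma)$ coincide with pointwise unions, so this pointwise identity delivers $\uli\delta^o(p\vee p')=\uli\delta^o(p)\vee\uli\delta^o(p')$. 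For the meet, the inequality $\delta^o(p)_A\wedge\delta^o(p')_A\geq p\wedge p'$ implies $\delta^o(p)_A\wedge\delta^o(p')_A\geq\delta^o(p\wedge p')_A$, which translates directly into the stated inequality. Injectivity is obtained by evaluating at the Von Neumann subalgebra $A_p$ generated by $1$ and $p$, whose projection lattice is $\{0,p,1-p,1\}$: there $\delta^o(p)_{A_p}=p$, and if $\uli\delta^o(p)=\uli\delta^o(p')$ the pointwise isomorphism $\uli\delta^o(A_p)$ forces $\delta^o(p')_{A_p}=p$, whence $p\geq p'$; swapping the roles of $p$ and $p'$ then gives $p=p'$.

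Finally, the failure of negation preservation drops out of the meet inequality: for $p\notin\{0,1\}$ choose a context $A'$ containing neither $p$ nor $1-p$ and incomparable with $A_p$; then $\delta^o(p)_{A'}=\delta^o(p^\bot)_{A'}=1$, so $\uli\delta^o(p)(A')\cap\uli\delta^o(p^\bot)(A')=\uli\Sigma(A')\neq\emptyset=\uli\delta^o(p\wedge p^\bot)(A')$, whereas the Heyting complement $\uli\delta^o(p)^\bot$ must satisfy $\uli\delta^o(p)^\bot\wedge\uli\delta^o(p)=\bot$; hence $\uli\delta^o(p^\bot)\neq\uli\delta^o(p)^\bot$. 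A parallel calculation produces a context where $\uli\delta^o(p)\vee\uli\delta^o(p)^\bot$ fails to be the top subfunctor. The main obstacle in the proof is the bookkeeping surrounding subfunctoriality — specifically, using the monotonicity of $\delta^o$ in $A$ against the contravariance of $\uli\Sigma$ to force the choice of outer over inner daseinisation — together with correctly identifying joins in $\mathrm{Sub}_{cl}(\uli\Sigma)$ as pointwise unions; once these are in place, everything else reduces cleanly to pointwise Stone duality.
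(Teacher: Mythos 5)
The paper does not actually prove this theorem --- it defers to \cite{doephy} with the remark that ``the proof is not too illuminating'' --- so there is nothing internal to compare against. Judged on its own, your argument is essentially correct and is the standard one: pointwise Stone duality $\Pi(A)\cong\mathrm{Sub}_{cl}(\uli\Sigma(A))$ in each context, together with the observation that $\delta^o(-)_A$ sends $p$ to the \emph{least} element of $\Pi(A)$ above $p$ (i.e.\ it is left adjoint to the inclusion $\Pi(A)\longhookrightarrow\Pi(\A)$), which yields join-preservation, the meet inequality, injectivity via the context $A_p$, and --- playing the monotonicity of $A\mapsto\delta^o(p)_A$ against the contravariance of $\uli\Sigma$ --- subfunctoriality and the reason $\delta^i$ fails.

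Two repairs are needed. First, in the negation counterexample the hypotheses you impose on $A'$ (containing neither $p$ nor $p^\bot$, incomparable with $A_p$) do not force $\delta^o(p)_{A'}=1$: such an $A'$ may still contain a projection strictly between $p$ and $1$. The correct choice is simply $A'=\C\cdot 1$, where $\Pi(A')=\{0,1\}$ and hence $\delta^o(p)_{A'}=\delta^o(p^\bot)_{A'}=1$ for any $0\neq p\neq 1$; your computation then goes through verbatim. The same context also settles the failure of the law of the excluded middle, which you only gesture at: since $\uli\delta^o(p)(\C\cdot 1)=\uli\Sigma(\C\cdot 1)$, any clopen subobject disjoint from $\uli\delta^o(p)$ must vanish at $\C\cdot 1$ and therefore, by subfunctoriality, everywhere, so $\neg\,\uli\delta^o(p)=\bot$ and $\uli\delta^o(p)\vee\neg\,\uli\delta^o(p)=\uli\delta^o(p)\neq\top$. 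Second, you only verify preservation of binary joins, whereas the statement advertises a morphism of \emph{complete} sup-lattices; this is routine but should be said: $\delta^o(-)_A$ preserves arbitrary joins because it is a left adjoint, and arbitrary joins in $\mathrm{Sub}_{cl}(\uli\Sigma)$ are still computed contextwise under the isomorphism with $\Pi(A)$ (the closure of a union of clopens in $\mathrm{Max}(A)$ is clopen, $\mathrm{Max}(A)$ being extremally disconnected by theorem \ref{thm:hypsto}, and realises the join of the corresponding projections). With these patches the proof is complete.
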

Summarising, we have embedded the Birkhoff-von Neumann quantum logic into a complete Heyting algebra, sacrificing the meet and negation and, with them, the law of the excluded middle to gain distributivity.

\subsubsection{Observables}
As we will see, quantum observables inject into $\mathrm{Hom}(\uli\Sigma,\uli \R^\leftrightarrow)$, where $\uli  \R^\leftrightarrow$ is some sort of object in $\mathrm{Set}^{\V(\A)^{op}}$ related to the real numbers. One might expect that quantum propositions, like classical propositions, then arise as the pullback of certain subobjects $\uli\Delta\subset\uli  \R^\leftrightarrow$ along observables $\uli\Sigma\ra{\uli\delta(a)}\uli  \R^\leftrightarrow$. This is indeed the approach advocated in \cite{doetop3} and \cite{doewha}. As far as I know, however, this approach has never been fully worked out.

Instead, the contravariant approach uses the ordinary Birkhoff-von Neumann quantum propositions\footnote{Recall that these, in turn, were constructed from an observable $a\in\A_{sa}$ and a Borel subset $\Delta\subset\R$}, but embedded in $\mathrm{Sub}_{cl}(\uli\Sigma)$, as in theorem \ref{thm:propemb}. The realisation of observables as maps $\uli\Sigma\ra{}\uli \R^\leftrightarrow$ therefore mostly serves to emphasise the role of $\uli\Sigma$ as a quantum analogue of the classical phase space. A second reason for presenting it here is that it originally inspired the covariant approach, where one constructs quantum propositions as in classical mechanics, as inverse images.\\
\\
The first task is to extend the definition of the outer and inner daseinisation from projections to all observables. This is done by extending the order on $\Pi(\A)$ to the so-called \emph{spectral order} $\leq_s$ on $\A_{sa}$. Let $a,a'\in\A_{sa}$ and let $(e_\lambda\in\Pi(\A))_{\lambda\in\R}$ and $(e_\lambda'\in\Pi(\A))_{\lambda\in\R}$ be their respective resolutions. Then we say that $a\leq_s a'$ if and only if $e_\lambda\geq e_\lambda'$ for all $\lambda\in\R$. This order turns the self-adjoint elements into a conditionally complete lattice\footnote{i.e. every set of elements that has an upper bound also has a join and every set of elements that has a lower bound has a meet.} \cite{olssel}. Note that this is not the usual order on $\A_{sa}$. Then, for $a\in\A_{sa}$ and $A\in\V(\A)$, we define the outer and inner daseinisations,
$$\delta^o(a)_A:=\bigwedge\{a'\in A_{sa}\, |\, a'\geq_s a\, \}\txt{and}$$
$$\delta^i(a)_A:=\bigvee\{a'\in A_{sa}\, |\, a'\leq_s a\, \}.$$
Effectively, we replace the projections in the spectral resolution of $a$ by their daseinisations.

Then, $\uli\R^{\leftrightarrow}$ is defined as the subpresheaf\footnote{$\mathrm{Hom}_{\mathrm{Pos}}(\downarrow (A\leq A'),\R)=\mathrm{Hom}_{\mathrm{Pos}}(\downarrow A',\R)\ra{}\mathrm{Hom}_{\mathrm{Pos}}(\downarrow A,\R),\; \mu\mapsto \mu|_{\downarrow A}$ and similarly for $\mathrm{Hom}_{\mathrm{Pos}}((\downarrow -)^{op},\R)$.} of $\mathrm{Hom}_{\mathrm{Pos}}(\downarrow -,\R)\times \mathrm{Hom}_{\mathrm{Pos}}((\downarrow -)^{op},\R)$, where $\uli\R^{\leftrightarrow}$ consists of the $(\mu,\nu)$ such that pointwise $\mu\leq \nu$. The \emph{contravariant daseinisation of observables} is defined to be the map

\[
\begin{tikzcd}[column sep=small, row sep=large]
\A_{sa} \arrow[r, hook, "\uli{\breve\delta}"] & \mathrm{Hom}(\uli\Sigma,\uli\R^{\leftrightarrow}) & \\
a \arrow[r, mapsto] & \Big( \uli\Sigma \xrightarrow{\uli{\breve\delta(a)}} \uli\R^{\leftrightarrow}\Big) &\\
& \lambda\in\uli\Sigma(A) \arrow[r, mapsto] & \big(A'\mapsto \lambda|_{A'}(\delta^i(a)_{A'}),\,A'\mapsto \lambda|_{A'}(\delta^o(a)_{A'})\big)_{A'\in\downarrow A}.
\end{tikzcd}
\]
It is not difficult to see that this is injective. See, for instance, \cite{doewha}.

Trying to mirror classical kinematics, one might hope that $\uli{\breve\delta(a)}$ would be a continuous map or a measurable map in some sense. Then, one would have a reasonable way of constructing propositions as inverse images of open or measurable subobjects of $\uli\R^{\leftrightarrow}$. Indeed, in the covariant approach, one can show that a similar construction helps realise observables as continuous maps or internal locales. In \cite{wolcom}, Wolters shows that if one replaces $\uli\Sigma$ by the internal locale $\uli\Sigma^*$ and if one chooses the most obvious frame of opens for $\uli\R^{\leftrightarrow}$, then the map defined above is not continuous.

\subsubsection{States}
In their original papers, Isham and Doering use two notions of a state that they call `pseudo-states' and `truth objects'. The definition of state we will use here, first introduced in \cite{doequa}, is a more general notion that incorporates both of these notions. It has the added advantage that it can also deal with mixed states and that it resembles the convention in the covariant approach more closely.
\begin{definition}[Measure on an internal lattice $L$] \label{def:intmeas} We call an order-preserving map $L\ra{\mu}[0,1]_l$\footnote{Here $[0,1]_l$ denotes the unit interval in the lower reals (in the topos).}  a (finitely additive) measure if
\begin{enumerate}
\item $\mu(\bot)=0$;
\item $\mu(\top)=1$;
\item $\mu(x)+\mu(y)=\mu(x\wedge y)+\mu(x\vee y)$.
\end{enumerate}
\end{definition}
\begin{definition}[Contravariant state] States in the contravariant approach are represented by internal measures on the spectral presheaf, natural transformations
$$\uli{\p_{cl}\Sigma}\ra{\uli\mu}\uli{[0,1]_l}.$$ They are uniquely determined by $\mu=\Gamma\uli\mu$. They are then characterised as functions
$$\Gamma(\uli{\p_{cl}\Sigma})=\mathrm{Sub}_{cl}(\uli\Sigma)\ra{\mu}
\mathrm{Hom}_{\mathrm{Pos}}(\V(\A)^{op},[0,1])=\Gamma(\uli{[0,1]_l})\footnote{Explicitly $\uli{[0,1]_l}(A)=\mathrm{Hom}_{\mathrm{Pos}}((\downarrow A)^{op},[0,1])$.}$$
such that for every $A\in\V(\A)$ and for every $\uli S_1,\uli S_2\in\mathrm{Sub}_{cl}(\uli\Sigma)$
\begin{enumerate}\item $\mu(\emptyset)(A)=0$;
\item $\mu(\uli\Sigma)(A)=1$;
\item if $\uli S_1\leq\uli S_2$, then $\mu(\uli S_1)(A)\leq\mu(\uli S_2)(A)$;
\item $\mu(\uli S_1)(A)+\mu(\uli S_2)(A)=\mu(\uli S_1 \wedge \uli S_2)(A)+\mu(\uli S_1 \vee \uli S_2)(A)$;
\item $\mu(\uli S)(A)$ only depends on $\uli S_A$.\end{enumerate}

\end{definition}

The first thing we should check is that this definition of a state does indeed have some relation to the conventional notion of a quantum state. We have the following.
\begin{theorem}[\cite{doequa}]\label{thm:contravsta} It is easy to check that a quasi-state $\A\ra{\omega}\C$ defines a measure $\mu_\omega$ on the spectral presheaf by
$$\mu_\omega(\uli S)(A):=\omega\left(\uli\delta^o(A)^{-1}(\uli S(A))\right),$$
where $\uli\delta^o(A)$ is the isomorphism $\Pi(A)\ra{}\mathrm{Sub}_{cl}(\uli\Sigma(A))$ of theorem \ref{thm:propemb}. Less trivially, it can be proved that this defines a bijection between quasi-states on $\A$ and measures on $\uli{\p_{cl}\Sigma}$.

Therefore, by theorem \ref{thm:glea}, if $\A$ has no summands of type $I_2$, the injection of states into measures on the spectral presheaf is also surjective.\end{theorem}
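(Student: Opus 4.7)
The plan is to establish the bijection $\omega\mapsto\mu_\omega$ in two halves and then deduce the restriction to states from Gleason's theorem. For the forward direction, fix $A\in\V(\A)$. Since $A$ is commutative it trivially has no type $I_2$ summand, so the Mackey--Gleason theorem~\ref{thm:macgle} turns the restricted state $\omega|_A$ into a finitely additive measure $\nu_A$ on $\Pi(A)$. Transporting along the order isomorphism $\uli\delta^o(A)\colon\Pi(A)\cong\mathrm{Sub}_{cl}(\uli\Sigma(A))$ of Theorem~\ref{thm:propemb} produces a finitely additive measure on $\mathrm{Sub}_{cl}(\uli\Sigma(A))$, which is by construction $\uli S\mapsto\mu_\omega(\uli S)(A)$. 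Properties~(1)--(3) of Definition~\ref{def:intmeas} then follow levelwise. Monotonicity of $A\mapsto\mu_\omega(\uli S)(A)$ on $\V(\A)^{op}$ reduces to the elementary observation that $\delta^o(p)_{A'}\le\delta^o(p)_A$ whenever $A\subset A'$ (infimum over a larger set), combined with positivity of $\omega$.

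For the inverse direction, given a measure $\mu$, define $\nu_A\colon\Pi(A)\to[0,1]$ by $\nu_A(p):=\mu(\uli\delta^o(p))(A)$. Finite additivity uses property~(2) combined with property~(3) and the fact that $\uli\delta^o(A)$ is a lattice isomorphism on $\Pi(A)$: the inequality in Theorem~\ref{thm:propemb} is a defect only between distinct contexts, so at the single component $A$ one genuinely has $\uli\delta^o(p\wedge q)(A)=\uli\delta^o(p)(A)\wedge\uli\delta^o(q)(A)$, and similarly for $\vee$. Mackey--Gleason then promotes $\nu_A$ to a state $\omega_A$ on $A$. The candidate quasi-state is defined by $\omega(a):=\omega_{A_a}(a)$ for any commutative Von Neumann subalgebra $A_a$ containing the self-adjoint element $a$, extended $\C$-linearly through $\omega(a+ia')=\omega(a)+i\omega(a')$ as required by Definition~\ref{def:quasta}.

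The main obstacle is the consistency check that this candidate is well defined: for $A\subset A'$ and $p\in\Pi(A)\subset\Pi(A')$ one must show $\omega_{A'}(p)=\omega_A(p)$, i.e.\ $\mu(\uli\delta^o(p))(A)=\mu(\uli\delta^o(p))(A')$. Because $\delta^o(p)_{A''}=p$ at every context $A''$ containing $p$, property~(3) pins these two values down to the components $\uli\delta^o(p)(A)$ and $\uli\delta^o(p)(A')$ respectively, and these are intertwined by the restriction map of the spectral presheaf together with the naturality of $\uli\mu$ as an internal morphism $\uli{\p_{cl}\Sigma}\to\uli{[0,1]_l}$, forcing both values to agree. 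Once compatibility holds, the $\omega_A$'s assemble into a quasi-state, and the round trip $\mu\mapsto\omega_\mu\mapsto\mu_{\omega_\mu}$ agrees with $\mu$ levelwise by inspection, giving the claimed bijection. The final assertion is then immediate from Gleason's theorem~\ref{thm:glea}: when $\A$ has no type $I_2$ summand every quasi-state is in fact a state, so the bijection restricts to one between states and internal measures on $\uli{\p_{cl}\Sigma}$.
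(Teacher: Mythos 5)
The paper itself does not prove this theorem (it defers entirely to \cite{doequa}), so your argument has to stand on its own. The architecture --- transport along the levelwise isomorphisms $\uli\delta^o(A)$, use the commutative case of Mackey--Gleason in each context, assemble a quasi-state, check the round trips, finish with Gleason --- is the right one, and most individual steps are fine. But the step you yourself flag as ``the main obstacle'', namely the well-definedness $\mu(\uli\delta^o(p))(A)=\mu(\uli\delta^o(p))(A')$ for $p\in\Pi(A)\cap\Pi(A')$, is asserted rather than proved. Naturality of $\uli\mu$ and property (3) only say that the value of $\mu(\uli S)$ \emph{at} a context $C$ is determined by the component $\uli S(C)$, and that $C\mapsto\mu(\uli S)(C)$ is antitone; for $B:=\mathbb{C}p+\mathbb{C}(1-p)\subset A$ this yields only the inequality $\mu(\uli\delta^o(p))(A)\le\mu(\uli\delta^o(p))(B)$. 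The quantities $\mu(\uli\delta^o(p))(A)$ and $\mu(\uli\delta^o(p))(A')$ are values at genuinely different, generally incomparable, contexts, and the restriction maps of $\uli{[0,1]_l}$ never compare them, so ``forcing both values to agree'' does not follow from what you have written.

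The missing idea is the complement trick. For every context $C$ containing $p$ one has $\uli\delta^o(p)(C)\cap\uli\delta^o(1-p)(C)=\emptyset$ and $\uli\delta^o(p)(C)\cup\uli\delta^o(1-p)(C)=\uli\Sigma(C)$, so properties (2), (3) and (1) give $\mu(\uli\delta^o(p))(C)+\mu(\uli\delta^o(1-p))(C)=\mu(\bot)(C)+1$. Granting $\mu(\bot)=0$ (which, incidentally, does not follow from the three axioms as transcribed in the paper --- the constant function $1$ satisfies all of them --- and has to be added or imported from \cite{doequa}), the two antitone functions $C\mapsto\mu(\uli\delta^o(p))(C)$ and $C\mapsto\mu(\uli\delta^o(1-p))(C)$ sum to the constant $1$ on the up-set of $B$, hence each is constant there; in particular the values at $A$, $A'$ and $B$ all coincide. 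With this inserted, your construction of $\omega_\mu$ and the round-trip verifications go through. A minor imprecision elsewhere: monotonicity of $\mu_\omega(\uli S)$ for a \emph{general} clopen subobject does not reduce to $\delta^o(p)_{A'}\le\delta^o(p)_A$ for a single $p$; rather, the subpresheaf condition on $\uli S$ directly forces $p_{A'}\le p_A$ in $\Pi(A')$ for the projections $p_C:=\uli\delta^o(C)^{-1}(\uli S(C))$, after which positivity of $\omega$ applies.
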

Of course, given that our topos framework was motivated by theorems \ref{thm:kocobs}, \ref{thm:glea} and \ref{thm:koclog}, we could never really hope that it would work out for $\A$ with type $I_2$ summands. This result therefore tells us that measures on the spectral presheaf are precisely the right notion of state in the contravariant approach.

Again, the reader should note that this notion of state is the general notion of a positive normalised functional on $\A$. One can also characterise the states used by physicists, those that come from trace-class operators, in this framework. These are precisely those measures that are in some sense $\sigma$-additive, rather than just finitely additive. The details can be found in \cite{doequa}.

\subsubsection{State-Proposition Pairing}
Recall that one of the issues with Birkhoff-von Neumann quantum logic was that it does not have a satisfactory state-proposition pairing. By Theorem \ref{thm:koclog}, quantum logic does not have any non-trivial Boolean-valued models, not even of the weaker kind where we demand that the logical structure be respected only on Boolean subalgebras of the quantum logic. In Theorem \ref{thm:propemb} we already found that the contravariant topos quantum logic provides some consolation: outer daseinisation defines a Heyting-valued model of this weaker kind.

However, this is not yet exactly what we would hope for. In classical mechanics, each pure state defines a homomorphism from the associated logic to the two-valued Boolean algebra. One might hope that pure states also define homomorphisms to some Heyting algebra of subterminals in this topos framework. This turns out to be too much to hope for. Although we do have a natural map from $\Pi(\A)$ to this Heyting algebra, it is not a homomorphism, not even for pure states. This again indicates a non-deterministic nature of quantum mechanics. In this respect both pure and mixed quantum states behave like mixed states in classical mechanics. The situation is as follows.\\
\\
We note that a quantum state $\omega$ represented by a measure $\mu_\omega$ on the spectral presheaf determines a map

\[
\begin{tikzcd}[column sep=small, row sep=large]
\Pi(\A) \arrow[r, hook, "\uli\delta^o"] & \mathrm{Sub}_{cl}(\uli\Sigma) \arrow[r, "\mu_\omega"] & \mathrm{Hom}_{\mathrm{Pos}}(\V(\A)^{op},[0,1])\\
{\lbrack a\in \Delta\rbrack} \arrow[r, mapsto] & \uli\delta^o([a\in\Delta]) \arrow[r, mapsto] & \Big(A\mapsto \mu_\omega(\uli\delta^o([a\in\Delta])(A))=\omega(\delta^o([a\in\Delta])_A) \Big).
\end{tikzcd}
\]
According to the Born interpretation, we thus assign the probabilities with which the approximation of $[a\in\Delta]$ is true in each classical context. If we do not want to deal with probabilities, one might now apply the map that forgets all the probabilistic information and sends every probability other than $1$ to $0$. This defines a map\footnote{Again, this comes from a map

\[
\begin{tikzcd}[ampersand replacement=\&, column sep=large, row sep=large]
\uli{[0,1]}_l \arrow[r, "{\scriptstyle\uli{\mathrm{Dich}}}"] \& \uli\Omega\\
\mathrm{Hom}_{\mathrm{Pos}}((\downarrow A)^{op},[0,1])=\uli{[0,1]}_l(A) \arrow[r, "{\scriptstyle\uli{\mathrm{Dich}}_A}"] \& \uli\Omega(A)=\{S\subseteq\downarrow A\, | \, S\textnormal{ is a downset}\}\\
\nu \arrow[r, mapsto] \& \{A'\in \downarrow A\, | \, \nu(A')=1 \, \}.
\end{tikzcd}
\]}
%% dichotomy

\[
\begin{tikzcd}[column sep=large, row sep=large]
\mathrm{Hom}_{\mathrm{Pos}}(\V(\A)^{op},[0,1]) =\Gamma(\uli{[0,1]}_l) \arrow[r, "{\scriptstyle\Gamma\uli{\mathrm{Dich}}}"] & \begin{gathered}\Gamma\uli\Omega\cong\mathrm{Sub}\uli 1\\\cong \{S\subset \V(\A)\, | \, S\textnormal{ is a downset}\, \}\end{gathered}\\
\nu \arrow[r, mapsto] & \{A\in \V(\A)\, | \, \nu(A)=1 \, \}.
\end{tikzcd}
\]
We conclude that we have found a map\footnote{The reader might have been expecting this to be defined as a map $\mathrm{Sub}_{cl}\uli\Sigma\ra{}\mathrm{Sub}\uli 1$, hoping that this would be a lattice (or even Heyting) homomorphism for pure states $\omega$. (The pure states define homomorphisms to the truth values in classical kinematics.) Unfortunately, this does not work out. As one can check, this map indeed preserves $\wedge$, but not $\vee$. This follows since, if it did, we would have found a bounded lattice homomorphism from $\Pi(\A)$ to the distributive lattice $\mathrm{Sub}\uli 1$, which we know cannot exist by theorem \ref{thm:koclog}. (Indeed, $\uli\delta^o$ preserves $\vee$.)}

\[
\begin{tikzcd}[column sep=large, row sep=large]
\Pi(\A) \arrow[r, "\mathrm{truth}_{\omega}"] & \Gamma\uli\Omega=\mathrm{Sub}(\uli 1) \\
p \arrow[r, mapsto] & \Gamma\uli{\mathrm{Dich}}(\mu_\omega(\uli\delta^o(p))).
\end{tikzcd}
\]
\subsubsection{Interpretation of Truth}
Two questions that spring to mind are:
\begin{enumerate}
\item How much information can we infer about $\omega$ from knowing $\mathrm{truth}_\omega$?
\item How should we interpret these truth values?
\end{enumerate}
It is not difficult to see that $\mathrm{truth}_\omega$ contains the same information as the support of $\omega$ (all the usual senses are equivalent: as a linear functional on $\A$, as a measure, or as an operator on $\h$). Therefore, in this respect these `nuanced' truth values are no better than those of the Birkhoff-von Neumann logic. (Recall that there, a state defined a truth value $\Pi(\A)\ra{\mathrm{truth}_\omega}\{0,1\}, p\mapsto \mathrm{Dich}(\omega(p))$.) The difference is therefore to be sought in the truth value of one particular proposition.

By definition\footnote{We will occasionally denote the values this map takes at each context by $0$ and $1$, rather than by $\emptyset$ and $\{*\}$, respectively.}, $\mathrm{truth}_\omega(p)(A)=1$ iff $\omega(\delta^o(p)_A)=1$, i.e. iff the proposition $\delta^o(p)_A$ is true with certainty in state $\omega$ (in the Birkhoff-von Neumann sense). Since $\delta^o(p)_A$ is the strongest logical consequence of $p$ that we can measure in our classical context $A$, we can immediately say that $p$ is true in state $\omega$ if and only if $\mathrm{truth}_\omega(p)(A)=1$ for all $A$. One should therefore think of these propositions with falsification in mind.

Indeed, truth of $\delta^o(p)_A$ guarantees nothing about the truth of $p$. However, assume that the proposition $p$ is not true in state $\omega$ and we want to exhibit this with the restriction that we are only allowed to perform measurements in classical context $A$ (because we want to avoid disturbing our system too much, say). Then our best option is to examine $\delta^o(p)_A$. If that turns out not to be true, neither was $p$. The way to interpret $\mathrm{truth}_\omega(p)$, therefore, is that it has a value $0$ at context $A$ if and only if we can falsify $p$ by measuring observables from $A$. Since the logical operations on the subobject lattice are computed pointwise, the combinations $\uli\delta^o(p)\wedge\uli\delta^o(p')$ and $\uli\delta^o(p)\vee\uli\delta^o(p')$ are formed inside the Boolean logic of the single context $A$. In particular, note that there is no ambiguity about which of the two experiments should be performed first, since the propositions commute.

\subsubsection{Summary}
Summarising, we have set up a framework for quantum kinematics in the topos $\mathrm{Set}^{\V(\A)^{op}}$. This was based on the definition of the spectral presheaf $\uli{\Sigma}$, or the corresponding internal locale $\uli\Sigma^*$. In terms of this `geometry'\footnote{Also, recall that there was no corresponding algebra in the topos, as $\uli\Sigma^*$ failed to be regular and could therefore not be interpreted as the spectrum of an internal C*-algebra.}, we defined generalised sets of observables, states and propositions, in which the corresponding objects of ordinary quantum mechanics were embedded. This went as follows.\\
\\
Observables:

\[
\begin{tikzcd}[column sep=large]
\A_{sa} \arrow[r, hook, "\uli{\breve\delta}"] & \mathrm{Hom}(\uli\Sigma,\uli\R^{\leftrightarrow}),
\end{tikzcd}
\]
(also, recall that this failed to be a map into $C(\uli{\Sigma^*},\uli{\mathbb{IR}})$),
states:

\[
\begin{tikzcd}[column sep=large]
\mathscr{S}(\A) \arrow[r, hook] & \mathscr{S}_{quasi}(\A) \arrow[r, phantom, "\cong"] & \textnormal{``finitely additive measures''}(\uli{\p_{cl}\Sigma}),
\end{tikzcd}
\]
propositions:

\[
\begin{tikzcd}[column sep=large]
\Pi(\A) \arrow[r, hook, "\uli\delta^o"] & \mathrm{Sub}_{cl}\uli\Sigma \arrow[r, hook] & \Gamma \uli{\OO\Sigma^*}\cong \OO\Sigma^*.
\end{tikzcd}
\]
This last map defines a sup-embedding of orders and it restricts to an order-isomorphism between each $\Pi(A)$ and $\mathrm{Sub}_{cl}(\uli\Sigma(A))$, $A\in\V(\A)$. This enlarged `quantum logic' has the following properties.
\begin{enumerate}
\item It is a complete Heyting algebra.
\item The interpretation of the (Heyting) negation is ``$p$ is not true'', rather than ``$p$ is false''.
\item \emph{Each} state determines the truth value of all quantum propositions. This is a consequence of our dichotomy map that transforms probabilities into true/false judgements\footnote{Maybe, `not true' might better describe the situation here than `false'.} in each classical context. The truth values at different classical contexts have the immediate operational interpretation as the possibility of falsification: a $0$ means that falsification is possible.
\item The pure (and mixed) states do not determine lattice homomorphisms $\mathrm{Sub}_{cl}\uli\Sigma\ra{}\mathrm{Sub}\uli 1$.
\item We have a $\mathrm{sup}$-embedding of $\Pi(\A)$ into the complete Heyting algebra $\mathrm{Sub}_{cl}(\uli\Sigma)$, which, at each context $A\in\V(\A)$, restricts to an isomorphism $\Pi(A)\cong\mathrm{Sub}_{cl}(\uli\Sigma(A))$\footnote{Note that this does not contradict theorem \ref{thm:koclog}.}.
\item The operation $\vee$ is computed pointwise in each classical context, rather than by taking a closed linear span. However, the law of the excluded middle fails.
\item The propositions have a clear operational interpretation. The logical operations are defined pointwise, so we are only combining commuting propositions in conjunctions and disjunctions. The truth value of such a combined proposition therefore does not depend on the order in which we test its building blocks.
\item It is not obvious how one can reconstruct the algebra of observables from this quantum logic.

\end{enumerate}
By using outer daseinisation we can form a truth value for each classical context. For a given state, a proposition is not true in a context precisely if it can be falsified by performing measurements from that context.

\subsection{The Covariant Approach}
Conventionally, the covariant approach to topos quantum logic starts with a general C*-algebra $\A$. As discussed in section \ref{sec:subalg}, it is not enough in this case to consider only its commutative von Neumann subalgebras. Instead, one should take into account the poset $\mathcal{C}(\A)$ of all commutative C*-subalgebras. However, recently, particularly in \cite{hlsboh}, the covariant approach has also started to consider more specific kinds of C*-algebras\footnote{In particular, from general to specific, it considered spectral, Rickart, AW* and von Neumann algebras.}. The reason for doing this is that, although one can formulate a satisfactory theory of observables and states when dealing with an arbitrary C*-algebra, the logical side of the story is not yet satisfactory. This is a consequence of the possible lack of projections in a C*-algebra.

Keeping in mind that one can embed any C*-algebra in its universal enveloping von Neumann algebra and that these are in fact the only algebras of operators that mathematical physicists seem to use in practice, I have chosen to restrict to the case of von Neumann algebras, also in my description of the covariant approach. This will also make the comparison with the contravariant approach easier. The reader should bear in mind, however, that most of the results hold in greater generality.

In this approach, we will therefore start out with a von Neumann algebra $\A$ and set up a formalism for quantum kinematics in $\mathrm{Set}^{\V(\A)}$. This time, however, we take a more theory-driven approach. We naturally have an internal commutative C*-algebra in $\mathrm{Set}^{\V(\A)}$. We mirror the constructions we performed in classical kinematics to obtain both a fully algebraic formalism for quantum kinematics in the topos and a geometric counterpart. This internal duality between algebra and geometry will rely on recent constructive versions of Gelfand duality, due to Banaschewski and Mulvey (e.g. \cite{banglo}), and of Riesz-Markov duality, due to Coquand and Spitters (\cite{coqint}) as well as on a less recent constructive version of Stone duality.

\subsubsection{Algebra}
\paragraph{Observables} Given a von Neumann algebra $\A$, we would like to define an object in $\mathrm{Set}^{\V(\A)}$ that will represent this algebra. Note that we have a tautological object\footnote{Recall from theorem \ref{thm:kalm} that the pair $(\V(\A),\uli{\A})$ contains the same information as $\Pi(\A)$ and therefore, by our discussion in paragraph \ref{sec:bn}, one should be able to reconstruct $\A$ from it.} 
\[
\begin{tikzcd}[column sep=large, row sep=large]
\V(\A) \arrow[r, "\uli \A"] & \mathrm{CStar}\\
A\leq A' \arrow[r, mapsto] & A\subset A'.
\end{tikzcd}
\]
Recall the following elementary result from categorical model theory.
\begin{lemma}[e.g. \cite{johske2}, Corollary D1.2.14] \label{lem:models} Let $\mathbb{T}$ be a geometric theory and let $\mathcal{C}$ be a small category. Then

\[
\begin{tikzcd}[column sep=large, row sep=large]
\mathbb{T}-\mathrm{Mod}(\mathrm{Set}^{\mathcal{C}^{op}}) \arrow[r] & \mathbb{T}-\mathrm{Mod}(\mathrm{Set})^{\mathcal{C}^{op}}\\
P \arrow[r, mapsto] & \mathrm{ev}_{-}(P)
\end{tikzcd}
\]
defines an isomorphism of categories.
\end{lemma}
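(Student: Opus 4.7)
The plan is to exploit the fact that all the structure required to interpret a geometric theory in a presheaf category is computed pointwise. I would first recall that the evaluation functors $\mathrm{ev}_c : \mathrm{Set}^{\mathcal{C}^{op}} \to \mathrm{Set}$, for $c \in \mathcal{C}$, jointly reflect isomorphisms and that both limits and colimits in $\mathrm{Set}^{\mathcal{C}^{op}}$ are computed objectwise, so each $\mathrm{ev}_c$ preserves all limits and colimits. In particular, $\mathrm{ev}_c$ preserves finite products, equalisers, images, arbitrary unions, and the interpretation of $\top$, $\bot$, $\wedge$, $\exists$ and arbitrary $\bigvee$. Since a geometric signature's function and relation symbols are interpreted in $\mathrm{Set}^{\mathcal{C}^{op}}$ as natural transformations, and since geometric formulas are built from atomic formulas using precisely these constructors, the interpretation $\llbracket \varphi \rrbracket_P$ of a geometric formula $\varphi$ in a presheaf model $P$ is a subpresheaf whose value at $c$ is $\llbracket \varphi \rrbracket_{P(c)}$.

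Given this, I would define the functor $\Phi : \mathbb{T}\text{-}\mathrm{Mod}(\mathrm{Set}^{\mathcal{C}^{op}}) \to \mathbb{T}\text{-}\mathrm{Mod}(\mathrm{Set})^{\mathcal{C}^{op}}$ on objects by sending $P$ to the diagram $c \mapsto \mathrm{ev}_c(P)$, which is well-defined because for each geometric sequent $\varphi \vdash \psi$ valid in $P$ and each $c$, the inclusion $\llbracket \varphi \rrbracket_P(c) \subseteq \llbracket \psi \rrbracket_P(c)$ is exactly the validity of that sequent in $P(c)$. On morphisms (natural transformations of $\mathbb{T}$-models) $\Phi$ acts by componentwise evaluation, producing a natural transformation of diagrams of models because each component is automatically a model homomorphism (structure maps are preserved componentwise). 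For the inverse functor $\Psi$, I would take a diagram $M : \mathcal{C}^{op} \to \mathbb{T}\text{-}\mathrm{Mod}(\mathrm{Set})$ and assemble its underlying sorts, function symbols, and relation symbols into a presheaf-model in the evident way: a sort $X$ is interpreted as the presheaf $c \mapsto M(c)(X)$ with the transition maps inherited from $M$, and function/relation symbols are packaged as natural transformations using functoriality of $M$. The axioms of $\mathbb{T}$ then hold in $\Psi(M)$ because their interpretation at each $c$ is the interpretation in $M(c)$, which is a model.

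Finally I would verify $\Phi \circ \Psi = \mathrm{id}$ and $\Psi \circ \Phi = \mathrm{id}$ on the nose (not just up to isomorphism): both compositions unpack to the identity because evaluating a presheaf at each object and reassembling recovers exactly the original data, and similarly for morphisms. This gives an isomorphism of categories, not merely an equivalence.

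The only genuinely delicate point is the invariance of geometric validity under evaluation, so I would spend the main effort there. I expect the obstacle to be essentially bookkeeping: one must check carefully that the interpretation of formulas using $\exists$ and infinitary $\bigvee$ commutes with $\mathrm{ev}_c$, which ultimately reduces to the fact that images and unions of subpresheaves are computed pointwise. Once this is granted, the theorem is formal. I would cite \cite{johske2} D1.2.14 for the underlying fact and restrict my argument to sketching why $\Phi$ and $\Psi$ are mutually inverse.
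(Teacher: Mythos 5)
Your proof is correct, and it reaches the conclusion by a more explicit and elementary route than the paper. The paper's proof is essentially a two-line appeal to topos-theoretic machinery: each $\mathrm{ev}_C$ is the inverse image part of a geometric morphism (a point of the presheaf topos), hence preserves models of geometric theories; injectivity on objects and faithfulness are immediate; and surjectivity on objects and fullness are then deduced from the fact that the family of points $(\mathrm{ev}_C)_{C\in\mathcal{C}}$ is separating. You instead prove the underlying fact by hand --- interpretations of geometric formulas in a presheaf model are computed pointwise because finite limits, images and arbitrary unions of subpresheaves are --- and then exhibit an explicit inverse functor $\Psi$ reassembling a $\mathcal{C}^{\mathrm{op}}$-diagram of $\mathrm{Set}$-models into an internal model, checking that the two composites are identities on the nose. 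What your version buys is transparency at precisely the point where the paper is tersest: the step ``separating implies surjective on objects and full'' is really shorthand for your construction of $\Psi$, since one must verify that an arbitrary diagram of models assembles into a model in the presheaf topos, and that verification is again the pointwise-interpretation fact. What the paper's version buys is brevity and the recognition that the lemma is an instance of a general principle about jointly separating families of points, which is not tied to presheaf toposes. One detail worth a sentence in your write-up: for a relation symbol $R$, a homomorphism of models requires only the inclusion $h(R^M)\subseteq R^N$ rather than a pullback square, and this is exactly what makes $c\mapsto R^{M(c)}$ a well-defined subpresheaf in your $\Psi$; you use this implicitly.
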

\begin{proof}[Proof] The notation in the lemma for what is essentially the identity functor is supposed to be suggestive. Indeed, note that for each $C\in\mathcal{C}^{op}$ we have a geometric morphism $\mathrm{Set}^{\mathcal{C}^{op}}\ra{\mathrm{ev}_C}\mathrm{Set}$ with inverse image evaluation at $C$ and right adjoint $S\mapsto S^{\mathcal{C}(C,-)}$. This shows that evaluation induces a functor $\mathbb{T}-\mathrm{Mod}(\mathrm{Set}^{\mathcal{C}^{op}}) \ra{} \mathbb{T}-\mathrm{Mod}(\mathrm{Set})^{\mathcal{C}^{op}}$ (which is obviously injective on objects and faithful). Moreover, the set of points $(\mathrm{ev}_C)_{C\in\mathcal{C}}$ is separating. This implies that it is surjective on objects and full.
\end{proof}
Since we have a geometric (even algebraic) theory of commutative rings, $\uli\A$ is an internal commutative ring in $\mathrm{Set}^{\V(\A)}$. We would actually like to say that this is an internal C*-algebra\footnote{We might even want to say that it is an internal von Neumann algebra. However, it is very involved to work out what that would mean. In section \ref{sec:intbool} we shall see that the internal Boolean algebra associated to $\uli\A$ is complete and in section \ref{sec:intstosp} we see that the internal Gelfand spectrum of $\uli\A$ is in fact the internal Stone spectrum of this Boolean algebra. This strongly suggests that $\uli\A$ is at least an internal AW*-algebra.}, so we can apply an internal version of Gelfand duality later. However, since metric completeness is typically a second-order property, a theory of C*-algebras will be second-order, hence we cannot use the above lemma to conclude this. To prove this, one has to work directly with the sheaf semantics. This was done in \cite{hlstop}.
\begin{theorem}[Essentially \cite{hlstop}, theorem 5] Any presheaf of commutative C*-algebras in $\mathrm{cCStar}^{\mathcal{C}^{op}}$, for some category $\mathcal{C}$, is an internal C*-algebra in $\mathrm{Set}^{\mathcal{C}^{op}}$, under the operations inherited from the pointwise ones. In particular, it is an internal vector space over the constant functor $\uli\Delta_\C: c\mapsto \C$ (the Cauchy complex numbers).
\end{theorem}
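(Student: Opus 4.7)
The plan is to split the verification into the algebraic part of the theory, which is immediately handled by Lemma \ref{lem:models}, and the genuinely second-order part (the norm data together with Cauchy completeness), which has to be handled by unfolding the Kripke--Joyal semantics directly.

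First, the theory of a commutative $*$-algebra is purely algebraic, so Lemma \ref{lem:models} immediately turns the presheaf $\uli A$, equipped with the pointwise operations, into an internal commutative $*$-algebra in $\mathrm{Set}^{\mathcal{C}^{op}}$. To see that the scalar object is $\uli\Delta_{\mathbb{C}}$, I would observe that the Cauchy reals (equivalence classes of rational Cauchy sequences with the usual modulus-based relation) are built from $\mathbb{N}$ and $\mathbb{Q}$ by a geometric construction and are therefore preserved by the inverse image of any geometric morphism; for the essential geometric morphism $\mathrm{Set}^{\mathcal{C}^{op}}\to\mathrm{Set}$ this inverse image is precisely the constant-presheaf functor $\Delta$, so the internal Cauchy complex numbers coincide with $\uli\Delta_{\mathbb{C}}$.

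Next, I would handle the norm. The datum of a C*-norm can be repackaged geometrically as a family of ``open-ball'' predicates $B_q(a)$ for $q\in\mathbb{Q}_{>0}$, with intended meaning $\norm{a}<q$, satisfying the evident geometric axioms (monotonicity in $q$, triangle inequality, rational homogeneity, the C*-identity in the form $B_{q^{2}}(a^{*}a)\leftrightarrow B_{q}(a)$, and an Archimedean axiom $\top\vdash\bigvee_{q}B_{q}(a)$ asserting the existence of \emph{some} rational bound). These are geometric sequents, so Lemma \ref{lem:models} again applies and yields an internal norm whose stagewise interpretation is exactly the pointwise C*-norm on each $\uli A(C)$.

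The genuinely non-trivial step, and the main obstacle, is Cauchy completeness, which is second order. I would unfold the Kripke--Joyal semantics as follows. A ``Cauchy sequence in $\uli A$'' at stage $C$ amounts, by the adjunction at stages below $C$, to a compatible family of honest Cauchy sequences $(a^{C'}_{n})_{n}$ in $\uli A(C')$, one for each morphism $f\colon C'\to C$. Each such pointwise sequence has a unique limit $a^{C'}\in\uli A(C')$ because each $\uli A(C')$ is a genuine C*-algebra in $\mathrm{Set}$; uniqueness of limits, together with the fact that restriction maps, being $*$-homomorphisms between C*-algebras, are automatically norm-decreasing and hence continuous, forces these limits to cohere into a natural element $a\in\uli A(C)$. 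The internal convergence statement ``$a_{n}\to a$'' then holds at $C$ stagewise. The crux is that the Grothendieck topology on a presheaf topos is trivial, so the local-existence clauses in the Kripke--Joyal reading of the Cauchy and convergence conditions collapse to honest existence at each individual stage, and no cover-refinement is required; making this pedestrian translation between internal statements and their stagewise meanings is the only real work.
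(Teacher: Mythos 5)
Your proposal is correct and takes essentially the route the paper itself indicates: the paper states this theorem without proof, deferring to \cite{hlstop} (Theorem 5) after remarking that the purely algebraic/geometric part of the theory is handled by Lemma \ref{lem:models} while completeness, being second order, requires writing out the sheaf semantics by hand. Your decomposition --- the pointwise operations and the open-ball norm predicate via the lemma (the norm predicate is a subfunctor because $*$-homomorphisms of C*-algebras are contractive), and Cauchy completeness via a stagewise Kripke--Joyal unfolding using continuity of the restriction maps and the triviality of the topology on a presheaf site --- is precisely that argument.
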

The process of passing from the non-commutative C*-algebra $\A$ to this internal commutative one is known as \emph{Bohrification} in the literature.
Again, as one might expect, the observables are given by the self-adjoint elements.
\begin{theorem}The object of observables $\uli \A_{sa}$ is given by $\uli\A_{sa}(A)=A_{sa}$.\end{theorem}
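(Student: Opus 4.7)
The plan is to exhibit $\uli{\A}_{sa}$ as the equalizer of the involution $\uli{(-)}^{*}:\uli{\A}\to\uli{\A}$ and the identity $\mathrm{id}_{\uli{\A}}$, and then to exploit the pointwise character of both the C*-algebra structure and of finite limits in the presheaf topos $\mathrm{Set}^{\V(\A)}$.

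First I would recall that by the previous theorem $\uli{\A}$ is an internal C*-algebra whose operations — including the involution — are obtained from the pointwise ones on the component algebras $\uli{\A}(A)=A$. In the internal language of any C*-algebra one defines the subobject of self-adjoint elements by the first-order formula $a^{*}=a$, so internally
\[
\uli{\A}_{sa} \;=\; \bigl\{\,a\in\uli{\A}\,\bigm|\,a^{*}=a\,\bigr\}
\;=\;\mathrm{Eq}\bigl(\uli{(-)}^{*},\mathrm{id}_{\uli{\A}}\bigr).
\]
Second, I would use that limits in $\mathrm{Set}^{\V(\A)}$ are computed pointwise: for each $A\in\V(\A)$,
\[
\uli{\A}_{sa}(A)=\mathrm{Eq}\bigl(\uli{(-)}^{*}_{A},\mathrm{id}_{A}\bigr)
=\{a\in A\mid a^{*}=a\}=A_{sa},
\]
since the $A$-component of the internal involution is, by construction, the ordinary involution of the commutative C*-algebra $A$. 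The functorial action on an inclusion $A\subset A'$ is the inclusion $A_{sa}\hookrightarrow A'_{sa}$, which is well-defined because C*-subalgebra inclusions preserve involution (and therefore self-adjointness), and this agrees with the restriction of $\uli{\A}(A\leq A')$ to the equalizer.

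The only subtle point — and the step that requires the previous theorem rather than the naive appeal to Lemma \ref{lem:models} — is the assertion that the internal involution of $\uli{\A}$ really is the pointwise one; once this is granted, the rest is just the general fact that monics defined by an equation of morphisms are preserved componentwise by evaluation. Hence there is nothing further to check, and the identification $\uli{\A}_{sa}(A)=A_{sa}$ follows.
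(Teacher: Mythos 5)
Your proposal is correct and is essentially the paper's own argument: the paper also defines $\uli{\A}_{sa}:=\{a\in\uli\A\mid a^*=a\}$ internally and observes that the interpretation of this formula in a presheaf topos is computed pointwise, which is exactly your equalizer computation. You merely spell out the pointwise-limits step and the functoriality on inclusions more explicitly than the paper's one-line proof does.
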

\begin{proof}We define $\uli\A_{sa}:=\{ a\in \uli \A \,| \, a^*=a\, \}\subset\uli\A$. This is immediate from the interpretation of sheaf semantics in a presheaf category.\end{proof}

\paragraph{States}
Now that we have given this internal representation of the algebra of observables, we would like to do something similar for the states.
\begin{definition}[Internal State] We define an internal state on an internal C*-algebra $\A$ in a topos to be a $\C\footnote{The Cauchy complexes.}$-linear map $\A\ra{I}\C$, such that $I(1)=1$ and $I(a^*a)\geq 0$ for all generalised elements $a\in \A$.\end{definition}
We have the following correspondence.
\begin{theorem}[\cite{hlstop}]\label{thm:intstate} There is a natural bijection between external quasi-states on $\A$ and internal states on $\uli \A$. Therefore, if $\A$ does not have summands of type $I_2$, external states on $\A$ are in natural bijection with internal states on $\uli \A$.
\end{theorem}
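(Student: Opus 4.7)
The plan is to unpack the internal notion of state using the sheaf semantics of $\mathrm{Set}^{\V(\A)}$ and show that it corresponds stage-by-stage to the data of a compatible family of states on each commutative Von Neumann subalgebra. This is essentially what a quasi-state encodes, so the bijection will then fall out of the definitions.

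First, I would translate what an internal state on $\uli{\A}$ amounts to externally. A $\C$-linear map $I : \uli{\A} \to \uli{\Delta}_\C$ is a natural transformation, i.e.\ a family of maps $I_A : A \to \C$ (one for each $A \in \V(\A)$) such that $I_{A'}|_A = I_A$ whenever $A \subseteq A'$. Interpreting the Kripke--Joyal/sheaf semantics at each stage $A$: the generalised elements of $\uli{\A}$ at $A$ are ordinary elements of $A$, and the generalised elements of $\uli{\Delta}_\C$ at $A$ are ordinary complex numbers, so additivity, $\C$-linearity, the condition $I(1)=1$, and the positivity axiom $I(aa^*)\geq 0$ all translate into the statement that each $I_A$ is an ordinary (positive, normalised, $\C$-linear) state on the commutative C*-algebra $A$. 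Thus an internal state on $\uli{\A}$ is exactly a coherent family of states on the commutative Von Neumann subalgebras of $\A$.

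Next, I would construct the bijection with quasi-states. Given an internal state $\{I_A\}_{A\in\V(\A)}$, define $\omega : \A \to \C$ by setting $\omega(a) := I_{C^*(a)}(a)$ for self-adjoint $a\in \A_{sa}$ (where $C^*(a)$ is the commutative Von Neumann subalgebra generated by $a$) and extending $\C$-linearly via $\omega(a + ia') := \omega(a)+i\omega(a')$ for $a,a'\in\A_{sa}$. For any commutative Von Neumann subalgebra $A$, naturality of the $I_A$'s gives $\omega|_A = I_A$ (indeed, for $a\in A_{sa}$, $C^*(a)\subseteq A$, so $I_A(a)=I_{C^*(a)}(a)=\omega(a)$, and $\C$-linearity on $A$ extends this to all of $A$). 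Hence $\omega$ satisfies the two defining conditions of Definition \ref{def:quasta}. Conversely, given a quasi-state $\omega$, the family $I_A := \omega|_A$ is automatically a compatible system of states on the $A\in\V(\A)$ by property (i) of quasi-states, and compatibility under inclusion is just restriction. These two assignments are mutually inverse, since an internal state is determined by its values on each $A$ and a quasi-state is determined by its restrictions to the $C^*$-subalgebras together with the $\C$-linearity property (ii), which is built into both sides.

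Finally, for the second assertion, I would invoke Gleason's theorem (Theorem \ref{thm:glea}): whenever $\A$ has no type $I_2$ summand, every quasi-state is automatically a state, so the bijection above restricts (in fact, equals) a natural bijection between external states on $\A$ and internal states on $\uli{\A}$.

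The main obstacle I would expect is a careful verification that the sheaf-semantic interpretation of positivity really does unpack stagewise to ordinary positivity, since positivity in a C*-algebra is usually defined via the spectrum or square roots, which are not obviously first-order data. However, over a presheaf topos the generalised elements at stage $A$ are literally the elements of $A$, and the formula ``$\exists b\ :\ aa^*=bb^*$'' (defining positivity in a commutative C*-algebra via square roots of positive elements) is interpreted stagewise, so this difficulty reduces to invoking the analogous fact already used in establishing that $\uli{\A}$ is an internal C*-algebra. Once that is granted, the rest of the proof is just bookkeeping.
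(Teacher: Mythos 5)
Your proposal is correct and follows essentially the same route as the paper, which simply observes that the correspondence is ``almost tautological'': a quasi-state restricts to a compatible family of states on the commutative subalgebras (i.e.\ an internal state), an internal state glues to a quasi-state via $\omega(a)=\uli\omega_A(a)$ for any $A\ni a$ (well-defined by naturality), and the second assertion is Gleason's theorem. Your version is merely more explicit — in particular your care in defining $\omega$ first on self-adjoint elements and extending via $\omega(a+ia')=\omega(a)+i\omega(a')$, and your remark on unpacking positivity stagewise, are details the paper leaves implicit.
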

\begin{proof}The first statement is almost tautological. Indeed, any quasi-state defines an internal state by restriction to commutative subalgebras. Conversely, each internal state $\uli\omega$ first defines $\omega(a)=\uli\omega_A(a)$ for self-adjoint $a$, where $A$ is a commutative subalgebra containing $a$. This is well-defined by naturality of $\uli\omega$. We then extend by $\omega(a+ib)=\omega(a)+i\omega(b)$ for self-adjoint $a,b$. Naturality of this bijection is immediate. The second statement is just theorem \ref{thm:glea}.
\end{proof}

\paragraph{Propositions}\label{sec:intbool}
Since we already have an internal structure representing our observables, it is also easy to obtain one for the quantum propositions.
\begin{definition}$\V(\A)\ra{\uli{\Pi(\A)}}\mathrm{Set}$ is the subfunctor of $\uli \A$ that sends a commutative C*-algebra $A$ to its Boolean algebra $\Pi(A)$ of self-adjoint idempotents.
\end{definition}
Note that this is an internal Boolean algebra in $\mathrm{Set}^{\V(\A)}$, by lemma \ref{lem:models}, since the theory of Boolean algebras is geometric (even algebraic). Its operations are induced by those on $\Pi(A)$, for $A\in \V(\A)$. Moreover, combining the fact that each $\Pi(A)$ is a complete Boolean algebra with an argument along the lines of theorem \ref{thm:intfrm}, one shows that $\uli{\Pi(\A)}$ is in fact complete.

\paragraph{State-Proposition Pairing}
Let $\uli \A\ra{\uli\omega} \uli\Delta_\C$ be an internal state. We note that this restricts to a map $\uli{\Pi(\A)}\ra{\uli\omega}\uli\Delta_{[0,1]}$. As one might expect, we can recover our state from this. For comparison with the contravariant approach, we interpret this map $\uli{\Pi(\A)}\ra{\uli\omega}\uli\Delta_{[0,1]}$ slightly differently.
\begin{theorem}[\cite{hlsboh}]\label{thm:intsta} There is a natural bijection between:
\begin{enumerate}
\item External quasi-states on $\A$;
\item Internal states on $\uli\A$;
\item External finitely additive\footnote{\cite{hlsboh} states this result with finitely additive replaced by countably additive and their proof skips over many details. I could see that this would be true if a strong version of Gleason's theorem holds. However, one would certainly have to exclude $I_2$ summands to have such a result at one's disposal.} measures on $\Pi(\A)$.
\item Internal finitely additive measures on $\uli{\Pi(\A)}$.
\end{enumerate}
\end{theorem}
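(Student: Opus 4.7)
The plan is to establish the fourfold bijection as a commuting diagram of three bijections, two of which are already in hand from earlier in the paper. The bijection (1)$\Leftrightarrow$(2) is exactly Theorem \ref{thm:intstate}: restrict an internal state on $\uli\A$ componentwise to each $A\in\V(\A)$ to produce a compatible family of states on the commutative subalgebras, i.e.\ a quasi-state on $\A$, and conversely glue a quasi-state into a natural transformation via naturality. The bijection (1)$\Leftrightarrow$(3) is the quasi-state form of Mackey--Gleason (Theorem \ref{thm:macgle} together with its footnote), which is valid with no restriction on summand types. The only genuinely new content is therefore the bijection (3)$\Leftrightarrow$(4), which I would prove by unwinding the sheaf semantics in $\mathrm{Set}^{\V(\A)}$; after this, a short diagram chase verifies that all constructed bijections are compatible.

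To prove (3)$\Leftrightarrow$(4), I would first unpack what an internal finitely additive measure $\uli\mu\colon \uli{\Pi(\A)}\to\uli\Delta_{[0,1]}$ really is. Since the target is the constant presheaf (the covariant approach models real-valued quantities in the Cauchy reals) and the transition maps of $\uli{\Pi(\A)}$ are the inclusions $\Pi(A)\hookrightarrow\Pi(A')$, a natural transformation is exactly a family $(\mu_A)_{A\in\V(\A)}$ of maps $\mu_A\colon\Pi(A)\to[0,1]$ that agree under restriction. Using that every projection $p\in\Pi(\A)$ lies in the commutative subalgebra $\C\cdot 1+\C\cdot p$, we have $\Pi(\A)=\bigcup_{A\in\V(\A)}\Pi(A)$, so such compatible families correspond bijectively to single maps $\mu\colon\Pi(\A)\to[0,1]$.

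Next, I would verify that the internal axioms of Definition \ref{def:intmeas}, interpreted via Kripke--Joyal semantics, translate into external finitely additive measure axioms on $\Pi(\A)$. Normalisation $\mu(\top)=1$ at every stage $A$ gives $\mu_A(1_A)=1$, hence $\mu(1)=1$. For the inclusion-exclusion axiom, the key observation is that any two orthogonal projections $p,q\in\Pi(\A)$ automatically commute (since $pq=qp=0$) and therefore sit inside a common $A\in\V(\A)$, where $p\wedge q=0$ and $p\vee q=p+q$; the internal axiom at $A$ then yields $\mu(p)+\mu(q)=\mu(0)+\mu(p+q)$, and (with the standard convention $\mu(\bot)=0$) this is the external additivity on orthogonal projections. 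Conversely, every external finitely additive measure on $\Pi(\A)$ restricts to one on each $\Pi(A)$, and these restrictions tautologically assemble into a natural transformation satisfying the internal axioms pointwise, since all lattice and real-valued operations in the relevant presheaves are computed stagewise.

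The main obstacle is carefully handling the Kripke--Joyal interpretation of the universally quantified inclusion-exclusion formula: the statement $\forall x\forall y\bigl(\mu(x)+\mu(y)=\mu(x\wedge y)+\mu(x\vee y)\bigr)$ at stage $A$ demands the equality hold for all pairs in all commutative extensions $A'\geq A$. The saving observation is that both $\uli{\Pi(\A)}$ and $\uli\Delta_{[0,1]}$ have pointwise operations, so this uniformity collapses to the ordinary external condition at each stage, which in turn combines, via the orthogonality-implies-commutativity remark, into a single finitely additive measure on $\Pi(\A)$. Finally, checking that the square of bijections commutes is a routine diagram chase: both compositions $(2)\to(1)\to(3)$ and $(2)\to(4)\to(3)$ send an internal state $\uli\omega$ to the external finitely additive measure $p\mapsto \uli\omega_A(p)$ (for any $A\in\V(\A)$ containing $p$), and the internal-state/internal-measure comparison (2)$\Leftrightarrow$(4) then follows automatically from the other three bijections.
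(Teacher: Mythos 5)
Your decomposition is the same as the paper's: (1)$\Leftrightarrow$(2) is Theorem \ref{thm:intstate}, (1)$\Leftrightarrow$(3) is Mackey--Gleason in its quasi-state form, and the new content is (3)$\Leftrightarrow$(4). However, your treatment of (3)$\Leftrightarrow$(4) starts from a false premise that assumes away the one genuinely non-trivial point. You take the codomain of an internal finitely additive measure to be the constant presheaf $\uli\Delta_{[0,1]}$, but Definition \ref{def:intmeas} specifies the lower reals $\uli{[0,1]}_l$, whose stage at $A$ is $\mathrm{Hom}_{\mathrm{Pos}}(\uparrow A,[0,1])$, not $[0,1]$. So an internal measure $\uli{\Pi(\A)}\ra{\uli\mu}\uli{[0,1]}_l$ assigns to each $p\in\Pi(A)$ a \emph{monotone function} on $\uparrow A$; naturality only says these functions restrict compatibly along $A\leq A'$, not that they are constant. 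A priori the ``value'' of $\mu(p)$ could genuinely grow as the classical context is enlarged, and then no single external number $\mu(p)\in[0,1]$ is defined. Your clean identification of compatible families with a single map $\Pi(\A)\ra{}[0,1]$ only becomes available after this possibility is excluded.

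The paper closes exactly this gap by citing Lemma 24 of \cite{hlsboh}: an internal finitely additive measure on an internal \emph{Boolean} algebra automatically factors through the Dedekind reals $\uli{[0,1]}\subset\uli{[0,1]}_l$, i.e.\ through the constant presheaf. (The underlying mechanism is that $\mu(p)+\mu(\neg p)=\mu(\top)+\mu(\bot)=1$ with $1$ constant forces the two monotone non-decreasing functions $\mu(p)$ and $\mu(\neg p)$ to both be constant; complementedness is essential here, which is why the statement is about Boolean algebras and not arbitrary internal lattices.) Once that lemma is in place, the remainder of your stage-wise unwinding --- including the observation that orthogonal projections commute and hence lie in a common $A\in\V(\A)$, so that internal additivity at that stage yields external additivity on $\Pi(\A)$ --- goes through and matches the paper's intent. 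You should either prove or explicitly invoke this constancy lemma before identifying internal measures with external ones.
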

\begin{proof}The equivalence between 1. and 2. has already been established in theorem \ref{thm:intstate}. The equivalence between 1. and 3. is the affirmative answer to the quite non-trivial Mackey-Gleason problem. (Theorem \ref{thm:macgle}.) The equivalence between 3. and 4. is a consequence of the fact that an internal valuation on a Boolean algebra takes values in the Dedekind reals $\uli{[0,1]}\subset\uli{[0,1]}_l$. This assertion is lemma 24 in \cite{hlsboh}. Then note that naturality of an internal measure $\uli\mu$ precisely means that $\uli\mu_A(p)=\uli\mu_{A'}(p)$ if $p\in\Pi(A)\cap\Pi(A')$.
\end{proof}
As in the contravariant approach, we can define a pairing between states and propositions that will yield truth values in $\mathrm{Sub}(\uli 1)\cong \Gamma\uli\Omega$. We simply mirror what we do there. For each (quasi-)state $\omega$, we have a map

\[
\begin{tikzcd}[column sep=large, row sep=large]
\Pi(\A) \arrow[r, "\omega\circ\delta^i"] & \mathrm{Hom}_{\mathrm{Pos}}(\V(\A),[0,1])\\
p \arrow[r, mapsto] & \Big(A\mapsto \omega(\delta^i(p)_A) \Big).
\end{tikzcd}
\]
Again, we introduce a `dichotomy map' that will regard the probabilities that quantum mechanics normally produces (by the Born rule) from an all-or-nothing point of view. We define

\[
\begin{tikzcd}[column sep=large, row sep=large]
\uli{[0,1]}_l \arrow[r, "{\scriptstyle\uli{\mathrm{Dich}}}"] & \uli\Omega\\
\mathrm{Hom}_{\mathrm{Pos}}(\uparrow A,[0,1])=\uli{[0,1]}_l(A) \arrow[r, "{\scriptstyle\mathrm{Dich}_A}"] & \uli\Omega(A)=\{S\subseteq\uparrow A\, | \, S\textnormal{ is an upset}\}\\
\nu \arrow[r, mapsto] & \{A'\in \uparrow A\, | \, \nu(A')=1 \, \}.
\end{tikzcd}
\]
Taking the composition of these maps, we find our truth assignment

\[
\begin{tikzcd}[column sep=8em]
\Pi(\A) \arrow[r, "{\scriptstyle\mathrm{truth}_\omega:=\Gamma\uli{\mathrm{Dich}}\circ\omega\circ\delta^i}"] & \Gamma\uli\Omega\cong \mathrm{Sub}(\uli 1).
\end{tikzcd}
\]

\subsubsection{Geometry}
\paragraph{Observables} Since we have an internal commutative C*-algebra in our topos, one might hope to realise this algebra as an algebra of functions of some `internal space' by using Gelfand-Naimark duality. However, the conventional proof of the theorem relies on the axiom of choice\footnote{For instance, the Stone-Weierstrass theorem, which is invoked in the proof, does so.}. Since our topos is (generally) non-Boolean\footnote{Recall that a presheaf topos is Boolean if and only if its site is a groupoid. Our site is a (generally) non-trivial poset.}, we see that the internal axiom of choice fails.

Not all is lost, however, since Banaschewski and Mulvey (and finally Coquand and Spitters) gave a constructive proof of the theorem in a series of recent papers. Their version of the theorem constructs the Gelfand spectrum not as a compact Hausdorff space, but as a compact completely regular locale. To make sense of this, recall that the categories $\mathrm{CHaus}$ of compact Hausdorff spaces and $\mathrm{KRegLoc}$ of compact completely regular locales are equivalent in the presence of the axiom of choice.

\begin{theorem}[Constructive Gelfand-Naimark,\cite{banglo,coqcon}] For any Grothendieck topos $\mathcal{E}$, we have an equivalence of categories $$\mathrm{cCStar}(\mathcal{E})^{op}\mathop{\leftrightarrows}^{\uli{C}(-,\uli \C)}_{\uli{\mathrm{Max}}}\mathrm{KRegLoc}(\mathcal{E}),$$
where $\uli{\C}$ denotes the internal locale\footnote{In our specific case, its corresponding frame $\uli{\OO(\C)}$ is given by $\uli{\OO(\C)}(A)=\OO(\uparrow A \times \C)$, where $\uparrow A\subset \V(\A)$ is equipped with the subspace topology. \cite{hlstop}} of Dedekind complex numbers, $\uli{C}(-,-)$ denotes the internal continuous Hom (a subfunctor of the internal Hom, defined in terms of the internal language), and $\uli{\mathrm{Max}}$ is the functor that sends an internal commutative C*-algebra to the locale of maximal ideals (defined in terms of the internal language of $\mathcal{E}$)\footnote{The interested reader can find an account of its action on morphisms in \cite{nuiboh}.}. In particular, we can view
$$\uli \A_{sa}\cong \uli C(\uli{\mathrm{Max}}(\uli\A),\uli\R),$$
where $\uli\R$ is the locale\footnote{In our case $\uli{\OO(\R)}(A)=\OO(\uparrow A\times \R)\cong \mathrm{Hom}_{\mathrm{Pos}}(\uparrow A,\OO(\R))$.} of Dedekind real numbers.
\end{theorem}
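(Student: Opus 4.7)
The plan is to mimic the classical proof of Gelfand--Naimark duality in the internal logic of $\mathcal{E}$, but replacing every appeal to the axiom of choice (and with it, to points) by a direct construction of the spectrum as a compact completely regular \emph{locale}, presented by generators and relations. Since the statement ``$(\uli{C}(-,\uli\C),\uli{\mathrm{Max}})$ is an equivalence of categories'' can be unfolded into a geometric (in fact, essentially finitary) statement about the internal commutative C*-algebra, it suffices to prove the theorem once and for all in the internal language of an arbitrary Grothendieck topos, provided we use only constructively valid reasoning.

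First I would make the functors themselves precise. Given an internal commutative C*-algebra $\uli A$, the frame $\uli{\OO\mathrm{Max}(\uli A)}$ is defined, following Banaschewski--Mulvey, by generators $D_a$ for each $a\in \uli A_{sa}$ (to be thought of as $\{\phi\mid \phi(a)>0\}$) subject to the usual relations: $D_1=\top$, $D_{a}\wedge D_{-a}=\bot$, $D_{a+b}\leq D_a\vee D_b$, $D_{ab}=(D_a\wedge D_b)\vee(D_{-a}\wedge D_{-b})$, $D_{a^2}=D_a\vee D_{-a}$, together with $D_a = \bigvee_{q>0}D_{a-q}$, which is the crucial relation enforcing regularity. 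Conversely, for a compact completely regular locale $X$, one shows (working internally) that $\uli{C}(X,\uli\C)$ carries a natural C*-algebra structure, the sup-norm coming from the fact that a continuous map into the locale $\uli\C$ has a well-defined supremum of $|f|$ by compactness. The functoriality on both sides and the natural transformations $\eta_A:\uli A\to \uli{C}(\uli{\mathrm{Max}}(\uli A),\uli\C)$ (the Gelfand transform $a\mapsto \hat a$) and $\varepsilon_X:X\to \uli{\mathrm{Max}}(\uli C(X,\uli\C))$ (evaluation) are then constructed in the internal language and hence make sense in any topos.

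The real work is verifying that $\eta$ and $\varepsilon$ are isomorphisms. For $\varepsilon$: one shows that for a compact completely regular locale $X$, the frame generated by the $D_f$ with $f\in \uli C(X,\uli\C)_{sa}$ under the defining relations is already isomorphic to $\uli{\OO X}$, essentially because compact regular locales are determined by their ``cozero'' covers and the completely-below relation $\prec\!\prec$ can be defined purely algebraically from the $D_a$'s. For $\eta$: one must prove that the Gelfand transform is a constructively well-defined isometric *-isomorphism. Constructively, surjectivity is the hard part---this is exactly where the classical proof invokes Stone--Weierstrass---and one replaces it by a direct argument that the image is norm-dense and closed. The norm-denseness is handled via the approximation lemmas of Banaschewski--Mulvey (a constructive Stone--Weierstrass for locales), while the closedness uses the Cauchy-completeness axiom built into the internal definition of a C*-algebra.

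The main obstacle, in my view, is precisely this constructive Stone--Weierstrass step: everything else is either a bookkeeping exercise in the internal language or a careful but routine adaptation of the classical arguments on frames. Once these two lemmas---constructive Stone--Weierstrass for the $D_a$ generators and the locale-theoretic description of $\uli{\OO X}$ in terms of cozero elements---are in hand, the verification that $\eta$ and $\varepsilon$ are mutually inverse natural isomorphisms can be carried out entirely inside the internal logic. Since the internal logic of every Grothendieck topos is intuitionistic and admits the Dedekind reals and complex numbers, the whole argument transports to $\mathcal{E}$ without modification, yielding the asserted equivalence. The particular application to $\mathcal{E}=\mathrm{Set}^{\V(\A)}$ with $\uli A$ the Bohrification then follows as an immediate corollary, and the description of the frame $\uli{\OO\C}$ given in the footnote can be read off from unfolding the internal Dedekind completeness in the presheaf semantics.
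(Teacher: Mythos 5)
The paper does not prove this theorem at all: it is imported wholesale from Banaschewski--Mulvey \cite{banglo} and Coquand--Spitters \cite{coqcon}, so there is no in-paper argument to compare yours against. Your outline is a faithful summary of the strategy of those references --- presenting $\uli{\OO\mathrm{Max}(\uli A)}$ by the generators $D_a$ and the standard relations (with $D_a=\bigvee_{q>0}D_{a-q}$ doing the work of complete regularity), and reducing the equivalence to a constructive Stone--Weierstrass theorem plus the cozero description of $\uli{\OO X}$ for $X$ compact completely regular. As a roadmap it is sound, but it is a plan rather than a proof: the two lemmas you isolate are exactly where all the content lives, and you do not indicate how either is established.

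One step in your justification is genuinely wrong as stated: the claim that the equivalence ``can be unfolded into a geometric (in fact, essentially finitary) statement'' about the internal C*-algebra. The theory of C*-algebras is not geometric --- the paper itself points out that metric completeness is second order, which is precisely why Lemma \ref{lem:models} cannot be used to recognise $\uli\A$ as an internal C*-algebra and a hand computation in the sheaf semantics is needed instead. Moreover, ``being an equivalence of categories'' is not a formula of the internal language of $\mathcal{E}$ at all; it is an external statement about $\mathcal{E}$. The correct justification for your transfer principle is not geometricity but the soundness of intuitionistic higher-order logic (with dependent choice-free reasoning and a natural numbers object) in any Grothendieck topos: one proves, in that fragment, that the Gelfand transform and the evaluation map are isomorphisms, and then interprets. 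This distinction matters in practice --- geometricity would give stability of $\uli{\mathrm{Max}}$ under inverse image functors, which is a subtler issue and is in fact part of what made the ``globalisation'' in \cite{banglo} nontrivial. With that repair, and with the two deferred lemmas filled in from the cited sources, your sketch becomes the standard proof.
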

At the moment, this internal spectrum is a rather abstract object. If it is to be of any use to physicists, we should give a more explicit description of it. For this purpose, we shall view it as a bundle of locales (in $\mathrm{Set}$). For convenience, we shall write $\uli{\Sigma_*}:=\uli{\mathrm{Max}}(\uli\A)$. The reader is encouraged to compare it with the internal locale $\uli{\Sigma^*}$ from the contravariant approach.

In analogy with what we did for the contravariant approach, using the comparison lemma, we will make the identification $\mathrm{Set}^{\V(\A)}\cong\mathrm{Sh}(\V(\A))$, where $\V(\A)$ is here equipped with the Alexandroff topology\footnote{This consists of the upper sets.}. We do this since we have a convenient description of locales internal to localic toposes, namely,
$$\mathrm{Loc}(\mathrm{Sh}(X))\cong \mathrm{Loc}/X.$$
As was computed in \cite{heugel}, these identifications give us the following bundle of locales, representing $\uli\Sigma_*$.

\begin{definition}[(Covariant) spectral bundle] Let $\Sigma_*$ be the topological space with underlying set $\{ (A,\lambda) | A\in \V(\A),\; \lambda\in \mathrm{Max}(A)\, \}$ and opens $U\subset\Sigma_*$ such that, when we write $U_A:=U\cap\mathrm{Max}(A)$,
\begin{enumerate}
\item $\forall A\in\V(\A): U_A\in\OO\mathrm{Max}(A)$;
\item If $A\subset A'$, then $\lambda'\in U_{A'}$ whenever $\lambda'|_{A}\in U_{A}$.
\end{enumerate}
Let us endow $\V(\A)$ with the Alexandroff topology (consisting of the upper sets). Then it is straightforward to verify that the projection map
$$\Sigma_*\ra{\pi}\V(\A)$$
is continuous. We shall call this map the spectral bundle.
\end{definition}

Writing out the identifications we made above, we get the following simple description of the internal Gelfand spectrum, in terms of the spectral bundle.
\begin{corollary}As an object of $\mathrm{Sh}(\V(\A))$, $\oli{\Sigma_*}$ has the frame of opens

\[
\begin{tikzcd}[column sep=large, row sep=large]
U \arrow[r, mapsto, "\oli{\OO\Sigma_*}"] & \OO\Sigma_*|_U\\
U\subset V \arrow[r, mapsto] & (W\mapsto W\cap U).
\end{tikzcd}
\]
As an object of $\mathrm{Set}^{\V(\A)}$, $\uli{\Sigma_*}$ has the frame of opens

\[
\begin{tikzcd}[column sep=large, row sep=large]
A \arrow[r, mapsto, "\uli{\OO\Sigma_*}"] & \OO\Sigma_*|_{\uparrow A}\\
A\leq A' \arrow[r, mapsto] & (W\mapsto W\cap \uparrow A').
\end{tikzcd}
\]
\end{corollary}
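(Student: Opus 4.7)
The plan is to unpack the two equivalences of categories that were invoked to construct $\uli{\Sigma_*}$ and to check that their composition yields precisely the frames stated.

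First I would apply the comparison lemma, \cite{johske2} C2.2.3, to pass from $\mathrm{Set}^{\V(\A)}$ to $\mathrm{Sh}(\V(\A))$, where $\V(\A)$ carries the Alexandroff topology whose opens are the upper sets. Under this equivalence a presheaf $\uli P$ corresponds to a sheaf $\oli P$ which on the basic opens takes the values $\oli P(\uparrow A) = \uli P(A)$, and the equivalence restricts to an equivalence between the categories of internal locales; so $\uli{\Sigma_*}$ corresponds to an internal locale $\oli{\Sigma_*}$ in $\mathrm{Sh}(\V(\A))$.

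Next I would invoke the equivalence $\mathrm{Loc}(\mathrm{Sh}(X)) \cong \mathrm{Loc}/X$ from \cite{johske2} C1.6.3. Under this equivalence an internal locale $\oli L$ corresponds to its total-space bundle $\oli L(X)\to X$, and conversely a bundle of locales $p : Y \to X$ gives rise to an internal locale whose value at an open $U\subset X$ is $\OO(p^{-1}(U))$, with restriction along $U \subset V$ given by intersection with $p^{-1}(U)$. Citing the explicit computation carried out in \cite{heugel}, the total-space bundle corresponding to $\oli{\Sigma_*}$ is exactly the spectral bundle $\pi : \Sigma_* \to \V(\A)$ as just defined. Combining this with the recipe above gives $\oli{\OO\Sigma_*}(U) = \OO(\pi^{-1}(U)) = \OO(\Sigma_*|_U)$ with restriction maps $W \mapsto W \cap U$, which is the first formula. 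Transferring back along the comparison lemma, evaluating at the basic Alexandroff open $\uparrow A$ then gives $\uli{\OO\Sigma_*}(A) = \OO(\Sigma_*|_{\uparrow A})$ with transitions $W \mapsto W \cap \uparrow A'$ for $A \leq A'$, which is the second formula.

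The main obstacle would be establishing, rigorously rather than purely by citation, that the bundle obtained from $\uli{\Sigma_*}$ under the two equivalences is precisely the spectral bundle as defined. This amounts to unwinding the constructive Gelfand--Naimark duality interpreted in $\mathrm{Sh}(\V(\A))$: one has to check that the fibre over $A$ of the total-space bundle is the classical Gelfand spectrum $\mathrm{Max}(A)$, and that the topology generated by the internal frame $\uli{\OO\Sigma_*}$ yields exactly the two conditions defining opens of $\Sigma_*$. Condition 1 should follow from the fibrewise identification, since the local opens at stage $A$ are exactly the opens of $\mathrm{Max}(A)$. Condition 2, that $\lambda' \in U_{A'}$ whenever $\lambda'|_A \in U_A$ and $A \subset A'$, is the subtler one: it reflects the covariance of the internal Max construction along inclusions of commutative subalgebras, and I would verify it by tracing how restriction of Gelfand characters along $A \subset A'$ intertwines with the sheaf restriction maps of $\uli{\A}$, so that the presentation of $\uli{\OO\Sigma_*}$ by generators and relations forces exactly this upward-saturation property.
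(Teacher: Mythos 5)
Your proposal follows exactly the route the paper takes: the paper presents this corollary as nothing more than ``writing out'' the composite of the comparison-lemma equivalence $\mathrm{Set}^{\V(\A)}\cong\mathrm{Sh}(\V(\A))$ and the equivalence $\mathrm{Loc}(\mathrm{Sh}(X))\cong\mathrm{Loc}/X$, with the identification of the resulting bundle as the spectral bundle delegated entirely to the computation in \cite{heugel}. Your closing paragraph correctly isolates the one step the paper does not carry out itself (verifying that the bundle produced by internal Gelfand duality really is $\Sigma_*\ra{\pi}\V(\A)$ with the stated topology), but since the paper also handles that by citation, your argument is essentially identical to the paper's.
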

Note that by interpreting the internal spectra in both approaches as bundles of locales, we are able to compare them, even though they were internal locales in different toposes to begin with.\\
\\
We may have a reasonable idea of what the internal spectrum looks like now, but the whole operation seems of limited use if we do not know how actual physical observables should relate to it. It is somewhat unfortunate that there seems to be no obvious way of interpreting observables as global sections of $\uli C(\uli{\mathrm{Max}}(\uli \A),\uli\R)$, i.e. as elements of the external continuous homset $C(\uli{\mathrm{Max}}(\uli \A),\uli\R)$. In \cite{hlstop}, the authors were able, however, to find a reasonable injection $\A_{sa}\longhookrightarrow C(\uli{\mathrm{Max}}(\uli \A),\uli{\mathbb{IR}})$, where $\uli{\mathbb{IR}}$ denotes the internal Scott interval domain, another kind of real numbers object. In analogy with the contravariant approach, they dubbed this map `daseinisation' as well.

As a poset, this interval domain $\uli{\mathbb{IR}}$ (in $\mathrm{Set}$) is defined as the set of all non-empty compact intervals of real numbers ordered by reverse inclusion. As with any preorder, this can be endowed with the Scott topology. The closed subsets are the lower sets that are closed under suprema of directed subsets. The collection
$$(p,q)_S:=\{[r,s] \, | \, p<r\leq s<q\},\quad p,q\in\mathbb{Q}, p<q,$$
is a basis for this topology. We have a continuous embedding
$$\R \stackrel{\{\cdot\}}{\longhookrightarrow} \mathbb{IR}$$
$$r\mapsto [r,r].$$

We shall use a variant of the covariant daseinisation map that was introduced by Wolters in \cite{wolcom}. This is obtained from the map in \cite{hlstop} by replacing the (usual) order on $\A_{sa}$ by the spectral order. We have made this choice, on the one hand, because it is compatible with the ordinary Gelfand transform $A\ra{\eta_A}C(\mathrm{Max}(A),\C)$ in the sense that for all $a\in \A_{sa}$, if we write $(a)$ for the (commutative) subalgebra generated by $a$,

\[
\begin{tikzcd}[column sep=large, row sep=large]
\Sigma_*|_{\uparrow (a)} \arrow[rr, "\delta(a)"] \arrow[dr, "\eta_{(a)}"'] & & \mathbb{IR}\\
& \R \arrow[ur, hook, "\{\cdot\}"'] &
\end{tikzcd}
\]
On the other hand, it will turn out that by making this choice, the resulting way of constructing propositions out of observables and intervals coincides with the one described in section \ref{sec:bn}. (See theorem \ref{thm:wolcom}.)

\begin{definition}We define the \emph{covariant daseinisation map} to be the function (which is easily seen to be injective by injectivity of the (inner and outer) daseinisations)

\[
\begin{tikzcd}[column sep=small, row sep=large]
\A_{sa} \arrow[r, hook, "\delta"] & C(\Sigma_*,\mathbb{IR}) & \\
a \arrow[r, mapsto] & \Big(\Sigma_* \xrightarrow{\delta(a)} \mathbb{IR}\Big) & \\
& (A,\lambda) \arrow[r, mapsto] & {[\lambda(\delta^i(a)_A),\lambda(\delta^o(a)_A)]}.
\end{tikzcd}
\]
One can also define the interval domain internal to $\mathrm{Set}^{\V(\A)}$, for instance as an internal locale, by using the internal language. (See \cite{wolcom} for an explicit description.) The specifics do not matter here. The result, however, is
that we also find a map $\A_{sa}\ra{\uli\delta}C(\uli \Sigma_*,\uli{\mathbb{IR}})$.
\end{definition}
We will use this covariant daseinisation map to construct elementary propositions out of the combination of an observable and a subset of the outcomes.

\paragraph{States} As might be expected, there is also a geometric interpretation of states. This precisely reflects the one given by Riesz-Markov duality in classical kinematics. However, to avoid introducing the additional definition of an `internal Borel $\sigma$-algebra associated to an internal locale', we note that finite regular measures on the Borel $\sigma$-algebra can be described by probability valuations on the frame of open sets. This motivates the following definition.
\begin{definition}[Probability valuation on an internal frame $F$] We call an order-preserving map $F\ra{\mu}[0,1]_l$\footnote{Here $[0,1]_l$ denotes the unit interval in the lower reals (in the topos).}  a probability valuation\footnote{cf. definition \ref{def:intmeas}.} if
\begin{enumerate}
\item $\mu(\bot)=0$;
\item $\mu(\top)=1$;
\item $\mu(x)+\mu(y)=\mu(x\wedge y)+\mu(x\vee y)$;
\item $\mu(\bigvee_{i\in I}x_i)=\bigvee_{i\in I}\mu(x_i)$, for any directed set $(x_i)_{i\in I}$.
\end{enumerate}
\end{definition}
As was recently proved by Coquand and Spitters, the Riesz-Markov theorem holds in the topos setting. We formulate a weak version of it\footnote{i.e. one that does not take into account the structure of the space of valuations as a locale or combine it immediately with Gelfand-Naimark.}.
\begin{theorem}[Constructive Riesz-Markov,\cite{coqint}]\label{thm:intrie} For any commutative C*-algebra $\uli\A$ internal to a Grothendieck topos $\mathcal{E}$, there is a natural bijection between internal states on $\uli\A$ and probability valuations on the frame of opens of its internal Gelfand spectrum.
\end{theorem}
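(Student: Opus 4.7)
The plan is to prove this entirely in the internal language of the Grothendieck topos $\mathcal{E}$, which is intuitionistically valid, so I never need classical measure theory (which depends on choice and excluded middle). Invoking the constructive Gelfand--Naimark theorem I may identify $\uli\A$ with $\uli{C}(\uli{\mathrm{Max}}(\uli\A),\uli\C)$, so the problem reduces to a Riesz--Markov type statement: a natural bijection between positive unital linear functionals on $\uli{C}(X,\uli\C)$, for $X$ a compact completely regular locale internal to $\mathcal E$, and probability valuations on its frame of opens. The passage to valuations on opens (rather than measures on a Borel $\sigma$-algebra) is the standard constructive substitute, as the theorem's formulation already suggests, and it is well suited to the localic nature of $\uli{\mathrm{Max}}(\uli\A)$.

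For the forward direction, I would send a state $I$ to the would-be valuation
\begin{equation*}
\mu_I(U) := \sup\{\, I(f) \mid f \in \uli{C}(X,[0,1]),\; \mathrm{coz}(f) \subseteq U \,\},
\end{equation*}
with the supremum taken in the lower reals $\uli{[0,1]}_l$. Complete regularity of $X$ is exactly what guarantees that such $f$'s exhaust $U$. Normalisation is immediate from $I(1)=1$; Scott-continuity on directed joins follows from the fact that the cozero set of any fixed $f$ is compactly contained in $\bigvee_i U_i$ iff it is compactly contained in some finite sub-join; and the modular law $\mu_I(U\vee V)+\mu_I(U\wedge V)=\mu_I(U)+\mu_I(V)$ reduces, via subordinate partitions of unity, to linearity of $I$. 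Conversely, from a valuation $\mu$ I build an integral $I_\mu$ by a constructive layer-cake formula
\begin{equation*}
I_\mu(f) := \int_0^{\norm{f}} \mu(\{f > t\})\, dt \;-\; \int_0^{\norm{f}} \mu(\{-f > t\})\, dt
\end{equation*}
for self-adjoint $f$, the right-hand side being defined as the common supremum of upper and infimum of lower Riemann sums; this is legitimate because $t\mapsto\mu(\{f>t\})$ is a monotone function with values in the lower reals and the sets $\{f>t\}$ are opens in $X$. One then $\uli\C$-linearly extends to all of $\uli\A$.

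To finish, I would verify that the two assignments are mutually inverse, checking the round trip $I\mapsto\mu_I\mapsto I_{\mu_I}$ on $\uli\A_{sa}$ by approximating each self-adjoint $f$ uniformly from above and below by combinations $\sum_i t_i\,\chi_{\{f>t_i\}}$ and realising each characteristic function as a directed supremum of continuous subordinate functions, again by complete regularity. The reverse round trip $\mu\mapsto I_\mu\mapsto \mu_{I_\mu}$ is then checked on each open by inspecting the defining supremum. Naturality in $\uli\A$ is forced rather than proved: every construction above is expressed in the internal language and thus commutes with the inverse image parts of geometric morphisms, in particular with the separating family of evaluation functors $\mathrm{ev}_A$ on $\mathrm{Set}^{\V(\A)}$. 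The principal obstacle is the careful constructive treatment of the layer-cake integral: one must avoid any appeal to measurable or null sets, phrase everything in terms of Scott-continuous valuations on opens, and make sure that the supremum defining $\mu_I(U)$ genuinely lives in the lower reals (so that it is not necessary to locate it in the Dedekind reals, which would require classical completeness). The same care is needed to extend $I_\mu$ from self-adjoint to arbitrary elements while preserving linearity internally, rather than merely pointwise.
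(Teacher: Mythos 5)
The paper does not actually prove this theorem: it imports it wholesale from Coquand and Spitters \cite{coqint}, so there is no in-text proof to compare against. Your overall strategy --- reduce via constructive Gelfand--Naimark to a localic Riesz--Markov statement, pass from a state to a valuation by a supremum over subordinate functions, come back by a lower/upper Riemann-sum integral, and do everything in the internal language so that naturality comes for free --- is essentially the strategy of the cited source, so as an outline it is the right one.

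Two steps need repair, however. First, with the side condition $\mathrm{coz}(f)\subseteq U$ your Scott-continuity argument does not go through as written: the cozero set of $f$ is an open, not a compact, so its containment in a directed join $\bigvee_i U_i$ does not localise to any single $U_i$. You must either define $\mu_I(U)$ as a supremum over those $f$ whose cozero set is \emph{well inside} $U$ (equivalently way below $U$, by compactness of the spectrum), or observe that this restricted supremum agrees with yours because $I(f)=\sup_n I\bigl((nf-1)^+\wedge 1\bigr)$ and each truncation has cozero set well inside $\mathrm{coz}(f)$. Second, and more importantly, the locatedness of the layer-cake integral is the real constructive content of the theorem, and your closing discussion points the difficulty the wrong way: the issue is not keeping $\mu_I(U)$ in the lower reals (valuations are lower-real-valued by definition, and a directed supremum of lower reals is automatically a lower real), but forcing $I_\mu(f)$ \emph{into} the Dedekind reals, since an internal state must be valued in $\uli\Delta_\C$. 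Because $\mu$ takes values only in $[0,1]_l$, your ``upper'' Riemann sums are themselves merely lower reals, so ``infimum of upper sums'' is not available as stated; what saves the construction is the estimate that for a partition of mesh $\epsilon$ the upper and lower sums satisfy $U_P\leq L_P+\epsilon\cdot\mu(\top)=L_P+\epsilon$ (a telescoping consequence of monotonicity), whence the supremum of the lower sums is located and hence Dedekind. This estimate uses the normalisation $\mu(\top)=1$ essentially; without it the inverse construction does not produce a state and the claimed bijection collapses.
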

This gives yet another characterisation of states in the covariant approach (cf. theorem \ref{thm:intsta}): as internal probability valuations $\uli{\OO\Sigma_*}\ra{\uli\mu}\uli{[0,1]_l}$. Again, these are uniquely determined by their global sections $\OO(\Sigma_*)\cong\Gamma\uli{\OO\Sigma_*}\ra{\mu=\Gamma\uli\mu}\mathrm{Hom}_{\mathrm{Pos}}(\V(\A),[0,1])$. The reader is encouraged to note the similarities with both classical kinematics and the contravariant approach.

However, the analogy with classical mechanics fails in the sense that, here, pure states are not derived from points of the space. Indeed, we have the following reformulation of the Kochen-Specker theorem.
\begin{theorem}[\cite{doekoc}, \cite{hlstop}] Suppose $\A$ is a von Neumann algebra with no type $I_1$ or $I_2$ summands. Then the internal Gelfand spectrum $\uli\Sigma_*$ in $\mathrm{Set}^{\V(\A)}$ does not have any points as an internal locale.
\end{theorem}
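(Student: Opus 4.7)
The plan is to derive the absence of internal points of $\uli{\mathrm{Max}(\A)}$ from the observable form of the Kochen-Specker theorem (Theorem \ref{thm:kocobs}), via the constructive Gelfand-Naimark correspondence. By that correspondence together with (the internal analogue of) Theorem \ref{thm:maxsp}, a point of the internal locale $\uli{\mathrm{Max}(\A)}$ is precisely an internal $*$-homomorphism $\uli\A\ra{}\uli\C$, so it suffices to rule out the existence of any such morphism. The hypothesis excluding type $I_1$ and $I_2$ summands is exactly what is needed to invoke Theorem \ref{thm:kocobs} at the end of the argument.

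To extract external data from an internal point, I would use the equivalence $\mathrm{Loc}(\mathrm{Sh}(\V(\A)))\cong\mathrm{Loc}/\V(\A)$ to reinterpret a point of $\uli{\mathrm{Max}(\A)}$ as a locale-theoretic section $s$ of the spectral bundle $\pi:\Sigma_*\ra{}\V(\A)$. Since each fibre $\mathrm{Max}(A)$ is compact Hausdorff, hence sober, such a section picks out, for each $A\in\V(\A)$, an honest point $\lambda_A\in\mathrm{Max}(A)$, i.e.\ a $*$-homomorphism $A\ra{}\C$. The upper-set continuity condition built into the topology of $\Sigma_*$, namely that $\lambda'|_A\in U_A$ implies $\lambda'\in U_{A'}$ whenever $A\subset A'$, combined with Hausdorffness of the fibres, then forces $\lambda_{A'}|_A=\lambda_A$ whenever $A\subseteq A'$: if $V$ is any open neighbourhood of $\lambda_A$ in $\mathrm{Max}(A)$, its upward saturation in $\Sigma_*$ pulls back under $s$ to an open set containing $A$, hence (being upper) contains $A'$, so $\lambda_{A'}|_A\in V$; letting $V$ shrink yields the equality. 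Gluing the compatible family $\{\lambda_A\}$ then produces a single function $\A\ra{}\C$ (extended from the normal elements by real-linear decomposition $a=\mathrm{Re}(a)+i\,\mathrm{Im}(a)$) that restricts to a $*$-homomorphism on each commutative Von Neumann subalgebra, contradicting Theorem \ref{thm:kocobs}.

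The main obstacle I foresee is this passage from an abstract internal locale point to a section of the bundle by honest spatial points, since $\Sigma_*$ is not in general sober at the level of its total space. I expect to navigate this by the observation that spatiality is only needed fibrewise, each $\mathrm{Max}(A)$ being compact Hausdorff and hence sober, which is exactly what the argument above exploits. Alternatively, one could bypass the bundle picture altogether by working in the internal language of $\mathrm{Set}^{\V(\A)}$, using Lemma \ref{lem:models} applied to the geometric fragment of the theory of $*$-homomorphisms to reduce the internal morphism $\uli\A\ra{}\uli\C$ to a natural family of components $A\ra{}\uli\C(A)$, and then evaluating at the minimal stage $A\in\uparrow A$ to obtain the required external family of $*$-homomorphisms into $\C$ to which Kochen-Specker applies.
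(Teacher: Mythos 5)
Your proposal is correct, but it takes a genuinely different route from the paper's. The paper stays on the algebraic side: a point $\uli{*}\ra{\uli\rho}\uli{\mathrm{Max}(\A)}$ induces by precomposition a map $\uli{C}(\uli{\mathrm{Max}(\A)},\uli{\C})\ra{}\uli{C}(\uli{*},\uli{\C})$, which via the internal Gelfand isomorphism $\uli\A\cong\uli{C}(\uli{\mathrm{Max}(\A)},\uli{\C})$ yields compatible component maps $A\ra{}\C$; the real burden there (flagged as ``quite non-trivial'') is showing these components are $*$-homomorphisms before invoking Theorem \ref{thm:kocobs}. You instead pass through $\mathrm{Loc}(\mathrm{Sh}(\V(\A)))\cong\mathrm{Loc}/\V(\A)$ and extract honest fibrewise points, which are $*$-homomorphisms for free by Theorem \ref{thm:maxsp}; the burden shifts to the locale-to-point passage and to the compatibility of the family $\{\lambda_A\}$, both of which you handle correctly. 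The one step to make explicit is why the section ``picks out'' a point of the fibre at all: the justification is not sobriety of the fibre alone but the fact that upward saturation $V\mapsto\{(A',\lambda')\,|\,A\subseteq A',\ \lambda'|_A\in V\}$ preserves arbitrary unions and finite intersections, so that composing it with $s^*$ and with evaluation at $A$ yields a completely prime filter on $\OO(\mathrm{Max}(A))$, whence an honest $\lambda_A$ by spatiality of $\mathrm{Max}(A)$; your compatibility computation then goes through verbatim, and $T_1$-separation of $\mathrm{Max}(A)$ upgrades ``$\lambda_{A'}|_A$ lies in every neighbourhood of $\lambda_A$'' to equality. What your route buys is twofold: it sidesteps the hard, unproved step in the paper's sketch, and it simultaneously establishes the bundle reformulation stated immediately afterwards, supplying precisely the argument that the paper's subsequent Remark says is missing from \cite{heugel} --- a localic section of $\Sigma_*\ra{}\V(\A)$ does yield a compatible family of spatial points even though $\Sigma_*$ itself need not be sober. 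What the paper's route buys is independence from any presentation of the internal locale as a bundle over a localic base. Your closing alternative (Lemma \ref{lem:models} applied to a ``geometric fragment of the theory of $*$-homomorphisms'') is essentially the paper's proof and conceals the same non-trivial step, so the bundle argument is the one worth keeping.
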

\begin{proof}[Proof (sketch)] The idea is that a point\footnote{Here $\uli *$ denotes the terminal internal locale, i.e. the one with corresponding frame the subobject classifier.} $\uli * \ra{\uli\rho}\uli\Sigma_*$ defines a map $\A\ra{}\mathrm{pt}(\uli{\C})\cong\C$ that turns out to restrict to a unital *-homomorphism on each commutative subalgebra, i.e. a map that cannot exist by the Kochen-Specker theorem. (The converse is easy to see.) Indeed, such a point defines a map $\uli C(\uli\Sigma_*,\uli\C)\ra{}\uli C(\uli *,\uli\C)$. Recall that $\uli\A\cong \uli C(\uli\Sigma_*,\uli\C)$, to see that we get a map $\uli\A\ra{}\uli C(\uli *,\uli\C)$. That is, in components, compatible maps $A\ra{}\C$ for all $A\in\V(\A)$. It can be shown that these are unital *-homomorphisms. (This is quite non-trivial.) As we know, these cannot exist by the Kochen-Specker theorem.\end{proof}

We also have a reformulation of the Kochen-Specker theorem in terms of the spectral bundle.
\begin{theorem}[\cite{doekoc},\cite{hlstop}] Suppose $\A$ is a von Neumann algebra with no type $I_1$ or $I_2$ summands. Then the bundle $\Sigma_*\ra{}\V(\A)$ does not admit global sections\footnote{A mathematician might wonder what the added value is of stating the result like this. Such a form of the result is appealing to a physicist since it resembles a very fundamental phenomenon from (Yang-Mills) gauge theory, called Gribov ambiguity. There, the fact that a certain principal bundle does not admit global sections (i.e. is non-trivial) results in the impossibility of making a global choice of gauge (that would fix the value of observables). We have a similar phenomenon here: we cannot globally fix the value of all observables. Regardless of the (mathematical) structural similarity, however, the physical origins of the effects are very different.} (even as maps of locales).
\end{theorem}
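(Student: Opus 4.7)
The plan is to reduce this statement to the previous theorem (which says that the internal locale $\uli{\Sigma_*}=\uli{\mathrm{Max}(\A)}$ has no points) by invoking the equivalence of categories
\[
\mathrm{Loc}(\mathrm{Sh}(\V(\A)))\cong \mathrm{Loc}/\V(\A),
\]
which is the same result (\cite{johske2}, C1.6.3) already used in the construction of the spectral bundle. Under this equivalence, the internal locale $\uli{\Sigma_*}$ corresponds precisely to the external bundle of locales $\Sigma_*\ra{\pi}\V(\A)$.

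The key observation is to identify the image of the terminal internal locale $\uli{*}$ in $\mathrm{Sh}(\V(\A))$: since the terminal object of any slice $\mathcal{C}/X$ is the identity $X\ra{\mathrm{id}}X$, the terminal internal locale corresponds to $\V(\A)\ra{\mathrm{id}}\V(\A)$ in $\mathrm{Loc}/\V(\A)$. Consequently, a point $\uli{*}\ra{}\uli{\Sigma_*}$ (i.e.\ an internal locale morphism out of the terminal object) corresponds under the equivalence to a morphism
\[
(\V(\A)\ra{\mathrm{id}}\V(\A))\ra{}(\Sigma_*\ra{\pi}\V(\A))
\]
in $\mathrm{Loc}/\V(\A)$, which is exactly a section of the spectral bundle $\pi$ in the category of locales. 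So ``points of $\uli{\Sigma_*}$'' and ``locale-theoretic global sections of $\Sigma_*\ra{}\V(\A)$'' are literally the same data.

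Putting the two steps together, the previous theorem gives that if $\A$ has no type $I_1$ or $I_2$ summands then $\uli{\Sigma_*}$ has no points, whence $\pi$ admits no global sections as a map of locales. The only subtlety worth emphasising is that the statement is about locale-theoretic sections (not continuous sections of topological spaces in $\mathrm{Set}$), which is exactly what the equivalence delivers; this is also what makes the argument insensitive to any sobriety issues of $\Sigma_*$ or $\V(\A)$ that might otherwise complicate a direct point-set proof. I do not expect a real obstacle here beyond being careful about this last point.
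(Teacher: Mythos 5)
Your proposal is correct and is essentially the paper's own argument: both reduce the statement to the pointlessness of the internal locale $\uli{\mathrm{Max}(\A)}$ via the equivalence $\mathrm{Loc}(\mathrm{Sh}(\V(\A)))\cong\mathrm{Loc}/\V(\A)$, identifying the terminal internal locale with the identity bundle so that points correspond to sections. The only cosmetic difference is that the paper phrases this identification in terms of the subobject classifier corresponding to the frame map $\OO(\V(\A))\ra{\mathrm{id}}\OO(\V(\A))$, whereas you phrase it in terms of the terminal object of the slice category.
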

\begin{proof}The idea is that under the identification $\mathrm{Loc}(\mathrm{Sh}(X))\cong \mathrm{Loc}/X$ points of an internal locale on the left-hand side correspond to sections of the bundle of locales on the right-hand side. Indeed, a point of an internal locale $L$ in $\mathrm{Sh}(X)$ is a map of internal frames $\OO L\ra{}\Omega$, and under the correspondence this is exactly a section of the associated bundle of locales over $X$. We have already shown that the internal spectrum is an internal locale with no points.\end{proof}
\begin{remark}As far as we know, this result first appeared in \cite{heugel}. However, that proof rests on the unexplained claim that every section of the bundle as a map of locales comes from a section as a map of topological spaces. Since the spaces $\V(\A)$ and $\Sigma_*$ are sober if and only if $\V(\A)$ is co-well-founded (every non-empty subset has a maximal element), this claim is not immediately obvious to the author. Note, however, that it does arise as a corollary of this result.\end{remark}

\paragraph{Propositions}\label{sec:intstosp}
The following result of \cite{hlsboh} gives us hope that we will be able to give a construction of the internal quantum phase space using only the quantum logic, as opposed to the original construction as the internal Gelfand spectrum of the internal commutative C*-algebra of quantum observables.
\begin{theorem}\label{thm:stonspe} The frame of opens $\OO(\mathrm{Max}(A))$ of the Gelfand spectrum of a commutative Rickart C*-algebra\footnote{For our purposes the following is a practical definition: a commutative C*-algebra whose Gelfand spectrum is Stone and whose Boolean algebra of projections is countably complete. \cite{berbae}} $A$ is isomorphic to the frame $\mathrm{Spec}(\Pi(A))$ of ideals of $\Pi(A)$.
\end{theorem}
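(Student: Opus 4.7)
The plan is to factor the claimed isomorphism through the Boolean algebra $\mathrm{Clopen}(\mathrm{Max}(A))$ of clopen subsets of the Gelfand spectrum, using Stone duality as the bridge between the algebraic side (projections) and the geometric/topological side (opens).

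First I would use the commutative Gelfand--Naimark theorem to identify $A\cong C(\mathrm{Max}(A),\C)$, whence a self-adjoint idempotent $p\in \Pi(A)$ corresponds to a continuous function $\mathrm{Max}(A)\to\{0,1\}\subset\C$, i.e.\ the characteristic function $\chi_{U_p}$ of a clopen subset $U_p\subset\mathrm{Max}(A)$. The assignment $p\mapsto U_p$ is then easily checked to be a homomorphism of Boolean algebras (meet goes to intersection, complement to complement, etc.). The Rickart hypothesis guarantees that $\mathrm{Max}(A)$ is a Stone space, so every clopen is of this form, and injectivity follows from the fact that continuous functions on a compact Hausdorff space are determined by their values; hence
\[
\Pi(A)\;\cong\;\mathrm{Clopen}(\mathrm{Max}(A))
\]
as Boolean algebras. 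This is essentially Stone duality applied to the situation.

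Next I would invoke the standard fact, valid for any Stone space $X$, that the clopens form a basis closed under finite unions and intersections, so that every open set is a directed union of clopens. This gives a frame isomorphism
\[
\OO(X)\;\cong\;\mathrm{Idl}(\mathrm{Clopen}(X)),
\]
where an ideal $I\subset\mathrm{Clopen}(X)$ maps to the open $\bigcup_{U\in I}U$ and an open $V$ maps to the ideal $\{U\in\mathrm{Clopen}(X)\mid U\subseteq V\}$. Indeed, this just expresses the familiar statement that the frame of opens of the Stone spectrum of a Boolean algebra $B$ is the free frame on $B$, namely $\mathrm{Idl}(B)$.

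Combining the two steps yields
\[
\OO(\mathrm{Max}(A))\;\cong\;\mathrm{Idl}(\mathrm{Clopen}(\mathrm{Max}(A)))\;\cong\;\mathrm{Idl}(\Pi(A))\;=\;\mathrm{Spec}(\Pi(A)),
\]
with notation matching the Stone duality convention of the excerpt. The main obstacle, I expect, is not any single calculation but rather the careful verification that the homomorphism $\Pi(A)\to\mathrm{Clopen}(\mathrm{Max}(A))$ is surjective: this rests on the Rickart property forcing the Gelfand spectrum to be Stone so that clopens abound, and then on the observation that the characteristic function of a clopen is continuous, hence lies in $C(\mathrm{Max}(A),\C)\cong A$, and is idempotent and self-adjoint. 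Once this is in place, the passage from clopens to the full frame of opens via ideal completion is essentially formal.
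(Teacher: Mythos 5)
Your argument is sound, and in fact the paper offers no proof of this statement at all --- it is quoted directly from \cite{hlsboh} --- so there is nothing in the text to compare against step by step. Your factorisation $\OO(\mathrm{Max}(A))\cong\mathrm{Idl}(\mathrm{Clopen}(\mathrm{Max}(A)))\cong\mathrm{Idl}(\Pi(A))=\mathrm{Spec}(\Pi(A))$ is the natural classical route and both halves check out (for the first, a clopen $W\subseteq\bigcup_{U\in I}U$ is compact, hence covered by finitely many members of the ideal $I$ and therefore lies in $I$; for the second, self-adjoint idempotents of $C(X,\C)$ are exactly characteristic functions of clopens). Two remarks. First, you place the weight of the Rickart hypothesis on the surjectivity of $p\mapsto U_p$ onto clopens, but that surjectivity holds for \emph{any} compact Hausdorff spectrum, since the characteristic function of any clopen is already a continuous self-adjoint idempotent; so $\Pi(A)\cong\mathrm{Clopen}(\mathrm{Max}(A))$ needs no hypothesis beyond commutativity. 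What the Stone-ness of the spectrum (the content of the Rickart assumption in the paper's working definition) actually buys is that the clopens form a \emph{basis}, which is precisely what makes your second isomorphism $\OO(X)\cong\mathrm{Idl}(\mathrm{Clopen}(X))$ work --- for a connected spectrum the right-hand side degenerates to a two-element frame while the left-hand side does not. Second, your proof runs through the points of the Gelfand spectrum and is therefore non-constructive, whereas the proof in \cite{hlsboh} is deliberately lattice-theoretic so that the result survives internalisation in a topos; since the present paper applies the theorem only externally, one commutative subalgebra $A\in\V(\A)$ at a time, and performs the internalisation separately via the subsequent lemma on internal ideal lattices, your classical argument is entirely adequate for the statement as used here.
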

We would like to go one step further than \cite{hlsboh} and interpret the consequences of this result internally in our topos. To be precise, we will apply internal Stone duality\footnote{Stone duality holds constructively if one is satisfied with the Stone space as a locale. (Recall that one needs the Boolean prime ideal theorem, a weak form of AC, to construct the points of the Stone spectrum.)} to the internal Boolean algebra $\uli{\Pi(\A)}$ to obtain another internal locale, which, fortunately, coincides with our internal Gelfand spectrum. This means that we can recover $\uli \A$ from $\uli{\Pi(\A)}$ as the object of continuous functions from its internal Stone spectrum to $\uli\C$.
\begin{theorem}[Constructive Stone-Duality,\cite{johsto}] For any (elementary) topos $\mathcal{E}$, we have an equivalence of categories
$$\mathrm{Bool}(\mathcal{E})^{op}\mathop{\leftrightarrows}^{\uli{\mathrm{Compl}}}_{\uli{\mathrm{Spec}}}\mathrm{Stone}(\mathcal{E}),$$
where we write $\mathrm{Bool}(\mathcal{E})$ for the category of internal Boolean algebras and homomorphisms and $\mathrm{Stone}(\mathcal{E})$ for the category of internal Stone locales\footnote{i.e. zero-dimensional compact locales} and continuous functions and where $\uli{\mathrm{Compl}}$ sends $\Sigma$ to the Boolean algebra of complemented elements of $\OO\Sigma$ and $\uli{\mathrm{Spec}}$ sends a Boolean algebra $B$ to its internal frame of ideals.\end{theorem}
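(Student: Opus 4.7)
The plan is to argue entirely internally in $\mathcal{E}$, using the Mitchell--B\'enabou language, and to exploit the fact that both constructions are described by first-order (indeed cartesian) conditions. Consequently they lift from $\mathrm{Set}$ to any elementary topos without appealing to the Boolean prime ideal theorem or Zorn's lemma; the standard point-free trick is to avoid ever extracting points from the Stone locales and instead to work uniformly with their frames of opens.

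First I would verify that the two assignments land in the stated categories. For an internal Boolean algebra $B$, the internal frame $\mathrm{Idl}(B)$ of ideals is compact, because the top element $B$ is the principal ideal generated by $1$, so any directed cover is forced to contain a principal subideal with $1$; it is zero-dimensional, because each principal ideal $\downarrow b$ is complemented by $\downarrow \neg b$, and every ideal is the directed join of such complemented elements. Conversely, the complemented elements of any internal frame form a Boolean sublattice, distributivity together with complementation yielding the Boolean identities. Functoriality is routine: a Boolean homomorphism $B \to B'$ extends along $b \mapsto \downarrow b$ to a frame morphism $\mathrm{Idl}(B) \to \mathrm{Idl}(B')$, while a frame map preserves finite meets, joins and complements and so restricts to $\uli{\mathrm{Compl}}$.

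Next I would construct the unit and counit. The unit $\eta_B \colon B \to \uli{\mathrm{Compl}}(\mathrm{Idl}(B))$ sends $b$ to $\downarrow b$. Injectivity is immediate; for surjectivity, if an ideal $I$ admits a complement $J$ with $I \vee J = B$ and $I \wedge J = (0)$, then $1 \in I \vee J$ means $1 = i \vee j$ for some $i \in I$, $j \in J$, and disjointness of the ideals forces $i$ and $j$ to be Boolean complements in $B$, so $I = \downarrow i$ is principal. The counit $\varepsilon_\Sigma \colon \mathrm{Idl}(\uli{\mathrm{Compl}}(\OO\Sigma)) \to \OO\Sigma$ sends an ideal of complemented opens to its frame join; zero-dimensionality delivers surjectivity, and the same principal-ideal argument combined with compactness gives injectivity. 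Naturality of both maps is automatic from their definitions.

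The main obstacle is verifying constructively that every complemented ideal of $B$ is principal. Classically this is immediate, but in a topos one must ensure that the witnesses $i, j$ with $i \vee j = 1$ arise from finitary reasoning only, not from a choice principle; the point is that compactness of the top ideal guarantees internally that any cover admits a finite subcover, and combined with the Boolean structure this yields the required witnesses within the internal logic. Once this step is justified, the unit and counit are natural isomorphisms and the equivalence follows. The fully internalised account appears in \cite{johsto}, whose argument I would follow verbatim.
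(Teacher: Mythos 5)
Your proof is correct, but it takes a genuinely more self-contained route than the paper, which disposes of the theorem in one line: it observes that in a compact regular frame (zero-dimensionality implies regularity) the finite elements are precisely the complemented ones, and then quotes the duality between distributive lattices and coherent locales (Corollary II.3.3 of \cite{johsto}), of which Stone duality is the Boolean special case. You instead build the equivalence by hand: the unit $b\mapsto\,\downarrow b$ together with the computation that complemented ideals are principal (your witnesses $i,j$ with $i\vee j=1$ come from the internal existential $1\in I\vee J$, so no choice principle is needed, and one finishes with $x=(x\wedge i)\vee(x\wedge j)=x\wedge i$ for $x\in I$), and the counit given by taking joins of ideals of complemented opens, whose injectivity is exactly the ``complemented implies finite'' half of the paper's key observation --- note that this step uses compactness of the \emph{whole} frame $\OO\Sigma$ (from $c\leq\bigvee D$ one covers the top by $\{\neg c\vee d\}$), not merely the principality of the top ideal. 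What the paper's route buys is brevity and the placement of Stone duality inside the general coherent/spectral picture; what yours buys is independence from that machinery and an explicit description of the unit and counit, which is what one actually needs later when identifying $\uli{\mathrm{Max}(\A)}$ with $\uli{\mathrm{Spec}}(\uli{\Pi(\A)})$. One caveat on your opening framing: the theory of frames is not cartesian, nor even first-order, since completeness quantifies over arbitrary subobjects, so the blanket claim that everything lifts to an arbitrary topos ``because the conditions are cartesian'' is an overstatement. It does no harm here only because your actual argument is carried out in the internal language, with $\mathrm{Idl}(B)$ realised as a subobject of $\p B$, and nowhere invokes excluded middle or a choice principle.
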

\begin{proof}Noting that the finite elements of a regular frame (zero-dimensionality implies regularity) are precisely the complemented ones, this is an easy corollary of Corollary II.3.3 in \cite{johsto}.\end{proof}
\begin{corollary}The internal Gelfand spectrum $\uli\Sigma_*$ of $\uli{\A}$ coincides with the internal Stone spectrum\footnote{This suggests that, at least if we are working with a von Neumann algebra, the internal Gelfand spectrum should not be viewed as an analogue of the classical phase space. It is a Stone space and therefore zero-dimensional. (In particular, we should hardly expect to formulate any kind of dynamics on it.) We should rather view it as the quantum analogue of the Gelfand spectrum of the universal enveloping von Neumann algebra of the continuous functions on our classical phase space, i.e. its hyperstonean cover. If we are really looking for a `quantum phase space', we might want to look for a characterisation of $C(X)$ in its universal enveloping von Neumann algebra, find the analogous C*-subalgebra of $\mathcal{B}(\h)$ and study its internal Gelfand spectrum. } $\uli{\mathrm{Spec}}(\uli{\Pi(\A)})$ of $\uli{\Pi(\A)}$. Put differently\footnote{Here, $\uli{\mathrm{Compl}(\Sigma_*)}$ denotes the internal Boolean subalgebra of $\uli{\OO\Sigma_*}$ consisting of the complemented elements.}, $\uli{\Pi(\A)}\cong \uli{\mathrm{Compl}(\Sigma_*)}=\{U\in \uli{\OO\Sigma_*}\, | \, U\vee \neg U=\top\}$.
\end{corollary}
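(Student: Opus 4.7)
The plan is to apply Theorem \ref{thm:stonspe} internally to $\uli{\A}$ and then invoke constructive Stone duality on $\uli{\Pi(\A)}$ to rewrite the result. The first isomorphism is essentially a stalkwise version of \ref{thm:stonspe}, and the ``put differently'' reformulation is then just the unit of the Stone adjunction in the topos $\mathrm{Set}^{\V(\A)}$.

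First, I would verify that $\uli{\A}$ is internally a commutative Rickart C*-algebra. Each $A\in\V(\A)$ is a commutative Von Neumann algebra, hence Rickart in the sense above: $\mathrm{Max}(A)$ is (hyper)Stonean and $\Pi(A)$ is a complete Boolean algebra. The internal content of these conditions is already accessible here, since, as observed right after the definition of $\uli{\Pi(\A)}$, $\uli{\Pi(\A)}$ is an internally complete Boolean algebra (by a stalkwise argument along the lines of Theorem \ref{thm:intfrm} combined with Lemma \ref{lem:models}).

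Next, one applies Theorem \ref{thm:stonspe} in each context to obtain a family of frame isomorphisms
\[ \OO(\mathrm{Max}(A)) \;\cong\; \mathrm{Spec}(\Pi(A)), \qquad A\in\V(\A). \]
Naturality in the inclusions $A\subset A'$ holds because both sides are functorial: an inclusion induces restriction $\lambda\mapsto\lambda|_A$ on maximal spectra and inclusion $\Pi(A)\hookrightarrow\Pi(A')$ on projection lattices, and one checks that the two resulting frame maps agree under the bijection. Assembling over $\V(\A)$, using the explicit description of the covariant spectral bundle and the identification $\mathrm{Loc}(\mathrm{Sh}(\V(\A)))\cong \mathrm{Loc}/\V(\A)$, one obtains the isomorphism of internal locales $\uli{\mathrm{Max}(\A)}\cong\uli{\mathrm{Spec}}(\uli{\Pi(\A)})$, which is the first assertion. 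Constructive Stone duality applied internally to $\uli{\Pi(\A)}$ then provides a unit isomorphism $\uli{\Pi(\A)}\cong \uli{\mathrm{Compl}}(\uli{\mathrm{Spec}}(\uli{\Pi(\A)}))$; transporting through the isomorphism above yields $\uli{\Pi(\A)}\cong\uli{\mathrm{Compl}(\mathrm{Max}(\A))}$, and the explicit description as $\{U\in\uli{\OO\mathrm{Max}(\A)}\mid U\vee\neg U=\top\}$ is then the definition of complemented elements.

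The main obstacle I expect lies in the passage from the externally-stated Theorem \ref{thm:stonspe} to its internal use. Since ``Rickart C*-algebra'' and the statement of \ref{thm:stonspe} are not patently geometric, Lemma \ref{lem:models} does not immediately transport them; one has to either verify constructiveness of the argument (so that it holds in any Grothendieck topos, in particular applied to $\uli{\A}$) or, as sketched above, carry out the stalkwise isomorphism, check naturality in $\V(\A)$, and reassemble via the equivalence $\mathrm{Loc}(\mathrm{Sh}(\V(\A)))\cong \mathrm{Loc}/\V(\A)$. Either route demands careful bookkeeping to match the internal frame of ideals of $\uli{\Pi(\A)}$ with the sections of the spectral bundle $\Sigma_*\to\V(\A)$ over upper sets of $\V(\A)$.
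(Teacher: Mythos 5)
Your overall plan is the same as the paper's: apply Theorem \ref{thm:stonspe} externally in each context $A\in\V(\A)$ (using that every commutative Von Neumann algebra is Rickart) to get $\OO(\mathrm{Max}(A))\cong\mathrm{Spec}(\Pi(A))$, and then promote this family of isomorphisms to an isomorphism of internal locales; the ``put differently'' clause is indeed just the unit of constructive Stone duality afterwards. The one place where your sketch stops short is precisely where the paper's proof does its real work: you defer the identification of the internal Stone spectrum $\uli{\mathrm{Spec}}(\uli{\Pi(\A)})$ to ``careful bookkeeping'' via the bundle equivalence $\mathrm{Loc}(\mathrm{Sh}(\V(\A)))\cong\mathrm{Loc}/\V(\A)$, but to use that equivalence you first need to know what the internal frame of ideals \emph{is} as a presheaf. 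The paper settles this with a separate lemma, proved by unwinding the Kripke--Joyal sheaf semantics of the formulas defining an ideal over a poset site: for an internal lattice $L$ in $\mathrm{Set}^{P^{op}}$ one has $(\mathrm{Spec}(L))(C)=\mathrm{Spec}(L(C))$, i.e.\ the internal ideal frame is computed pointwise. Once that is known, your naturality check and the detour through the spectral bundle become unnecessary: both $\uli{\mathrm{Max}(\A)}$'s frame and $\uli{\mathrm{Spec}}(\uli{\Pi(\A)})$ are subfunctors of $\uli{\p(\A)}$, so their actions on morphisms agree automatically and the pointwise isomorphisms assemble for free. So the gap is not fatal --- the route you describe can be completed --- but the sheaf-semantics computation of the internal ideal lattice is the substantive step, and without it the argument is not yet a proof. (Your alternative suggestion of internalising Theorem \ref{thm:stonspe} wholesale by checking constructivity would also need you to make sense of ``internal Rickart C*-algebra'', which the paper deliberately avoids by working stalkwise.)
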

\begin{proof}Under the comparison equivalence $\mathrm{Set}^{\V(\A)}\cong\mathrm{Sh}(\V(\A))$, the internal Gelfand spectrum is represented by the spectral bundle $\Sigma_*\ra{}\V(\A)$, whose frame at a context $A$ is $\OO(\Sigma_*|_{\uparrow A})$ as in the preceding corollary. On the other hand, an element of $\uli{\mathrm{Spec}}(\uli{\Pi(\A)})(A)$ is an ideal of the restricted Boolean-algebra object $\uli{\Pi(\A)}|_{\uparrow A}$, not merely an ideal of the single Boolean algebra $\Pi(A)$. Applying theorem \ref{thm:stonspe} in each context $A'\supset A$ identifies these compatible families of ideals with the compatible families of opens of the fibres $\mathrm{Max}(A')$, which are exactly the opens of $\Sigma_*|_{\uparrow A}$. These identifications are natural in $A$, and hence identify the internal Stone spectrum of $\uli{\Pi(\A)}$ with the internal Gelfand spectrum of $\uli\A$.
\end{proof}
One might wonder whether this object $\uli{\mathrm{Compl}(\Sigma_*)}$ could play an analogous role to that of $\uli{\p_{cl}\Sigma}$ in the contravariant approach. The answer is ``no''. Indeed, note that $\Gamma\uli{\mathrm{Compl}(\Sigma_*)}$ is a Boolean algebra, since $\uli{\mathrm{Compl}(\Sigma_*)}$ is an internal Boolean algebra and the theory of Boolean algebras is algebraic. This means that, by theorem \ref{thm:koclog}, we cannot have a morphism $\Pi(\A)\ra{}\Gamma\uli{\mathrm{Compl}(\Sigma_*)}$ that preserves anywhere near as much structure as the embedding of theorem \ref{thm:propemb} does.

The analogy actually does hold, but one has to choose the correct\footnote{The reader should note that the two options coincide in $\mathrm{Set}$.} object of `clopen subobjects' of $\uli\Sigma_*$. Indeed, $\uli{\p_{cl}\Sigma}$ is not the internal Boolean subalgebra of $\uli{\OO\Sigma^*}$ of complemented elements either. To correctly mirror the situation of the contravariant approach, we define the following subfunctor of $\uli{\OO\Sigma_*}$:
$$\uli{\mathrm{Clopen}(\Sigma_*)}(A):=\left\{ D\in \uli{\OO\Sigma_*}(A)\, | \, \textnormal{for all $A'\in\uparrow A:$ $D\cap\mathrm{Max}(A')$ is closed} \right\}.$$
In particular, $\Gamma\uli{\mathrm{Clopen}(\Sigma_*)}$ will play the role of $\mathrm{Sub}_{cl}\uli\Sigma$:
$$\Gamma\uli{\mathrm{Clopen}(\Sigma_*)}=\left\{D\in \OO\Sigma_*\; | \; \textnormal{for all $A\in\V(\A):$ $D\cap\mathrm{Max}(A)$ is closed} \right\}.$$

Mimicking theorem \ref{thm:propemb}, we define

\[
\begin{tikzcd}[column sep=large, row sep=large]
\Pi(\A) \arrow[r] & \Gamma \uli{\mathrm{Clopen}(\Sigma_*)}\\
p \arrow[r, mapsto] & \{(A,\lambda)\in\Sigma_*\, | \, \lambda(\delta^i(p)_A)=1\}.
\end{tikzcd}
\]
Since $p\in A$ for some $A\in\V(\A)$ and therefore $\delta^i(p)_A=p$ for this $A$, we see that this map is an injection and, literally mirroring the proof of theorem \ref{thm:propemb} in \cite{doephy}, we have
\begin{theorem}\label{thm:propembcov}
\[
\begin{tikzcd}[column sep=large, row sep=large]
\Pi(\A) \arrow[r, hook, "\uli\delta^i"] & \Gamma \uli{\mathrm{Clopen}(\Sigma_*)}\\
p \arrow[r, mapsto] & \{(A,\lambda)\in\Sigma_*\, | \, \lambda(\delta^i(p)_A)=1\,\}
\end{tikzcd}
\]
defines an embedding\footnote{Note that an injective (finite) inf-preserving morphism is automatically an order embedding.} of complete inf-lattices into a complete distributive lattice (preserving $\leq,0,1,\wedge$). Note that it does not in general preserve $\vee$, since it is a map from a non-distributive lattice to a distributive one. However, $\Gamma \uli{\mathrm{Clopen}(\Sigma_*)}$ does have small joins and
$$\uli \delta^i(p\vee q)\geq \uli\delta^i(p)\vee \uli\delta^i(q).$$
Moreover, $\uli\delta^i$ does not preserve the negation and $\Gamma\uli{\mathrm{Clopen}(\Sigma_*)}$ does not in general satisfy the law of the excluded middle.\\
Finally, for each $A\in \V(\A)$, it restricts to an order isomorphism

\[
\begin{tikzcd}[column sep=large]
\Pi(A) \arrow[r, "\cong", "\uli\delta^i(A)"'] & \mathrm{Sub}_{cl}(\mathrm{Max}(A)).
\end{tikzcd}
\]
\end{theorem}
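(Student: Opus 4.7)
The plan is to mirror, almost verbatim, the proof of theorem \ref{thm:propemb} given in \cite{doephy}, with outer daseinisation replaced by inner daseinisation and the contravariant bundle replaced by the covariant one. First I would verify that for each $p \in \Pi(\A)$ the set $D_p := \{(A,\lambda) \in \Sigma_* \mid \lambda(\delta^i(p)_A) = 1\}$ genuinely defines a global section of $\uli{\mathrm{Clopen}(\Sigma_*)}$. Each fibre $D_p \cap \mathrm{Max}(A)$ is the clopen of $\mathrm{Max}(A)$ corresponding, under Gelfand--Stone duality for the commutative Von Neumann algebra $A$, to the projection $\delta^i(p)_A$, so the clopen-fibre requirement and openness in $A$ are automatic. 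For the covariant bundle compatibility condition, suppose $A \subset A'$ and $\lambda' \in \mathrm{Max}(A')$ with $\lambda'|_A(\delta^i(p)_A) = 1$. Since $\delta^i(p)_A \in A \subset A'$ and $\delta^i(p)_A \leq p$, the definition of inner daseinisation gives $\delta^i(p)_A \leq \delta^i(p)_{A'}$, whence $\lambda'(\delta^i(p)_{A'}) = 1$, so that $(A',\lambda') \in D_p$ as required.

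Next I would establish the per-context order isomorphism $\Pi(A) \cong \mathrm{Sub}_{cl}(\mathrm{Max}(A))$: since $\delta^i(p)_A = p$ whenever $p \in \Pi(A)$, the restriction of $p \mapsto D_p \cap \mathrm{Max}(A)$ to $\Pi(A)$ is precisely the classical Gelfand--Stone bijection between projections in $A$ and clopens of $\mathrm{Max}(A)$. Applying this in any $A$ containing a given $p$ immediately yields injectivity of the global map $\uli\delta^i$. Preservation of $\leq, 0, 1$ is immediate from monotonicity of inner daseinisation together with $\delta^i(0)_A = 0$ and $\delta^i(1)_A = 1$. Preservation of $\wedge$ rests on the identity $\delta^i(p \wedge q)_A = \delta^i(p)_A \wedge \delta^i(q)_A$: using distributivity of the commutative $\Pi(A)$, if $r \leq p$ and $s \leq q$ in $\Pi(A)$ then $r \wedge s \leq p \wedge q$, giving $\delta^i(p)_A \wedge \delta^i(q)_A \leq \delta^i(p \wedge q)_A$, while the reverse inequality is automatic. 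Multiplicativity of each character $\lambda$ then yields $D_{p \wedge q} = D_p \cap D_q$ fibrewise.

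The inequality $\uli\delta^i(p \vee q) \geq \uli\delta^i(p) \vee \uli\delta^i(q)$ follows from the corresponding fibrewise inequality for $\delta^i$. Strictness in general, and the accompanying failures of negation-preservation and of LEM in $\Gamma\uli{\mathrm{Clopen}(\Sigma_*)}$, are forced by an abstract observation: $\Gamma\uli{\mathrm{Clopen}(\Sigma_*)}$ is fibrewise distributive, whereas $\Pi(\A)$ is non-distributive, so the injective map $\uli\delta^i$ cannot preserve both $\wedge$ and $\vee$ without collapsing a failure of distributivity in its domain. Similarly, simultaneous preservation of $\wedge$ and of the orthocomplement on each $\Pi(A)$, combined with the fibrewise Boolean isomorphism already established, would produce enough Boolean structure to contradict the logical Kochen--Specker theorem (theorem \ref{thm:koclog}). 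The main obstacle I foresee is purely one of bookkeeping: keeping the covariant bundle's ``opens grow upward'' convention in sync with \emph{inner} rather than outer daseinisation. Once the key monotonicity $\delta^i(p)_A \leq \delta^i(p)_{A'}$ for $A \subset A'$ is in hand, the remainder essentially transcribes the contravariant argument line for line.
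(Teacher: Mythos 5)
Your proposal is correct and follows exactly the route the paper itself takes: the paper's proof consists only of the injectivity observation (that $\delta^i(p)_A=p$ for any $A$ containing $p$) plus the instruction to mirror the proof of theorem \ref{thm:propemb} from \cite{doephy} with inner in place of outer daseinisation, which is precisely what you carry out, and your key verifications (the monotonicity $\delta^i(p)_A\leq\delta^i(p)_{A'}$ for $A\subset A'$ giving the covariant bundle condition, the per-context Gelfand--Stone isomorphism, and $\wedge$-preservation via the infinite distributive law in the complete Boolean algebra $\Pi(A)$) are all sound. The only loose point is the appeal to theorem \ref{thm:koclog} for the failure of negation-preservation --- the global map restricted to $\Pi(A)$ is not a lattice homomorphism into $\Gamma\uli{\mathrm{Clopen}(\Sigma_*)}$ (joins are computed over all contexts), so that theorem does not directly apply and a concrete counterexample (e.g.\ evaluating at the context $\C\cdot 1$) is the cleaner justification --- but this concerns only the negative assertions, which the paper likewise does not prove in detail.
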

One may wonder whether we can define our states, as in theorem \ref{thm:contravsta} of the contravariant approach, as finitely additive measures on this set $\Gamma\uli{\mathrm{Clopen}(\Sigma_*)}$ of clopen subobjects. This indeed works out, and it even agrees with the restriction of our probability valuations.
\begin{theorem}For each quasi-state $\omega$, one can define a finitely additive measure $\mu_\omega$ on $\Gamma\uli{\mathrm{Clopen}(\Sigma_*)}$ by
$$\mu_\omega(\uli S)(A):=\omega\left(\uli\delta^i(A)^{-1}(\uli S(A))\right),$$
where $\uli\delta^i(A)$ is the isomorphism $\Pi(A)\ra{}\mathrm{Sub}_{cl}(\mathrm{Max}(A))$ of theorem \ref{thm:propembcov}. In fact, we can equivalently define states to be finitely additive measures on $\Gamma\uli{\mathrm{Clopen}(\Sigma_*)}$. This abuse of notation is acceptable because this definition precisely coincides with the restriction of the probability valuation $\mu_\omega$ on $\OO\Sigma_*$ defined by internal Riesz-Markov duality.\end{theorem}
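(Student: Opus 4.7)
The plan is to proceed in direct analogy with Theorem \ref{thm:contravsta} from the contravariant side, using the context-wise order isomorphism $\uli\delta^i(A):\Pi(A)\stackrel{\cong}{\rightarrow}\mathrm{Sub}_{cl}(\mathrm{Max}(A))$ of Theorem \ref{thm:propembcov} to transport Mackey-Gleason (Theorem \ref{thm:macgle}) into the covariant spectral world, and then to invoke Theorem \ref{thm:intsta} together with constructive Riesz-Markov (Theorem \ref{thm:intrie}) to identify the result with the restriction of the internal probability valuation.

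For well-definedness and finite additivity of $\mu_\omega$, I would argue context-wise. For each $A \in \V(\A)$, the restriction $\omega|_A$ is a genuine state on the commutative Von Neumann algebra $A$ by the definition of a quasi-state (Definition \ref{def:quasta}), so by Theorem \ref{thm:macgle} it induces a finitely additive measure on $\Pi(A)$. Since $\uli\delta^i(A)^{-1}$ is an iso of Boolean lattices, composition yields a finitely additive measure on $\mathrm{Sub}_{cl}(\mathrm{Max}(A))$; applied to $\uli S(A)\cap \mathrm{Max}(A)$ this gives $\mu_\omega(\uli S)(A)$ and makes the axioms $\mu_\omega(\uli\Sigma_*)(A)=1$ and $\mu_\omega(\uli S_1)(A)+\mu_\omega(\uli S_2)(A)=\mu_\omega(\uli S_1\wedge \uli S_2)(A)+\mu_\omega(\uli S_1\vee \uli S_2)(A)$ immediate. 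Monotonicity of $A\mapsto \mu_\omega(\uli S)(A)$ along inclusions $A\subseteq A'$, and the dependence of $\mu_\omega(\uli S)(A)$ only on the slice of $\uli S$ over $A$, both follow from the naturality of $\uli\delta^i$ and the fact that $\omega|_{A'}$ extends $\omega|_A$.

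For the bijection, I would invert the construction: given a finitely additive measure $\mu$ on $\Gamma\uli{\mathrm{Clopen}(\Sigma_*)}$, set $\omega_A(p):=\mu(\uli\delta^i(p))(A)$ for $p\in \Pi(A)$. Theorem \ref{thm:propembcov} guarantees that $\omega_A$ is a finitely additive probability measure on $\Pi(A)$, hence by Mackey-Gleason extends uniquely to a state on $A$; the naturality of $\uli\delta^i$ and $\mu$ forces $\omega_{A'}|_A=\omega_A$ whenever $A\subseteq A'$, so the $\omega_A$ assemble into a well-defined quasi-state $\omega$ (real-linearity on self-adjoints within a fixed context is built into Mackey-Gleason; the $i$-linearity condition is imposed by definition). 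That these assignments are mutually inverse is a diagram chase. For agreement with Riesz-Markov, note that Theorem \ref{thm:intsta} already places internal probability valuations on $\uli{\Pi(\A)}\cong \uli{\mathrm{Compl}(\mathrm{Max}(\A))}$ in bijection with quasi-states, and internal Riesz-Markov (Theorem \ref{thm:intrie}) identifies those with probability valuations on the full $\uli\OO\mathrm{Max}(\A)$; restricting such a valuation to $\Gamma\uli{\mathrm{Clopen}(\Sigma_*)}$ and comparing components at each $A$ recovers $\omega(\uli\delta^i(A)^{-1}(\uli S(A)))$ by uniqueness of the chain of bijections.

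The main obstacle I expect is keeping the three candidate objects $\uli{\Pi(\A)}$, $\uli{\mathrm{Compl}(\Sigma_*)}$ and $\uli{\mathrm{Clopen}(\Sigma_*)}$ carefully distinguished: the first two are identified by internal Stone duality, but $\uli{\mathrm{Clopen}(\Sigma_*)}$ is the \emph{non}-internal-Boolean object whose sections impose closedness in \emph{every} slice $\mathrm{Max}(A')$ rather than complementation inside the relevant frame. Verifying that $\uli\delta^i$ lands in $\Gamma\uli{\mathrm{Clopen}(\Sigma_*)}$ (and that the restriction of the Riesz-Markov valuation indeed factors through this larger object and not merely through the internal Boolean algebra of complements) requires a slice-wise spectral computation: for $p\in \Pi(A)$ and $A'\supseteq A$, the clopen subset $\{\lambda\in\mathrm{Max}(A')\mid \lambda(\delta^i(p)_{A'})=1\}$ of $\mathrm{Max}(A')$ must paste into an open subset of the total space $\Sigma_*$ that is closed in every slice. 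Granting this point, the remaining verifications are routine naturality and additivity checks.
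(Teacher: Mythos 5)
Your treatment of the first two claims matches what the paper does (it leaves the finite-additivity check to the reader and handles the bijection exactly as you do, via Mackey--Gleason applied context-wise to the measures $p\mapsto\mu(\uli\delta^i(p))(A)$). The genuine gap is in the third claim, which is the substantive content of the theorem: you cannot get the identity $\mu_\omega(\uli S)(A)=\omega\bigl(\uli\delta^i(A)^{-1}(\uli S(A))\bigr)$ ``by uniqueness of the chain of bijections.'' The chain you invoke (Theorems \ref{thm:intsta} and \ref{thm:intrie}) identifies quasi-states with internal finitely additive measures on $\uli{\Pi(\A)}\cong\uli{\mathrm{Compl}(\mathrm{Max}(\A))}$ and with probability valuations on the full frame $\uli{\OO\mathrm{Max}(\A)}$, but it only pins down the value of the Riesz--Markov valuation on \emph{complemented elements of the internal frame} -- and, as the paper stresses just before Theorem \ref{thm:propembcov}, $\Gamma\uli{\mathrm{Compl}(\Sigma_*)}$ and $\Gamma\uli{\mathrm{Clopen}(\Sigma_*)}$ are genuinely different objects: a global section of $\uli{\mathrm{Clopen}(\Sigma_*)}$ is an open of $\Sigma_*$ whose every slice is clopen, which is weaker than being complemented in $\uli{\OO\Sigma_*}$ (indeed $\uli\delta^i(p)$ itself is typically not complemented, since its pointwise complements $1-\delta^i(p)_A$ violate the covariance condition on opens). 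So uniqueness of the bijections says nothing about the value of the valuation on a general element of $\Gamma\uli{\mathrm{Clopen}(\Sigma_*)}$.

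What is actually needed -- and what the paper supplies -- is an explicit regularity formula for the Riesz--Markov valuation on opens (Lemma 4.6 of \cite{wolcom}):
$$\mu_\omega(\uli S)(A)=\sup\left\{\omega(p)\;\middle|\;p\in\Pi(A),\ X_p^A\subset\uli S(A)\right\},\qquad X_p^A=\{\lambda\in\mathrm{Max}(A)\mid\lambda(p)>0\},$$
followed by the observation that when $\uli S(A)$ is clopen the supremum is attained at $p_{\uli S(A)}:=\uli\delta^i(A)^{-1}(\uli S(A))$, because $X_p^A\subset\uli S(A)$ forces $p\leq p_{\uli S(A)}$ while $X^A_{p_{\uli S(A)}}\subset\uli S(A)$ holds trivially. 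Your closing paragraph correctly identifies the $\mathrm{Compl}$-versus-$\mathrm{Clopen}$ distinction as the danger point, but then asks the wrong question of it: the issue is not whether $\uli\delta^i$ lands in $\Gamma\uli{\mathrm{Clopen}(\Sigma_*)}$ or whether the valuation ``factors through'' it (it is defined on the whole frame, so restriction is automatic), but how to \emph{compute} the restricted valuation there; without the supremum formula your argument does not close.
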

\begin{proof}We leave the proof of the first statement to the reader. The second claim follows by exactly the same proof as one uses in the contravariant approach. This can be found in \cite{doequa}. We examine the last claim.

We write $\mu_\omega$ for the probability valuation defined by the Riesz-Markov theorem and show that, for $\uli S\in \Gamma\uli{\mathrm{Clopen}(\Sigma_*)}$, the formula of the theorem holds. According to Lemma 4.6 in \cite{wolcom}, we have, for $A\in\V(\A)$,
$$\mu_\omega(\uli S)(A)=\sup \{\omega(p)\, | \, p\in \Pi(A), X_p^A\subset \uli S(A)\},$$
where $X_p^A=\{\lambda\in\mathrm{Max}(A)\, | \, \lambda(p)>0\, \}$. Now, writing $p_{\uli S(A)}:=\uli\delta^i(A)^{-1}(\uli S(A))$, note that $\uli S(A)=\{\lambda\in \mathrm{Max}(A)\, | \, \lambda(p_{\uli S(A)})=1\}$. Therefore, for each $p$ over which we take the supremum of $\omega$,
$$\forall \lambda\in\mathrm{Max}(A): \lambda(p)>0 \Rightarrow \lambda(p_{\uli S(A)})=1.$$
For projections in a commutative von Neumann algebra, $X_p^A\subseteq \uli S(A)$ is equivalent to $p\leq p_{\uli S(A)}$. Hence every $p$ in the supremum satisfies $\omega(p)\leq\omega(p_{\uli S(A)})$, and $p=p_{\uli S(A)}$ is itself among the projections over which the supremum is taken. Thus the supremum is $\omega(p_{\uli S(A)})$, as required.

Finally, note that, according to Mackey-Gleason, every such measure also defines a quasi-state by restricting along $\uli\delta^i$.\end{proof}
 \quad \\
\\
Although this parallel with the contravariant approach is useful, the situation still does not resemble classical mechanics, even though that was the purpose of this whole endeavour. The framework of \cite{hlstop} does not start with the Birkhoff-von Neumann quantum propositions from section \ref{sec:bn} and embed them into some topos-related logic. Rather, it builds propositions from the data of observables and subsets of outcomes directly in their topos framework, imitating what we did in the classical kinematics of section \ref{sec:clprop}.

For an $a\in C(\Sigma_*,\mathbb{IR})$ and $\Delta\in\OO\mathbb{IR}$, they define a `proposition'
$$[a\in\Delta]_{cov}:=a^{-1}\Delta\in \OO\Sigma_*.$$
If we are to regard this as a proposition, there should be some relation with our Birkhoff-von Neumann quantum propositions that have a genuine motivation in quantum physics and did not just arise in an attempt to force an analogy with classical physics. Let us write $[a\in\Delta]_{BN}$ for the Birkhoff-von Neumann quantum propositions from now on, to make the distinction between the two notions clear. Then, in \cite{wolcom}, Wolters proved the following surprising result. (The proof is not particularly complicated, but it is a rather long exercise in unravelling all the definitions.)
\begin{theorem}[\cite{wolcom}, essentially theorem 4.9]\label{thm:wolcom} The way propositions are formed in the covariant approach to quantum kinematics is compatible with the way dictated by the Born rule for Birkhoff-von Neumann quantum logic:

\[
\begin{tikzcd}[column sep=large, row sep=large]
\A_{sa}\times\OO(\R) \arrow[r, hook, "\delta\times(-)_S"] \arrow[dd, "{[-\in -]_{BN}}"'] & C(\uli\Sigma_*,\uli{\mathbb{IR}})\times\OO(\mathbb{IR}) \arrow[d, "\lbrack-\in-\rbrack_{cov}"]\\
& \OO\Sigma_* \arrow[d, "\oli{(-)}"]\\
\Pi(\A) \arrow[r, hook, "\uli\delta^i"'] & \Gamma\uli{\mathrm{Clopen}(\Sigma_*)}.
\end{tikzcd}
\]
Here $\OO\Sigma_*\ra{\oli{(-)}} \Gamma\uli{\mathrm{Clopen}(\Sigma_*)}$ denotes fibrewise closure (in each fibre the closure of an open is again open, since the spectrum of a commutative von Neumann algebra is extremally disconnected, by theorem \ref{thm:hypsto}) and $(-)_S$ denotes the restriction of the injection\footnote{This is a reasonable map to consider since it restricts to the map $(r,s)\mapsto(r,s)_S$ between the bases of the topologies. Also $(\{x\})_S=\{\{x\}\}$, so it embodies the idea of the map $\R\ra{\{\cdot\}}\mathbb{IR}$.}

\[
\begin{tikzcd}[column sep=large, row sep=large]
\p\R \arrow[r, hook, "(-)_S"] & \p\mathbb{IR}\\
X \arrow[r, mapsto] & \{[r,s]\, | \, [r,s]\subset X\}.
\end{tikzcd}
\]
\end{theorem}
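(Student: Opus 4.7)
The plan is to verify commutativity of the diagram pointwise, evaluating both composites at each context $A\in\V(\A)$ and comparing clopen subsets of $\mathrm{Max}(A)$. The first step is to reduce to basic opens $\Delta=(r,s)\subset\R$. This reduction is legitimate because the Birkhoff--Von Neumann assignment $\Delta\mapsto[a\epsilon\Delta]_{BN}$ is just the Borel functional calculus applied to $a$ and hence preserves directed unions of opens, the embedding $\uli\delta^i$ preserves joins in $\Pi(\A)$, and on the covariant branch the operations $\Delta\mapsto(\Delta)_S$, $\delta(a)^{-1}(-)$ and (thanks to extremal disconnectedness of $\mathrm{Max}(A)$, supplied by Theorem \ref{thm:hypsto}) topological closure all commute with directed unions of opens.

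Having reduced to $\Delta=(r,s)$, I fix a context $A\in\V(\A)$. On the Birkhoff--Von Neumann side I obtain the clopen subset
$$\uli\delta^i([a\epsilon(r,s)]_{BN})(A)=\{\lambda\in\mathrm{Max}(A)\mid\lambda(\delta^i([a\epsilon(r,s)]_{BN})_A)=1\},$$
while on the covariant side, before closure, I obtain the open subset
$$\delta(a)^{-1}((r,s)_S)\cap\mathrm{Max}(A)=\{\lambda\in\mathrm{Max}(A)\mid r<\lambda(\delta^i(a)_A)\leq\lambda(\delta^o(a)_A)<s\}.$$
The task is to show that the closure of the second set equals the first.

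The bridge between these two descriptions is the spectral-calculus characterisation of the inner and outer daseinizations of a self-adjoint operator. With the spectral resolution $(e_t^a)_{t\in\R}$ of $a$, the spectral resolutions of $\delta^i(a)_A$ and $\delta^o(a)_A$ are built out of the pointwise inner/outer daseinizations of the individual projections $e_t^a$, with the swap of $\delta^i$ and $\delta^o$ dictated by the paper's sign convention for the spectral order. Combining this with the elementary observation that a character $\lambda$ of a commutative Von Neumann algebra satisfies $\lambda(b)=\inf\{t\mid\lambda(e_t^b)=1\}$ for self-adjoint $b$, the inequalities defining the covariant set translate into projection-valued statements of the form $\lambda(\delta^o(e_r^a)_A)=0$ and $\lambda(\delta^i(e_s^a)_A)=1$. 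An independent computation of $\delta^i$ applied to the spectral projection $e_s^a-e_r^a=[a\epsilon(r,s)]_{BN}$ inside the commutative subalgebra $A$ yields these same two daseinized projections (up to the boundary), showing that the covariant open set and the BN clopen set differ at most on the set of characters where $\lambda(\delta^i(a)_A)=r$ or $\lambda(\delta^o(a)_A)=s$. Extremal disconnectedness of $\mathrm{Max}(A)$ then guarantees that closure of the open set adjoins exactly these boundary characters, producing the clopen BN set.

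The main obstacle is this spectral-calculus step: reconciling the two avatars of inner/outer daseinisation (one defined on projections, the other on self-adjoint operators via the spectral order), correctly tracking the swap of $\delta^i$ and $\delta^o$ it forces, and identifying $\delta^i(e_s^a-e_r^a)_A$ with the appropriate combination $\delta^i(e_s^a)_A-\delta^o(e_r^a)_A$ of daseinized spectral projections. Once this algebraic identity is in hand, naturality in $A$ propagates the pointwise equalities to a global identity in $\Gamma\uli{\mathrm{Clopen}(\Sigma_*)}$, completing the proof.
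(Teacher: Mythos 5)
First, a caveat: the paper does not actually prove this theorem; it defers entirely to \cite{wolcom} with the remark that the proof is ``a rather long exercise in unraveling all the definitions''. Your proposal does identify the right overall shape of that exercise (work fibrewise, relate the spectral resolutions of $\delta^i(a)_A$ and $\delta^o(a)_A$ to daseinised spectral projections of $a$, use extremal disconnectedness to control the closure). But your opening reduction to basic opens $\Delta=(r,s)$ rests on two false claims. Theorem \ref{thm:propembcov} of the paper states explicitly that $\uli\delta^i$ preserves $\wedge$ but \emph{not} $\vee$ (only $\uli\delta^i(p\vee q)\geq\uli\delta^i(p)\vee\uli\delta^i(q)$); it is the \emph{outer} daseinisation that is a sup-embedding. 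Indeed $\delta^i$ fails to preserve even directed joins of spectral projections of a single observable: for $a=\sum_n 2^{-n}P_{e_n}$ on $\ell^2$, the projections $p_n=[a\,\eps\,(2^{-n-1},1)]_{BN}$ onto $\mathrm{span}\{e_1,\dots,e_n\}$ increase to $1=[a\,\eps\,(0,1)]_{BN}$, yet for $A$ generated by the projection onto a vector with infinitely many nonzero coordinates one has $\delta^i(p_n)_A=0$ for all $n$ while $\delta^i(1)_A=1$. Likewise closure does not commute with directed unions of opens in an extremally disconnected space: in $\beta\N$ the clopens $\{0,\dots,n\}$ have union $\N$, whose closure is all of $\beta\N$. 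Both failures point the same way, so each side of the desired identity merely dominates the union of the interval answers; equality for general $\Delta\in\OO(\R)$ does not follow from the interval case without a further argument (and \cite{wolcom} in fact only treats intervals).

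Second, even for $\Delta=(r,s)$ the core step is asserted rather than proved, in two places. The closure $\oli{(-)}$ in the diagram is taken in the bundle space $\Sigma_*$, whose closed sets must be stable under restriction of characters to smaller contexts as well as fibrewise closed; computing the closure fibre by fibre inside each $\mathrm{Max}(A)$, as you do, needs justification. And the algebraic identity you rely on, $\delta^i(e^a_s-e^a_r)_A=\delta^i(e^a_s)_A-\delta^o(e^a_r)_A$, is not correct as written: one does not have $\delta^o(e^a_r)_A\leq\delta^i(e^a_s)_A$ in general, so the right-hand side need not even be a projection. The usable facts are that $\delta^i$ preserves meets and that $\delta^i(p^\perp)_A=\delta^o(p)_A^{\perp}$, which give $\delta^i\bigl(e^a_s\wedge(e^a_r)^\perp\bigr)_A=\delta^i(e^a_s)_A\wedge\delta^o(e^a_r)_A^{\perp}$; combining this with a precise identification of which characters the closure adjoins (you state that it adjoins ``exactly the boundary characters'' but give no argument) is exactly where the real content of the theorem lies, so as it stands the proposal is a plausible outline rather than a proof.
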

Although this is a rather surprising mathematical relation between the two notions of proposition, we should verify that they represent the same physical idea. To do this, we study the state-proposition pairing and show that they yield the same truth values when combined with an arbitrary state.

\paragraph{State-Proposition Pairing}
From a geometric point of view, the state-proposition pairing in the covariant approach is very close to the one in the contravariant approach. Recall our dichotomy map $\uli{[0,1]}_l  \ra{\uli{\mathrm{Dich}}}  \uli\Omega$.

Each internal valuation $\uli \mu$ on $\uli{\OO\Sigma_*}$ defines a map $\uli{\OO\Sigma_*}\ra{\uli{\mathrm{Dich}}\circ\uli\mu}\uli\Omega$. Since, in this framework, our quantum propositions are maps $\uli 1 \ra{p}\uli{\OO\Sigma_*}$, we can take their composition to yield a truth value

\[
\begin{tikzcd}[column sep=large]
\uli 1 \arrow[r, "{\scriptstyle\uli{\mathrm{Dich}}\circ\uli\mu\circ p}"] & \uli \Omega
\end{tikzcd}
\]
or equivalently

\[
\begin{tikzcd}[column sep=large]
1 \arrow[r, "{\scriptstyle\Gamma\uli{\mathrm{Dich}}\circ\mu\circ p}"] & \Gamma\uli \Omega=\mathrm{Sub}(\uli 1).
\end{tikzcd}
\]
We say that a proposition $p$ is true at context $A$ for a state $\mu$ if $(\Gamma\uli{\mathrm{Dich}}\circ\mu\circ p)(A)=1$. Using this notion of truth, we have the following result.
\begin{theorem}[\cite{wolcom}, Lemmas 4.7 and 4.8 and Theorem 4.9] Regarding truth values, the two ways $[-\in-]_{BN}$ and $[-\in-]_{cov}$ of building propositions are equivalent. The same holds for the algebraic and geometric ways of pairing states and propositions.\\
\\
Let $\mu$ be a valuation $\OO\Sigma_*\ra{}\mathrm{Hom}_{\mathrm{Pos}}(\V(\A),[0,1])$ (one of our generalised quantum states). By theorems \ref{thm:intrie} and \ref{thm:intsta}, we know that $\mu$ corresponds to some quasi-state $\omega_\mu$ on $\A$. Then, for $A\in\V(\A)$, the following are equivalent.
\begin{enumerate}
\item $\omega_\mu(\delta^i([a\in\Delta]_{BN})_A)=1$;
\item $\mu(\uli{\delta}^i([a\in\Delta]_{BN}))(A)=1$;
\item $\mu(\oli{[\delta(a)\in\Delta_S]_{cov}})(A)=1$;
\item $\mu([\delta(a)\in\Delta_S]_{cov})(A)=1$.
\end{enumerate}
\end{theorem}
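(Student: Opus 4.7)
The plan is to establish the chain $1 \Leftrightarrow 2 \Leftrightarrow 3 \Leftrightarrow 4$, with each link supplied by a distinct ingredient already set up in this section. The equivalence $1 \Leftrightarrow 2$ is a direct unwinding of definitions: plugging $\uli{S} = \uli{\delta}^i([a\epsilon\Delta]_{BN})$ into the formula $\mu_\omega(\uli{S})(A) = \omega(\uli{\delta}^i(A)^{-1}(\uli{S}(A)))$ from the preceding theorem, and using that $\uli{\delta}^i(A)$ is an order-isomorphism by Theorem \ref{thm:propembcov}, the right-hand side collapses to $\omega(\delta^i([a\epsilon\Delta]_{BN})_A)$, so conditions 1 and 2 are literally equal.

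The equivalence $2 \Leftrightarrow 3$ is a direct consequence of the commutative diagram in the preceding Theorem \ref{thm:wolcom}, whose right-most column encodes precisely the identity
$$\uli{\delta}^i([a\epsilon\Delta]_{BN}) \;=\; \oli{[\delta(a)\epsilon\Delta_S]_{cov}}$$
as elements of $\Gamma \uli{\mathrm{Clopen}(\Sigma_*)}$. Substituting this identity into $\mu(-)(A)$ collapses 2 and 3 into the same real number, in particular identifying the condition ``$=1$''.

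The only substantive step is $3 \Leftrightarrow 4$. The implication $4 \Rightarrow 3$ is purely formal, being monotonicity of $\mu$ along the inclusion $U := [\delta(a)\epsilon\Delta_S]_{cov} \subseteq \oli{U}$. The reverse $3 \Rightarrow 4$ is delicate, since in general the boundary $\oli{U} \setminus U$ is only nowhere dense and a merely finitely additive probability valuation need not vanish on it. My plan is to invoke the explicit formula of Lemma 4.6 in \cite{wolcom}, which expresses, for any $V \in \OO\Sigma_*$,
$$\mu(V)(A) \;=\; \sup \bigl\{\, \omega(p) \,:\, p \in \Pi(A),\; X_p^A \subseteq V \,\bigr\}.$$
Applied to $V = \oli{U}$ the supremum is attained at $q := \delta^i([a\epsilon\Delta]_{BN})_A$, so condition 3 reads $\omega(q) = 1$. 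To derive condition 4 I must exhibit, for every $\varepsilon > 0$, a projection $p \in \Pi(A)$ with $p \leq q$, $X_p^A \subseteq U$, and $\omega(p) > 1-\varepsilon$.

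The hard part is constructing this approximating $p$: one has to exploit the specific structure of $U$, namely that it is the pullback of the Scott-open $\Delta_S$ along the bundle map $\delta(a)$ assembled from the inner and outer daseinisations of $a$. The guiding intuition is that a point $\lambda \in X_q^A$ lies in $U$ exactly when $[\lambda(\delta^i(a)_A),\,\lambda(\delta^o(a)_A)] \subseteq \Delta$, so one approximates $q$ from below by projections that cut out the (nowhere-dense) locus where the interval endpoints touch $\partial\Delta$, while controlling the $\omega$-mass lost in the process. This delicate construction is exactly what Lemmas 4.7 and 4.8 of \cite{wolcom} achieve, and the plan is to cite them directly once the reduction above is in place.
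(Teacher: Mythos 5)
Your proposal is correct and follows essentially the same route as the paper, whose entire proof is a deferral to Wolters' Lemmas 4.7, 4.8 and Theorem 4.9: your reductions $1\Leftrightarrow 2$ (via the measure formula of the preceding theorem and the isomorphism of Theorem \ref{thm:propembcov}) and $2\Leftrightarrow 3$ (via the commuting square of Theorem \ref{thm:wolcom}) are sound, and you correctly isolate $3\Rightarrow 4$ as the only substantive step, which both you and the paper ultimately outsource to Wolters' lemmas. The one point the paper's proof does make that you omit is that Wolters proves these lemmas only for states, whereas the theorem here is asserted for arbitrary quasi-states, so one must still check (as the paper remarks without detail) that his argument --- in particular the supremum formula of Lemma 4.6 that you invoke --- goes through when $\omega_\mu$ is merely a quasi-state.
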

\begin{proof}Wolters proves this only for the case where the valuation comes from an ordinary state. However, I see no reason why his proof should not work for the general case.\end{proof}
Put this way, for each (quasi-)state $\omega$ and each proposition, we have a map\footnote{The reader might have been expecting this to be defined as a map $\Gamma \uli{\mathrm{Clopen}(\Sigma_*)}\ra{}\mathrm{Sub}\uli 1$, hoping that this would be a lattice (or even Heyting) homomorphism for pure states $\omega$. (The pure states define homomorphisms to the truth values in classical kinematics.) Unfortunately, this does not work out. As one can check, this map indeed preserves $\wedge$, but not $\vee$. This is a consequence of the fact that pure states do not define a one-point measure. (In the GNS-representation the measure of a state $|\psi\rangle$ is defined by the inner product $p\mapsto (A\mapsto \langle \psi | \delta^i(p)_A|\psi\rangle)$.)}

\[
\begin{tikzcd}[column sep=large]
\Pi(\A) \arrow[r, "\mathrm{truth}_\omega"] & \Gamma\uli \Omega\cong\mathrm{Sub}(\uli 1),
\end{tikzcd}
\]
where $\mathrm{truth}_\omega(p)(A)$ is $\mathrm{Dich}$\footnote{$[0,1]\ra{\mathrm{Dich}}\{0,1\}$, $(x<1)\mapsto 0, 1\mapsto 1$.} applied to the left-hand side of any of the equations in the above enumeration (and we have picked some description $[a\in\Delta]_{BN}=p$).

\subsubsection{Interpretation of Truth}
Let us ask the same questions as in the contravariant approach.
\begin{enumerate}
\item How much information can we infer about $\omega$ from knowing $\mathrm{truth}_\omega$?
\item How should we interpret these truth values?
\end{enumerate}

Again, $\mathrm{truth}_\omega$ contains the same information as the support of $\omega$. Therefore, in this respect these `nuanced' truth values are no better than those of the Birkhoff-von Neumann logic. The difference is to be sought in the truth value of one particular proposition.

By definition, $\mathrm{truth}_\omega(p)(A)=1$ iff $\omega(\delta^i(p)_A)=1$, i.e. iff the proposition $\delta^i(p)_A$ is true with certainty in state $\omega$ (in the Birkhoff-von Neumann sense). Since $\delta^i(p)_A$ is the weakest antecedent, in our classical context $A$, that would imply $p$, we can immediately say that $p$ is true in state $\omega$ if and only if there exists an $A$ such that $\mathrm{truth}_\omega(p)(A)=1$. One should therefore think of these propositions with verification in mind.

Indeed, if $\delta^i(p)_A$ is not true, this tells us nothing about the truth of $p$. However, assume that the proposition $p$ is true in state $\omega$ and we want to exhibit this with the restriction that we are only allowed to perform measurements in classical context $A$ (because we want to avoid disturbing our system too much, say). Then our best option is to examine $\delta^i(p)_A$. If that turns out to be true, so is $p$. The way to interpret $\mathrm{truth}_\omega(p)$, therefore, is that it has a value $1$ at context $A$ if and only if we can verify $p$ by measuring observables from $A$.

Since the logical operations on the subobject lattice are computed pointwise, the combinations $\uli\delta^i(p)\vee\uli\delta^i(p')$ and $\uli\delta^i(p)\wedge\uli\delta^i(p')$ are formed inside the Boolean logic of the single context $A$. In particular, note that there is no ambiguity about which of the two experiments should be performed first, since the propositions commute.

\subsubsection{Summary}
Summarising, we have set up a framework for quantum kinematics in the topos $\mathrm{Set}^{\V(\A)}$. This was based on the definition of the tautological internal C*-algebra $\uli\A$ and its internal Gelfand spectrum $\uli\Sigma_*$. In terms of these objects, we defined algebraic and geometric formalisms internal to the topos that deal with generalised kinds of observables, states and propositions, in which the corresponding objects of ordinary quantum mechanics were embedded. This went as follows.\\
\\
Observables:

\[
\begin{tikzcd}[column sep=large]
\A_{sa} \arrow[r, hook, "\delta"] & C(\uli\Sigma_*,\uli{\mathbb{IR}})\cong C(\Sigma_*,\mathbb{IR}),
\end{tikzcd}
\]
states:

\[
\begin{tikzcd}[column sep=small, row sep=small, cells={nodes={font=\small}}]
\mathscr{S}(\A) \arrow[r, hook] & \mathscr{S}_{quasi}(\A) \arrow[r, phantom, "\cong"] & \begin{aligned}
&\textnormal{``internal states''}(\uli\A) \\
&\cong \textnormal{``finitely additive measures''}(\Pi(\A)) \\
&\cong \textnormal{``finitely additive measures''}(\uli{\Pi(\A)}) \\
&\cong \textnormal{``finitely additive measures''} \\
&\quad (\Gamma\uli{\mathrm{Clopen}(\Sigma_*)}) \\
&\cong \mathrm{``valuations''}(\uli\Sigma_*),
\end{aligned}
\end{tikzcd}
\]
propositions:

\[
\begin{tikzcd}[column sep=large]
\Pi(\A) \arrow[r, hook, "\uli\delta^i"] & \Gamma\uli{\mathrm{Clopen}(\Sigma_*)} \arrow[r, hook] & \Gamma \uli{\OO\Sigma_*}\cong \OO\Sigma_*.
\end{tikzcd}
\]
This last map defines an inf-embedding of orders and it restricts to an order-isomorphism between each $\Pi(A)$ and $\mathrm{Sub}_{cl}(\mathrm{Max}(A))$, $A\in\V(\A)$. This enlarged `quantum logic' has the following properties.
\begin{enumerate}
\item It is a complete Heyting algebra.
\item The interpretation of the (Heyting) negation is ``$p$ is not true'', rather than ``$p$ is false''.
\item \emph{Each} state determines the truth value of all quantum propositions. This is a consequence of our dichotomy map that transforms probabilities into true/false judgements in each classical context. The truth values at different classical contexts have the immediate operational interpretation as the possibility of verification: a $1$ means that verification is possible.
\item For pure (and mixed) states, the truth values fail to define lattice homomorphisms $\Gamma\uli{\mathrm{Clopen}(\Sigma_*)}\ra{}\mathrm{Sub}\uli 1$.
\item We have an $\mathrm{inf}$-embedding of $\Pi(\A)$ into the complete Heyting algebra $\Gamma\uli{\OO\Sigma_*}\cong\OO\Sigma_*$, through $\Gamma\uli{\mathrm{Clopen}(\Sigma_*)}$; at each context $A\in\V(\A)$, it restricts to an isomorphism $\Pi(A)\cong\mathrm{Sub}_{cl}(\mathrm{Max}(A))$.
\item The operation $\vee$ is computed pointwise in each classical context, rather than by taking a closed linear span. However, the law of the excluded middle fails.
\item The propositions have a clear operational interpretation. The logical operations are defined pointwise, so we are only combining commuting propositions in conjunctions and disjunctions. The truth value of such a combined proposition therefore does not depend on the order in which we verify its building blocks.
\item We have a rather satisfactory way of reconstructing the internal algebra $\uli\A$ of observables from the `quantum logic' $\uli{\Pi(\A)}$, namely as the continuous functions on its internal Stone spectrum.
\end{enumerate}

\subsection{Discussion}
Recall that the primary motivation\footnote{I should admit that \cite{hlstop} and \cite{hlsboh} give an entirely different motivation for using toposes: a case is made that we should expect quantum logic to be intuitionistic, since the law of the excluded middle cannot possibly hold, as an ignorance interpretation of quantum probabilities is unacceptable. I am afraid to say that I do not entirely follow the reasoning and it leaves me with the impression that the subtleties in the physical interpretation of the logical operations in the different logics are not being fully appreciated.} for attempting to give a topos-theoretic description of quantum kinematics was to provide the ordinary quantitative probabilistic formalism with a qualitative logical counterpart that was less blunt in its notion of truth than the Birkhoff-von Neumann logic. Have these ambitions been realised? Partly, I would say. Possibly more important is the byproduct\footnote{Here, of course, I am mostly speaking of the covariant approach.} of this search for a suitable quantum logic: a framework for quantum kinematics that is surprisingly similar to that of classical kinematics, except for the fact that it is set up in a different topos from $\mathrm{Set}$.\\
\\
Let us recap what we have done in setting up the framework. Essentially, we have restored the commutativity of the algebra and the distributivity of the logic by giving the objects under consideration a more interesting internal structure. Non-classical aspects of quantum kinematics are hidden in the indexing of objects by the poset of classical contexts. In many cases, we were able to show that the constructions from classical kinematics performed in the topos led to (slightly) generalised notions of quantum states, observables and propositions. In the covariant approach, the topos contained a tautological internal commutative C*-algebra from which we managed to develop the whole framework in a way that was entirely analogous to classical kinematics.

By applying daseinisation maps to approximate propositions in other classical contexts, we managed to embed the Birkhoff-von Neumann logic into a complete Heyting algebra, regaining distributivity at the cost of the law of the excluded middle and either the meet or the join.

We defined a new, more subtle notion of truth on these logics, giving a separate judgement in each classical context, with the object of truth values $\Gamma\uli \Omega\cong \mathrm{Sub}(\uli 1)$. In my opinion, the best interpretation of truth values of propositions in the contravariant and covariant approaches is that they encode, respectively, whether falsification is possible and whether verification is possible in different contexts. I have not seen this interpretation of the truth of quantum propositions in the two approaches in the literature. The treatment of the issue is usually limited to a remark along the following lines. Our new quantum propositions relate as follows to the Birkhoff-von Neumann ones. Contravariant propositions (not truth values) represent truths about a system. Covariant propositions represent statements about the system that one might try to verify (but that are not necessarily true) in each context to establish the truth of the corresponding Birkhoff-von Neumann quantum proposition. In particular, truth of a proposition in the covariant sense in some context immediately implies truth in every context in the contravariant approach\footnote{Indeed, for all $A,A'\in\V(\A)$ and $p\in \Pi(\A)$, $\delta^i(p)_A\leq p \leq \delta^o(p)_{A'}$.}.

Unfortunately, the truth value maps corresponding to pure states did not preserve the logical operations, thereby breaking the parallel with classical kinematics. This, again, shows that pure quantum states are fundamentally different from pure states in classical mechanics. They do not seem to be very susceptible to deterministic descriptions.\\
\\
We conclude with a brief comparison of aspects of the two approaches to the subject. One should note that the intentions of the two approaches were initially very different: while the contravariant approach seemed mainly to be looking for a good notion of quantum logic (within their broader framework of topos approaches to physics), the covariant approach focused on stressing the parallel with classical kinematics.

We can still see that this parallel is much more pronounced in the covariant approach, mostly because there is both an algebraic and a geometric side to the story, while the contravariant approach so far has resisted attempts to associate a geometry with it. However, in recent work, notably \cite{hlsboh} and \cite{wolcom}, the logical side of the covariant approach has been explored, mirroring the constructions from the contravariant approach. As far as I can tell, this has been successful. One could even say that the covariant approach now encompasses the contravariant one, with the caveat that the logic has an interpretation in terms of verification rather than falsification.

Interaction has clearly taken place between the two approaches. Even with that in mind, however, it is quite striking that the two produce such similar results, given their completely different origins. This might show that, even though the constructions used are somewhat uncommon by the standards of physics, they can hardly be complete nonsense.

\appendix


\begin{thebibliography}{99}
\addcontentsline{toc}{section}{Bibliography}
\bibitem{banglo} B. Banaschewski and C.J. Mulvey: \textit{A globalisation of the Gelfand duality theorem.} Ann. Pure Appl. Logic 137, 2006, 62-103.
\bibitem{balqua} A. Baltag and S. Smets: \textit{Quantum Logic as Dynamic Logic.} Synthese, Volume 179, Number 2, March 2011, pp. 285-306(22).
\bibitem{berbae} S.K. Berberian: \textit{Baer *-rings.} Springer-Verlag, New York, 1972.
\bibitem{bezsto} G. Bezhanishvili: \textit{Stone duality and Gleason covers through de Vries duality.} Topology and its Applications 157 (2010) 1064-1080.
\bibitem{birlog} G. Birkhoff and J. von Neumann: \textit{The Logic of Quantum Mechanics.} Ann. Math. (2) 37, 1936, 823-843.
\bibitem{bohdis} N. Bohr: \textit{Discussion with Einstein on epistemological problems in atomic physics.} In: Albert Einstein: Philosopher-Scientist, pp. 201-241. La Salle: Open Court, 1949.
\bibitem{bunmac2} L.J. Bunce and J.D.M. Wright: \textit{Complex Measures on Projections in von Neumann
Algebras}. J. London. Math. Soc, 46 (1992), 269-279.
\bibitem{bunmac} L.J. Bunce and J.D.M. Wright: \textit{The Mackey-Gleason Problem.} Bull. Amer. Math. Soc. (N.S.) 26 (1992) 288-293.
\bibitem{bunmac3} L.J. Bunce and J.D.M. Wright: \textit{The Mackey-Gleason Problem for Vector Measures
on Projections in Von Neumann Algebras}. J. London. Math. Soc, 49 (1994),
131-149.
\bibitem{buttop1} J. Butterfield and C. Isham: \textit{A topos perspective on the Kochen-Specker theorem: I. Quantum States as Generalized Valuations.} Int. J. Theor. Phys. 37, pp. 2669-2733. 1998.
\bibitem{buttop} J. Butterfield, J. Hamilton and C. Isham: \textit{A Topos Perspective
on the Kochen-Specker theorem: III. Von Neumann algebras as the Base
Category.} Int. J. Theor. Phys. 39, pp. 1413-1436. 2000.
\bibitem{cheexp} H. Chen et al.: \textit{Experimental Demonstration of Probabilistic Quantum Cloning} Phys. Rev. Lett. 106, 180404 (2011)
\bibitem{coqcon} T. Coquand and B. Spitters: \textit{Constructive Gelfand duality for C*-algebras.} Mathematical Proceedings of the Cambridge Philosophical Society, Volume 147, Issue 02, September 2009, pp 323-337.
\bibitem{coqint} T. Coquand and B. Spitters: \textit{Integrals and valuations.} Journal of Logic \& Analysis 1:3 (2009) 1-22.
\bibitem{diecom} D. Dieks: \textit{Communication by EPR devices.} Phys. Lett. A 92 (1982) 271.
\bibitem{doekoc} A. Doering: \textit{Kochen-Specker theorem for von Neumann algebras.} Int. J. Theor. Phys. 44, 139-160 (2005).
\bibitem{doephy} A. Doering: \textit{The Physical Interpretation of Daseinisation.} In: Deep Beauty, pp. 207-238. CUP, 2011.
\bibitem{doequa} A. Doering: \textit{Quantum States and Measures on the Spectral Presheaf.} In: Special Issue of Adv. Sci. Lett. on "Quantum Gravity, Cosmology and Black Holes", ed. M. Bojowald, 2008.
\bibitem{doeabe} A. Doering and J. Harding: \textit{Abelian subalgebras and the Jordan structure of a von
Neumann algebra.} Submitted to J. Functional Analysis, arXiv:1009.4945.
\bibitem{doetop1} A. Doering and C. Isham:\textit{ A Topos Foundation for Theories of
Physics: I. Formal Languages for Physics.} J. Math. Phys. 49, Issue 5,
053515. arXiv:quant-ph/0703060. 2008.
\bibitem{doetop2} A. Doering and C. Isham: \textit{A Topos Foundation for Theories of
Physics: II. Daseinisation and the Liberation of Quantum Theory.} J. Math.
Phys. 49, Issue 5, 053516. arXiv:quant-ph/0703062. 2008.
\bibitem{doetop3} A. Doering and C. Isham: \textit{A Topos Foundation for Theories
of Physics: III. Quantum Theory and the Representation of Physical
Quantities with Arrows $\delta(A)$.} J. Math. Phys. 49, Issue 5, 053517.
arXiv:quant-ph/0703064. 2008.
\bibitem{doewha} A. Doering and C. Isham: \textit{``What is a Thing?'': Topos Theory in the Foundations
of Physics.} In: New Structures for Physics, pp. 753-934. Springer, 2011.
\bibitem{duapro} M.L. Duan and G.C. Guo: \textit{Probabilistic Cloning and Identification of Linearly Independent Quantum States} Phys. Rev. Lett. 80, 4999-5002 (1998)
\bibitem{enghan} K. Engesser, D.M. Gabbay and D. Lehmann eds: \textit{Handbook of Quantum Logic and Quantum Structures.} Elsevier, 2007.
% \bibitem{glemea} A.M. Gleason: \textit{Measures on the closed subspaces of a Hilbert space.} J. Math. Mech. 6, 1957, 885-893.
\bibitem{hamqua} J. Hamhalter: \textit{Quantum Measure Theory.} Kluwer, 2003.
\bibitem{heucha} C. Heunen: \textit{Characterizations of categories of commutative C*-subalgebras.} 2011. arXiv:1106.5942v1 [math.OA]
\bibitem{hlsboh} C. Heunen, N.P. Landsman and B. Spitters: \textit{Bohrification
of Operator Algebras and Quantum Logic.} Synthese.
http://dx.doi.org/10.1007/s11229-011-9918-4. 2011.
\bibitem{hlstop} C. Heunen, N.P. Landsman and B. Spitters: \textit{A Topos for Algebraic
Quantum Theory.} Communications in Mathematical Physics 291.
63-110. 2009.
\bibitem{heugel} C. Heunen, N.P. Landsman, B. Spitters and S. Wolters:
\textit{The Gelfand spectrum of a noncommutative C*-algebra: a topos-theoretic
approach.} J. Aust. Math. Soc. 90 (2011), 39-52.
doi: 10.1017/S1446788711001157
\bibitem{johske2} P.T. Johnstone: \textit{Sketches of an Elephant: A Topos Theory Compendium. Volume 2.} Oxford University Press, 2002.
\bibitem{johsto} P.T. Johnstone: \textit{Stone Spaces.} Cambridge University Press, 1982.
\bibitem{kadfun0} R.V. Kadison, J.R. Ringrose: \textit{Fundamentals of the Theory of Operator Algebras. Volume I: Elementary theory.} Academic Press, 1983.
%\bibitem{kadfun} R.V. Kadison, J.R. Ringrose: \textit{Fundamentals of the Theory of Operator Algebras. Volume II: Advanced theory.} Academic Press, 1986.
%\bibitem{kocpro} S. Kochen and E.P. Specker: \textit{The problem of hidden variables in quantum mechanics.} J. Math. Mech. 17, 1967, 59-87.
% \bibitem{lanlec} N.P. Landsman: \textit{Lecture Notes on
% Operator Algebras.} On: http://www.math.ru.nl/~landsman/, Radboud University Nijmegen, 2011.
\bibitem{lurlec} J. Lurie: \textit{Lecture notes on Von Neumann algebras.} On: http://www.math.harvard.edu/~lurie/261y.html, Harvard, 2011.
\bibitem{macshe} S. Mac Lane and I. Moerdijk: \textit{Sheaves in Geometry and Logic.} Springer Verlag, New York, 1992.
\bibitem{niequa} M.A. Nielsen and I.L. Chuang: \textit{Quantum Computation and Quantum Information.} CUP, 2000.
\bibitem{nuiboh} J. Nuiten: \textit{Bohrification of Local Nets of Observables.} 2011. arXiv:1109.1397v1 [math-ph].
\bibitem{olssel} M.P. Olsen: \textit{The Selfadjoint Operators of a Von Neumann Algebra Form a Conditionally Complete Lattice.} Proceedings of the American Mathematical Society. Volume 28, Number 2, May 1971.
\bibitem{redqua} M. R\'edei: \textit{Quantum Logic in Algebraic Approach.} Kluwer, 1998.
\bibitem{sakcst} S. Sakai: \textit{C*-Algebras and W*-Algebras.} Springer, 1971.
\bibitem{strmat} F. Strocchi: \textit{Introduction to the Mathematical Structure of Quantum Mechanics (2nd Edition).} World Scientific Pub. Co. Inc., Singapore, 2008.
\bibitem{takthe} M. Takesaki. \textit{Theory of Operator Algebras, Volume I.} Springer, 2003.
\bibitem{wolcom} S.A.M. Wolters: \textit{A Comparison of Two Topos-Theoretic
Approaches to Quantum Theory.} Version 2, 2011. arXiv:1010.2031v2 [math-ph]
\end{thebibliography}
\end{document}